\newcommand{\problemtitle}[1]{\gdef\@problemtitle{#1}}% Store problem title
\newcommand{\probleminput}[1]{\gdef\@probleminput{#1}}% Store problem input
\newcommand{\problemquestion}[1]{\gdef\@problemquestion{#1}}% Store problem question
  \par\addvspace{.5\baselineskip}
  \par\addvspace{.5\baselineskip}
\theoremstyle{plain}
\newtheorem{theorem}{Theorem}[section]
\newtheorem{corollary}  [theorem]{Corollary}
\newtheorem{example}    [theorem]{Example}
\newtheorem{lemma}      [theorem]{Lemma}
\newtheorem{proposition}[theorem]{Proposition}
\theoremstyle{definition}
\newtheorem{claim}  {Claim}
\newtheorem*{claim*}{Claim}
\newcommand{\Z} {\mathbb{Z}}
\newcommand{\Define}            [1] {\textbf{#1}}
\newcommand{\DecisionProblem}   [1] {\textnormal{\textsc{#1}}}
\newcommand{\ComplexityClass}   [1] {\textnormal{\textbf{#1}}}
\renewcommand{\P}     {\ComplexityClass{P}}
\newcommand{\NP}    {\ComplexityClass{NP}}
\newcommand{\coNP}  {\ComplexityClass{coNP}}
\newcommand{\zero}      { \mathbf{0} }
\newcommand{\one}       { \mathbf{1} }
\newcommand{\meet}      { \wedge }
\newcommand{\join}      { \vee }
\newcommand{\bigmeet}   { \bigwedge }
\newcommand{\bigjoin}   { \bigvee }
\newcommand{\CharacteristicVector}   { \chi }
\newcommand{\Network}       { \mathsf{F} }
\newcommand{\Dominating}    { \mathsf{D} }
\newcommand{\Independent}   { \mathsf{I} }
\newcommand{\Kernel}        { \mathsf{K} }
\newcommand{\MIS}           { \mathsf{M} }
\newcommand{\DominatingSets}        { \mathrm{D} }
\newcommand{\IndependentSets}       { \mathrm{I} }
\newcommand{\Kernels}               { \mathrm{K} }
\newcommand{\MaximalIndependentSets}{ \mathrm{M} }
\newcommand{\Fix}{\mathrm{Fix}}
\newcommand{\Neighbourhood}     { N }
\newcommand{\InNeighbourhood}   { \Neighbourhood^{ \mathrm{in} } }
\newcommand{\OutNeighbourhood}  { \Neighbourhood^{ \mathrm{out} } }
\newcommand{\Twins}             [1] {\langle #1 \rangle}
\newcommand{\Benjamins}         {\mathrm{B}}
\newcommand{\Geodesic}      { \xmapsto{ \mathrm{geo} } }
\newcommand{\Variable}{\alpha}
\newcommand{\Variables}{A}
\newcommand{\Literal}{\beta}
\newcommand{\Literals}{B}
\newcommand{\Clause}{\gamma}
\newcommand{\Clauses}{\Gamma}
\newcommand{\OtherVariable}{\omega}
\newcommand{\OtherVariables}{\Omega}
\title{Generalising the maximum independent set algorithm via Boolean networks}
\author{Maximilien Gadouleau and David C. Kutner\footnote{Department of Computer Science, Durham University, Durham, UK. \url{ {m.r.gadouleau, david.c.kutner}@durham.ac.uk } } }
\begin{document}

\maketitle

\begin{abstract}
A simple greedy algorithm to find a maximal independent set (MIS) in a graph starts with the empty set and visits every vertex, adding it to the set if and only if none of its neighbours are already in the set. In this paper, we consider the generalisation of this so-called MIS algorithm by allowing it to start with any set of vertices and we prove the hardness of many decision problems related to this generalisation. Our results are based on two main strategies. Firstly, we view the MIS algorithm as a sequential update of a Boolean network, which we shall refer to as the MIS network, according to a permutation of the vertex set. The set of fixed points of the MIS network corresponds to the set of MIS of the graph. Our generalisation then consists in allowing to start from any configuration and to follow a sequential update given by a word of vertices. Secondly, we introduce the concept of a constituency of a graph, that is a set of vertices that is dominated by an independent set. Deciding whether a set of vertices is a constituency is \NP-complete; decision problems related to the MIS algorithm will be reduced from the \DecisionProblem{Constituency} problem or its variants. In this paper, we first consider universal configurations, i.e. those that can reach all maximal independent sets; deciding whether a configuration is universal is \coNP-complete. Second, we consider so-called fixing words, that allow to reach a MIS regardless of the starting configuration, and fixing permutations, which we call permises; deciding whether a permutation is fixing is \coNP-complete. Third, we consider permissible graphs, i.e. those graphs that have a permis. We construct large classes of permissible and non-permissible graphs, notably near-comparability graphs which may be of independent interest; deciding whether a graph is permissible is \coNP-hard. Finally, we generalise the MIS algorithm to digraphs. In this case, the algorithm uses the so-called kernel network, whose fixed points are the kernels of the digraph. Deciding whether the kernel network of a given digraph is fixable is \coNP-hard, even for digraphs that have a kernel. As an alternative, we introduce two further Boolean networks, namely the independent and the dominating networks, whose sets of fixed points contain all kernels. Unlike the kernel network, those networks are always fixable and their fixing word problem is in \P. 

% \keywords{Boolean networks \and Graph theory  \and Maximal independent set \and Kernel.}
\end{abstract}

\section{Introduction} \label{section:introduction}

% \subsection{The MIS algorithm} \label{subsection:intro_mis_algorithm}
\paragraph{The MIS algorithm} \label{paragraph:intro_mis_algorithm}

A simple greedy algorithm to find a maximal independent set (MIS) in a graph starts with the empty set and visits every vertex, adding it to the set if and only if none of its neighbours are already in the set. We shall refer to it as the MIS algorithm. Because the MIS algorithm always terminates in a maximal independent set, it has been the subject of a stream of work (see \cite{BFS12} and references therein). 

A core feature of the classical MIS algorithm is that the starting set of vertices is the empty set. However, the seminal observation of this paper is that this constraint can be lifted. Indeed, starting from any set of vertices and visiting each vertex once, removing a vertex if one of its neighbours already appears in the set, one always terminates at an independent set. Moreover, starting from any independent set and visiting each vertex once, one always terminates at a MIS. Thus, iterating over the vertex set twice is sufficient to obtain a MIS from any starting set of vertices.

As such, the scope of this paper is the generalisation of the MIS algorithm, where one can start with any (not necessarily independent) set of vertices, and one can visit vertices in any order with possible repetitions. Of course, some sequences of vertices will guarantee reaching a MIS from any starting configuration – we shall call those fixing words, while others will not, as in Example \ref{example:P3} below.

\begin{example} \label{example:P3}

Consider the path on three vertices: 
\begin{center}
    \begin{tikzpicture}[xscale=4]

\begin{scope}[yshift=0cm]    
    \node[draw, circle] (a) at (0,0) {$a$};
    \node[draw, circle] (b) at (1,0) {$b$};
    \node[draw, circle] (c) at (2,0) {$c$};

    \draw (a) -- (b) -- (c);
\end{scope}
\end{tikzpicture}
\end{center}

\noindent This graph has two MIS, namely $\{ a,c \}$ and $\{ b \}$. Starting at the empty set, the MIS algorithm terminates at $\{ b \}$ if the sequence begins with $b$ (i.e. $w = bac$ or $w = bca$) or at $\{a, c\}$ otherwise (i.e. $w \in \{abc, acb, cab, cba\}$). 

For this graph, $abc$ is not a fixing word: if one starts from the set $\{ b, c \}$, then one terminates at $\{ c \}$. However, $acb$ is a fixing word: if the starting set contains $b$, then one terminates at $\{ b \}$; otherwise one terminates at $\{ a, c \}$.

Finally, for this graph, the words $w=abcabc$ and $w=acbacb$ are fixing words, as are any words of the form $w=w^1w^2$, where $w^1$ and $w^2$ are permutations of the vertex set.
% \\
% ~\\
% \begin{tabular}{c|c}
%      $x$ & $\Kernel^{\textcolor{red}{abc}}( x )$ \\
%      \hline
%      000 & 101\\
%      001 & 101\\
%      010 & 010\\
%      \textbf{011} & \textbf{001}\\
%      100 & 101\\
%      101 & 101\\
%      110 & 010\\
%      \textbf{111} & \textbf{001}
% \end{tabular}
% \hfill
% \begin{tabular}{c|c}
%      $x$ & $\Kernel^{\textcolor{green}{acb}}( x )$ \\
%      \hline
%      000 & 101\\
%      001 & 101\\
%      010 & 010\\
%      011 & 010\\
%      100 & 101\\
%      101 & 101\\
%      110 & 010\\
%      111 & 010
% \end{tabular}

\end{example}

% \medskip 
% Why do we want to start from other configurations???
% DK: possible motivation in distributed system recovery wherein the system must maintain a MIS, but fails at some point - then it is desirable that recovery from any arbitrary failed state to a MIS requires as few rounds as possible (in the standard distributed paradigm) or as few individual updates from vertices as possible (in our case). @Max Note in the distributed setting communication rather than computation is at a premium - likely we can do better than this motivation. 

\paragraph{Contributions for graphs}

Below we give a summary of our contributions for the MIS algorithm on graphs.

When starting from the empty set, the MIS algorithm is able to reach any possible maximal independent set (if the algorithm goes through the MIS first, then it would terminate with that MIS). However, the empty set is not the only set with that property: the full set of vertices also allows that (this time, if the algorithm finishes with a MIS). In Theorem \ref{theorem:complexity_universal_configuration_K}, we prove that deciding whether a set of vertices can reach every MIS is \coNP-complete.

As we showed in Example \ref{example:P3}, though iterating over the whole set of vertices twice is always sufficient to reach a MIS, it is not always necessary. Consequently, we ask: what are the sequences of vertices which always reach a MIS, regardless of the starting set of vertices? In Theorem \ref{theorem:complexity_fixing_word_K}, we prove that deciding whether a sequence offers that guarantee is \coNP-complete.

Since the MIS algorithm visits each vertex exactly once, we also consider permutations of vertices that are guaranteed to reach a MIS; we call those permises. In Theorem \ref{theorem:complexity_permis}, we prove that deciding whether a permutation of vertices is a permis is \coNP-complete. A graph that admits a permis is called permissible. Not all graphs are permissible; the smallest non-permissible graph is the heptagon. We exhibit large classes of permissible and non-permissible graphs. In particular, we introduce near-comparability graphs and classify them in Theorem \ref{theorem:near-comparability}; they naturally generalise comparability graphs and can be recognised in polynomial time. We prove that near-comparability graphs are permissible in Proposition \ref{proposition:near-comparability_permissible}. In Theorem \ref{theorem:complexity_permissible}, we prove that deciding whether a graph is permissible is \coNP-hard. There is no obvious candidate for a no-certificate, so it may be that the problem is not actually in \coNP.

In some situations, one can skip some vertices and still guarantee a MIS is reached. For instance, in the complete graph, one can simply update all but one vertex and still reach a maximal independent set, from any starting configuration. We prove in Theorem \ref{theorem:complexity_fixing_set} that deciding whether a given set of vertices can be skipped is \coNP-complete. We also prove in Theorem \ref{theorem:complexity_non-trivial_fixing_set} that deciding whether \emph{any} vertices can be skipped is \coNP-complete. 

\paragraph{Boolean networks} \label{paragraph:intro_BN}
% \subsection{Boolean networks} \label{subsection:intro_BN}

Our main tool is Boolean networks. A configuration on a graph $G = (V,E)$ is $x \in \{0,1\}^V$, i.e. the assignment of a Boolean state to every vertex of the graph. A Boolean network is a mapping $\Network : \{ 0, 1 \}^V \to \{ 0, 1 \}^V$ that acts on the set of configurations. Boolean networks are used to model networks of interacting entities. As such, it is natural to consider a scenario wherein the different entities update their state at different times. This gives rise to the notion of sequential (or asynchronous) updates, by updating the state of one vertex at a time; a word $w$ then gives the order in which those vertices are updated (with repeats allowed in general). Since the original works by Kauffman \cite{Kau69} and Thomas \cite{Tho73}, asynchronous updates have been widely studied, both in terms of modelling purposes and of dynamical analysis (see \cite{Bor08, ARS23} and references therein). The problem of whether a Boolean network converges (sequentially) goes back to the seminal result by Robert on acyclic interaction graphs \cite{Rob80}; further results include  \cite{Gol85,GM12,NS17}.
% Further results in this area include Goles's theorem on threshold networks \cite{Gol85}, the work by Noual et al. in \cite{GM12,NS17}, and many others. 
Recently, \cite{AGRS20} introduced the concept of a fixing word: a word $w$ such that updating vertices according to $w$ will always lead to a fixed point, regardless of the initial configuration. Fixing words are a natural feature of Boolean networks, for two main reasons. Firstly, almost all networks with a fixed point, and hence a positive asymptotic proportion of all networks, have fixing words \cite{BGS93}. Secondly, large classes of Boolean networks, including monotone networks and networks with acyclic interaction graphs, have short fixing words (of length at most cubic in $|V|$) \cite{AGRS20,GR18}.

We refer to the Boolean network where the update function is the conjunction of all the negated variables in the neighbourhood of a vertex as the MIS network on the graph, i.e. $\MIS : \{ 0, 1 \}^V \to \{ 0, 1 \}^V$ with $\MIS( x )_v = \bigmeet_{u \sim v} \neg x_u$ for all $v \in V$. The MIS network was highlighted in \cite{RR13,ARS14}, where the fixed points of different conjunctive networks on (directed) graphs are studied. In particular, \cite{RR13} shows that the set of fixed points of the MIS network is the set of (configurations whose supports are) maximal independent sets of the graph. It is further shown in \cite{ARS14} that for square-free graphs, the MIS network is the conjunctive network that maximises the number of fixed points. 

The MIS algorithm can be interpreted in terms of Boolean networks as follows: starting with the all-zero configuration $x$, update one vertex $v$ at a time according to the update function $\MIS( x )_v = \bigmeet_{u \sim v} \neg x_u$. Once all vertices have been updated, we obtain the final configuration $y$ where the set of ones is a maximal independent set, regardless of the order in which the vertices have been updated. As such, fixing words of the MIS network correspond to sequences of vertices that guarantee that the MIS algorithm terminates for any starting set of vertices. The seminal observation of this paper is that for any permutation $w$, the word $ww$ is a short fixing word (of length $2|V|$).

\paragraph{Self-stabilization and distributed computing} \label{paragraph:distributed}
Some of our results may be applied in the context of distributed computing. Similar to Boolean networks, distributed algorithms produce an output through local, independent updates of nodes in a fixed topology.
Asynchronous models for distributed computing do not assume a bound on message delay \cite{AABHBB11}, making them less relevant here. Consequently, we focus on synchronous models, in which time is discrete and all nodes perform a SEND-RECEIVE-UPDATE loop synchronously at each time step. The algorithm executed at all nodes is identical, and the state of some node at time $t$ depends only on the state of (all nodes in) its inclusive neighbourhood at time $t-1$. Some key differences with the Boolean Network setting are worth emphasizing. For example, in standard models for synchronous distributed computation: nodes each have an (unbounded-size) internal state, and may solve arbitrarily hard problems during their UPDATE; messages sent may differ from the sender's state; nodes may choose not to SEND anything at all; and updates occur synchronously.

The problem of finding a MIS has been a focus of much study in this setting, including in the LOCAL \cite{Gha22}, CONGEST \cite{Gha19} and Beeping models \cite{AABCHK12,BBDK18}. LOCAL is characterized by its unrestricted message size, whereas CONGEST limits messages to $O(\log |V|)$ bits per outgoing edge. The Beeping model is a significant restriction, in which nodes can communicate only via beeps (which are indistinguishable) and silence \cite{CD19}. We refer the interested reader to \cite{CMRZ19} for a more complete treatment of this model's variants, which also includes a discussion of the distributed MIS problem in Sections 4.5 and 6.2.

An algorithm or procedure is said to be ``self-stabilizing'' if it is guaranteed to reach a legitimate state regardless of its initial state, and additionally will never reach an illegitimate state from a legitimate state \cite{Dij74, Sch93}. This notion has been integrated into the design of distributed algorithms \cite{LSW09} and is explicitly identified as a feature of the Beeping MIS algorithm given in \cite{AABCHK12}.

Our results do not directly apply to these models; in particular, we assume (and sometimes exploit) asynchronous and instantaneous updates. That is, each vertex's local update is immediately ``visible'' to all its neighbours. This differs from standard models of distributed computing, which generally incorporate some transmission delay (which is typically one unit in synchronous models, and controlled by an adversary in asynchronous models).

To emulate the MIS network $\MIS$ studied in the present work, it would then be sufficient for a distributed model to support: asynchronous updates (scheduled by an adversary or a helper) and instantaneous transmission.
We call the minimum length of time within which each node updates at least once a \emph{phase}. In the adversarial setting, our seminal observation translates to the fact that this protocol necessarily reaches a MIS within two phases from any starting configuration and self-stabilizes. In the helpful setting, a Permis is an update schedule which guarantees self-stabilization within a single phase. By Theorem \ref{theorem:complexity_fixing_word_K}, it is \coNP-complete to determine whether some update schedule satisfies this property; by Theorem \ref{theorem:complexity_permis}, the problem remains \coNP-complete even if the schedule is guaranteed to contain every node exactly once; and by Theorem \ref{theorem:complexity_permissible} it is \coNP-hard to determine whether any such update schedule exists at all for the given network. 
If the helpful scheduler is limited to some subset of nodes, Theorem \ref{theorem:complexity_fixing_set} means it is \coNP-hard to determine whether there is a self-stabilizing schedule which uses only that subset of nodes. Furthermore, Theorem \ref{theorem:complexity_non-trivial_fixing_set} entails that deciding whether there exists any such schedule using $n-1$ nodes (even allowing repetitions) is \coNP-hard.

% \subsection{Constituencies} \label{subsection:constituencies}
\paragraph{Constituencies} \label{paragraph:constituencies}

The main tool for the hardness results in this paper is that of a constituency. A constituency is a set of vertices of a graph that is dominated by an independent set, i.e. $S$ is a constituency if there exists an independent set $I$ such that $S \subseteq \Neighbourhood(I)$. We believe that the constituency problem is of independent interest for a couple of reasons. Firstly, this is a natural definition for a set of vertices, but to the best of the authors' knowledge, it has not been considered in the literature yet. Secondly, the \DecisionProblem{Constituency} problem asks whether a set $S$ is a constituency. Unlike problems like \DecisionProblem{Clique}, \DecisionProblem{Independent Set} or \DecisionProblem{Vertex Cover}, the \DecisionProblem{Constituency} problem does not rely on an integer parameter. Nonetheless, \DecisionProblem{Constituency} is \NP-complete, while the problem of deciding whether a set is a clique (or independent set, or vertex cover) is clearly in \P. As such, \DecisionProblem{Constituency} provides a natural intractable graph problem whose input does not include an integer. We heavily use \DecisionProblem{Constituency} and its variants in our hardness proofs, and we believe that this problem could be used more broadly for reductions in the wider graph theory community.

% \subsection{Extension to digraphs} \label{subsection:digraphs}
\paragraph{Extension to digraphs} \label{paragraph:digraphs}

In this paper, we further generalise the MIS algorithm by applying it to the digraph case. In this case, a vertex is added to the set if and only if none of its in-neighbours are already in the set. The expected outcome is a kernel, i.e. a dominating independent set (equivalent to a maximal independent set if the digraph is a graph). Unfortunately, not all digraphs have a kernel: odd directed cycles provide an intuitive example of kernel-less digraphs. In fact, deciding whether a digraph has a kernel is \NP-complete (see \cite{BG09a}, p. 119). 

Nonetheless, the algorithm for digraphs now corresponds to sequential updates of the kernel network, with $\Kernel( x )_v = \bigmeet_{u \to v} \neg x_u$. Again, the set of fixed points of the kernel network is the set of (configurations whose supports are) kernels \cite{RR13}. As a side note, the kernel network has been heavily used in logic and philosophy. Indeed, Yablo discovered the first non-self-referential paradox in \cite{Yab93}. The construction for this paradox implicitly applies the fact that the kernel network on a transitive tournament on $\mathbb{N}$ has no fixed point. The study of acyclic digraphs that admit a paradox is continued further in \cite{RRM13}, where the kernel network is referred to as an $\mathcal{F}$-system. 
% ah c'est beau la philosophie quand même

Some digraphs have a kernel and yet the corresponding kernel network does not have a fixing word. More strongly, in Theorem \ref{theorem:complexity_fixable} we show that deciding whether the kernel network has a fixing word is \coNP-hard, and we show in Theorem \ref{theorem:complexity_fixable_refined} that the result holds even for a restricted class of oriented digraphs that have a kernel.

We then consider two other Boolean networks, that are fixable for any digraph. Firstly, the independent network is given by $\Independent( x )_v = x_v \meet \bigmeet_{ u \to v } \neg x_u$; its set of fixed points consists of the independent sets of the digraph. We classify the fixing words of the independent network in Proposition \ref{proposition:fixing_words_I} and in Corollary \ref{corollary:complexity_fixing_word_I} we prove that deciding whether a word fixes the independent network is in \P. Secondly, the dominating network is given by $\Dominating( x )_v = x_v \join \bigmeet_{ u \to v } \neg x_u$; its set of fixed points consists of the dominating sets of the digraph. Similarly, we classify the fixing words of the dominating network in Proposition \ref{proposition:fixing_words_D} and in Corollary \ref{corollary:complexity_fixing_word_D} we prove that deciding whether a word fixes the dominating network is in \P.

% \subsection{Outline} \label{subsection:outline}
\paragraph{Outline} \label{paragraph:outline}

The rest of the paper is organised as follows. Some necessary background is given in Section \ref{section:preliminaries}. Constituencies and districts are introduced in Section \ref{section:constituencies}, where some decision problems based on those are proved to be \NP- or \coNP-complete. The configurations that allow to reach any possible maximal independent set are determined in Section \ref{section:reachability_K}. Fixing words, fixing sets, and permises for the MIS network are studied in Section \ref{section:fixing_words_K}. Classes of permissible and non-permissible graphs are given in Section \ref{section:permissible_non-permissible}. The extension to digraphs is carried out in Section \ref{section:digraphs}, where we first consider fixing words of the kernel network and then study the independent and dominating networks instead. Finally, some conclusions and possible avenues for future work are given in Section \ref{section:conclusion}.

\section{Preliminaries} \label{section:preliminaries}

\subsection{Graphs and digraphs} \label{subsection:background_graphs}

Most of our contributions (Sections \ref{section:constituencies} to \ref{section:permissible_non-permissible}) will focus on (undirected) graphs. However, when we extend our focus to directed graphs, we shall view graphs as natural special cases of digraphs. As such, we give the background on graphs and digraphs in its full generality, i.e. for digraphs first, and then we make some notes about the special case of graphs.

By digraph, we mean an irreflexive directed graph, i.e. $G = (V, E)$ where $E \subseteq V^2 \setminus \{ (v,v) : v \in V \}$. We use the notation $u \to v$ to mean that $(u,v) \in E$. We say an edge $(u,v) \in E$ is \Define{symmetric} if $(v,u)$ is also an edge, and \Define{oriented} otherwise. We will sometimes emphasize that $(u,v)$ is symmetric by writing it $uv$ instead. 
For a vertex $v$, the open in-neighbourhood, closed in-neighbourhood, open out-neighbourhood and closed out-neighbourhood of the vertex $v$ are respectively defined as
\begin{alignat*}{3}
    \InNeighbourhood(v)   &= \{ u \in V : u \to v \}, &\quad
    \InNeighbourhood[v]   &= \InNeighbourhood(v) \cup \{ v \}, \\
    \OutNeighbourhood(v)  &= \{ u \in V : v \to u \}, &\quad
    \OutNeighbourhood[v]  &= \OutNeighbourhood(v) \cup \{ v \}.
\end{alignat*}
All of those are generalised to sets of vertices, e.g. $\InNeighbourhood(S) = \bigcup_{s \in S} \InNeighbourhood(s)$. Clearly, all notations above should reflect the dependence on the digraph $G$, e.g. $\InNeighbourhood(v ; G)$; we shall omit that dependence on any notation when the digraph is clear from the context.

For a digraph $G=(V,E)$ and set of vertices $S\subseteq V$, we call the digraph $(S, \{(u,v): (u,v)\in E \land \{u,v\} \subseteq S\})$ the \Define{induced subgraph} on $S$, which we denote $G[S]$. We denote $G-S$ the digraph $G[V\setminus S]$. A \Define{path} is a sequence of edges $v_1 \to v_2 \to \dots \to v_k$ where all vertices are distinct; a \Define{cycle} in a digraph is a sequence of edges $v_1 \to v_2 \to \dots \to v_k \to v_1$ where  only the first and the last vertices are equal. A digraph is \Define{strong} if for all vertices $u$ and $v$, there is a path from $u$ to $v$. A \Define{strong component} of $G$ is a subset of vertices $S$ such that $G[S]$ is strong, but $G[T]$ is not strong for all $T \supsetneq S$. A digraph is \Define{acyclic} if it has no cycles. An acyclic digraph has a \Define{topological order}, whereby $u \to v$ only if $u \le v$. For instance, the digraph where each vertex is a strong component of $G$ and $C \to C'$ if and only if $u \to u'$ for some $u \in C$, $u' \in C'$ is acyclic. If $C \to C'$ in that digraph, we say that $C$ is a \Define{parent component} of $C$; a strong component without any parent is called an \Define{initial component}.

We say that two vertices $u$ and $v$ are \Define{closed twins} if $\InNeighbourhood[ u ] = \InNeighbourhood[ v ]$. Accordingly, we say that the vertex $m$ is a \Define{benjamin} of $G$ if there is no vertex $v$ with $\InNeighbourhood[ v ] \subset \InNeighbourhood[ m ]$. We denote the set of benjamins of $G$ by $\Benjamins(G)$ and the corresponding induced subgraph by $G_\Benjamins = G[ \Benjamins(G) ]$. We say that a set of vertices $S$ is \Define{tethered} if there is an  edge $st$ between any $s \in S$ and any $t \in T = \Neighbourhood(S) \setminus S$.

An \Define{out-tree} is a digraph where all edges are oriented, one vertex has no in-neighbours (the so-called root), and all other vertices have exactly one in-neighbour each. A \Define{spanning out-forest} of a digraph $G$ rooted at a set $S \subseteq V$ is a collection of out-trees, each rooted at a different vertex of $S$, such that each out-tree is a subgraph of $G$ and each vertex of $G$ appears in exactly one out-tree. We shall use the following simple fact about spanning out-forests.

\begin{lemma} \label{lemma:out-rooted_forest}
If $G$ is strong, then for any nonempty $S \subseteq V$, $G$ has a spanning out-forest rooted at $S$.
\end{lemma}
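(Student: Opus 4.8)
The plan is to build the forest greedily, adding one vertex at a time while maintaining the invariant that the partially constructed subgraph is a spanning out-forest, rooted at $S$, of the set of vertices it covers. The engine of the argument is the standard out-cut property of strong digraphs: for every $W$ with $\emptyset \neq W \subsetneq V$ there is an edge $u \to v$ with $u \in W$ and $v \notin W$. To establish this, fix any $u \in W$ and any $v \in V \setminus W$; since $G$ is strong there is a path from $u$ to $v$, and the first edge along this path that leaves $W$ supplies the required $u \to v$. This is the single place where strong connectivity is used.

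First I would initialise the forest $F$ as the collection of $|S|$ trivial out-trees, one consisting of each individual vertex $s \in S$, and set $W = S$ to be the set of currently covered vertices. The invariant I would maintain is that $F$ is a collection of vertex-disjoint out-trees, one rooted at each vertex of $S$, whose union of vertex sets is exactly $W$, and in which every vertex of $W \setminus S$ has exactly one in-neighbour inside $F$. While $W \neq V$, I would apply the out-cut property to obtain an edge $u \to v$ of $G$ with $u \in W$ and $v \in V \setminus W$, then attach $v$ to $F$ as a child of $u$ in whichever out-tree contains $u$. Because $v$ was previously uncovered, it acquires exactly one in-neighbour in $F$, namely $u$, so each affected component remains a genuine out-tree (all edges oriented away from its root, every non-root vertex of in-degree one within the forest) and the invariant is preserved.

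Each iteration increases $|W|$ by exactly one, so after $|V| - |S|$ steps we reach $W = V$, and the resulting $F$ is by construction a spanning out-forest of $G$ rooted at $S$. The argument is essentially routine; the only steps requiring genuine care are the verification of the out-cut property (the sole point at which the strong hypothesis enters) and checking that attaching $v$ as a child of $u$ keeps each component a valid out-tree rather than, say, creating a second in-neighbour or a cycle. Both follow directly from $v$ being uncovered at the moment it is added, so I do not anticipate any substantial obstacle beyond bookkeeping of the invariant.
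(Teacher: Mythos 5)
Your proof is correct, but it takes a genuinely different route from the paper's. The paper builds the forest globally: it assigns to every vertex $u$ its nearest root $s_{n(u)}$ in $S$ (breaking distance ties by smallest index), observes that every vertex on a shortest path from $s_{n(u)}$ to $u$ has the same nearest root, partitions $V$ into the cells $T_l = \{ u : n(u) = l \}$, and takes a shortest-path out-tree rooted at $s_l$ inside each cell; the union of these trees is the desired forest. You instead grow the forest incrementally, attaching one uncovered vertex at a time via the out-cut property of strong digraphs. Your route is more elementary: it needs no distance function and no tie-breaking, and both termination and the out-tree invariant are immediate because each attached vertex is new, so it cannot acquire a second in-neighbour or close a cycle. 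The paper's construction yields extra structure for free -- each tree is a shortest-path tree, so every vertex is joined to its assigned root by a geodesic -- though the paper never exploits this. It is also worth noting that both arguments use strongness only to guarantee that some edge leaves every proper subset of $V$ containing $S$, i.e.\ that every vertex is reachable from $S$; your proof makes this especially transparent, since the covered set $W$ always contains $S$, so both proofs in fact establish the lemma under that weaker hypothesis.
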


\begin{proof}
Let $S = \{s_1, \dots, s_k\}$. For any $u \in V$, let $s_{n(u)}$ denote the nearest vertex in $S$ from $u$ with the smallest index. More formally, let $d(a,b)$ denote the length of a shortest path from $a$ to $b$, then we define $s_{n(u)}$ such that $d( s_{n(u)}, u ) < d( s_l, u )$ for all $1 \le l \le n(u)$ and $d( s_{n(u)}, u ) \le d(s_m, u)$ for all $n(u) \le m \le k$.
% \[
%     d( s_{n(u)}, u ) < d( s_l, u ) \forall 1 \le l < n(u), d( s_{n(u)}, u ) \le d(s_m, u) \forall n(u) \le m \le k. 
% \]
For all $1 \le l \le k$, let $T_l = \{ u \in V : n(u) = l \}$. It is clear that if $v$ is on a shortest path from $s_{n(u)}$ to $u$, then $n(v) = n(u)$ (in particular, $n(s_l) = l$). Therefore, each $T_l$ has a spanning out-forest rooted at $s_l$. The union of all the $T_l$ trees forms the desired spanning out-forest rooted at $S$.
\end{proof}

A digraph is \Define{undirected} if all its edges are symmetric; which we shall simply call a \Define{graph}. We then denote $\Neighbourhood(v) = \InNeighbourhood(v) = \OutNeighbourhood(v)$, which we call the \Define{neighbourhood} of $v$. A strong graph is called \Define{connected}, and the (initial) strong components of a graph are called its \Define{connected components}. In a graph, if $u \to v$, then $v \to u$, which we shall denote by $u \sim v$. Lemma \ref{lemma:out-rooted_forest} applied to graphs is given below -- a \Define{spanning forest} of a graph $G$ rooted at a set $S \subseteq V$ is a collection of trees, each rooted at a different vertex of $S$, such that each tree is a subgraph of $G$ and each vertex of $G$ appears in exactly one tree.

\begin{corollary} \label{corollary:rooted_forest}
If $G$ is connected, then for any nonempty $S \subseteq V$, $G$ has a spanning forest rooted at $S$.
\end{corollary}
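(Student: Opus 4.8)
The plan is to derive the corollary directly from Lemma \ref{lemma:out-rooted_forest} by regarding the connected graph $G$ as a strong digraph, exploiting that graphs were set up in the preliminaries precisely as the symmetric special case of digraphs. First I would recall the two relevant identifications: a graph is a digraph all of whose edges are symmetric, and a graph is connected exactly when it is strong as a digraph (``A strong graph is called connected''). Hence a connected graph $G$ meets the hypothesis of Lemma \ref{lemma:out-rooted_forest}, which therefore supplies a spanning out-forest rooted at any nonempty $S \subseteq V$.

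Next I would set up the translation between the digraph notion of a spanning out-forest and the graph notion of a spanning forest. Starting from the out-forest furnished by the lemma, I would forget the orientation on each of its edges. Every edge used lies in $E$ and is symmetric in $G$, so each resulting undirected edge genuinely belongs to $G$; an out-tree, once its orientations are dropped, becomes a connected acyclic subgraph, i.e.\ a tree, whose root is inherited from the root of the out-tree. Since the out-forest is spanning, its out-trees are rooted at distinct vertices of $S$, and each vertex of $G$ lies in exactly one out-tree, the collection of trees obtained this way is exactly a spanning forest of $G$ rooted at $S$.

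Because this is nothing more than specialising the digraph statement to the symmetric case, I do not anticipate any genuine obstacle. The single point deserving a line of care is verifying that forgetting orientations carries out-trees to trees while preserving both the rooting at $S$ and the partition of $V$ into the vertex sets of the individual trees; all of this is immediate from the definitions, so the argument reduces to invoking Lemma \ref{lemma:out-rooted_forest} and recording this correspondence.
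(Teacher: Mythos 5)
Your proposal is correct and matches the paper's intended argument exactly: the paper offers no separate proof, presenting the corollary as the immediate specialisation of Lemma \ref{lemma:out-rooted_forest} to graphs viewed as symmetric digraphs, which is precisely what you do. Your extra line verifying that forgetting orientations turns out-trees into rooted trees while preserving the rooting at $S$ and the partition of $V$ is a reasonable bit of added care, but nothing beyond the paper's implicit reasoning.
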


\subsection{Boolean networks} \label{subsection:background_BN}

A \Define{configuration} on a digraph $G = (V,E)$ is $x \in \{0,1\}^V = (x_v : v \in V)$, where $x_v \in \{0,1\}$ is the state of the vertex $v$ for all $v$. We denote $\one( x ) = \{ v \in V: x_v = 1 \}$ and $\zero( x ) = \{ v \in V: x_v = 0 \}$. Conversely, for any set of vertices $S \subseteq V$, the \Define{characteristic vector} of $S$ is the configuration $x = \CharacteristicVector(S)$ such that $\one( x ) = S$. For any set of vertices $S \subseteq V$, we denote $x_S = (x_v: v \in S)$. We denote the all-zero (all-one, respectively) configuration by $0$ (by $1$, respectively), regardless of its length. 
% For two configurations $x$ and $y$, we write $x \le y$ if $x_v \le y_v$ for all $v$.

We consider the following kinds of sets of vertices of, and accordingly configurations on, a digraph $G$:
\begin{enumerate}
    \item \label{item:independent_set}
    An \Define{independent set} $I$ is a set such that $(i, j) \notin E$ for all $i,j \in I$. (In other words, $\OutNeighbourhood(I) \cap I = \emptyset$.) Every digraph $G$ has an independent set, namely the empty set $\emptyset$. The collection of characteristic vectors of independent sets of $G$ is denoted by $\IndependentSets(G)$. 

    \item \label{item:dominating_set}
    A \Define{dominating set} $D$ is a set such that for every vertex $v \in V$, either $v \in D$ or there exists $u \in D$ such that $(u,v) \in E$. (In other words, $\OutNeighbourhood(D) \cup D = V$.) Every digraph $G$ has an dominating set, namely $V$. The collection of characteristic vectors of dominating sets of $G$ is denoted by $\DominatingSets(G)$.

    \item \label{item:kernel}
    A \Define{kernel} $K$ is a dominating independent set. (In other words, $\OutNeighbourhood(K) \uplus K = V$.) Not all digraphs have a kernel, for instance the directed cycle $\vec{C}_n$ (with vertex set $\Z_n$ and edges $(v, v+1)$ for all $v \in \Z_n$) does not have a kernel whenever $n \ge 3$ is odd. The collection of characteristic vectors of kernels of $G$ is denoted by $\Kernels(G)$. 

    \item \label{item:maximal_independent_set}
    If $G$ is a graph, then a kernel is a \Define{maximal independent set} of $G$, i.e. an independent set $I$ such that there is no independent set $J \supset I$. Every graph has a maximal independent set. In order to highlight this special case of particular importance to this paper, the collection of characteristic vectors of maximal independent sets of $G$ is denoted by $\MaximalIndependentSets(G)$. 
\end{enumerate}

Let $w = w_1 \dots w_l \in V^*$ be a sequence of vertices, or briefly a \Define{word}. For any $a,b \in \{1, \dots, l \}$, we denote $w_{a:b} = w_a \dots w_b$ if $a \le b$ and $w_{a:b}$ is the empty sequence if $a > b$. We also denote by $[w] = \{ u \in V : \exists j \ w_j = u \}$ the set of vertices that $w$ visits. For any $S \subseteq V$, the subword of $w$ induced by $S$, denoted by $w[S]$, is obtained by deleting all the entries in $w$ that do not belong to $S$; alternatively, it is the longest subword of $w$ such that $[ w[S] ] \subseteq S$. A \Define{permutation} of $V$ is a word $w = w_{1 : n}$ such that $[w] = V$ and $w_a \ne w_b$ for all $a \ne b$.

A \Define{Boolean network} is a mapping $\Network : \{0,1\}^V \to \{0,1\}^V$. For any Boolean network $\Network$ and any $v \in V$, the update of the state of vertex $v$ is represented by the network $\Network^v: \{0,1\}^V \to \{0,1\}^V$ where $\Network^v(x)_v = \Network( x )_v$ and $\Network^v( x )_u = x_u$ for all other vertices $u$. We extend this notation to words as follows: if $w = w_1 \dots w_l$ then
\[
	\Network^w = \Network^{ w_l } \circ \dots \circ \Network^{ w_2 } \circ \Network^{ w_1 }.
\]
Unless otherwise specified, we let $x$ be the initial configuration, $w = w_1 \dots w_l$ be a word, $y = \Network^w( x )$ be the final configuration, and for all $0 \le a \le l$, $y^a = \Network^{ w_{1:a} }( x )$ be an intermediate configuration, so that $x = y^0$ and $y = y^l$. 

If there is a word $w$ such that $y = \Network^w( x )$, we say that $y$ is \Define{reachable} from $x$, and we write $x \mapsto_\Network y$. For any two configurations $x$ and $y$, we denote $\Delta( x, y ) = \{ v \in V : x_v \ne y_v \}$. An \Define{$\Network$-geodesic} from $x$ to $y$ is a word $w$ such that $y = \Network^w( x )$, $[w] = \Delta( 
x, y )$ and $w_a \ne w_b$ for all $a \ne b$, i.e. $w$ visits every vertex $v$ where $x$ and $y$ differ exactly once, and does not visit any other vertex. If there exists a geodesic from $x$ to $y$, we denote it by $x \Geodesic_\Network y$.

The set of \Define{fixed points} of $\Network$ is $\Fix( \Network ) = \{ x \in \{0,1\}^V : \Network( x ) = x \}$. It is clear that $x \in \Fix( \Network )$ if and only if $\Network^w( x ) = x$ for any word $w$, i.e. a ``parallel'' fixed point is also a ``sequential'' fixed point. The word $w$ is a \Define{fixing word} for $\Network$ \cite{AGRS20} (and we say that $w$ \Define{fixes} $\Network$) if for all $x$, $\Network^w( x ) \in \Fix( \Network )$ (see \cite{AGRS20} for some examples of fixing words). A Boolean network is \Define{fixable} if it has a fixing word.

\section{Constituencies and districts} \label{section:constituencies}

In this section, we introduce two kinds of sets of vertices, namely constituencies and districts, and we determine the complexity of some decision problems related to them. Even though both concepts will be useful to the sequel of this paper (an intuition behind the role of constituencies is given in the introduction of Section \ref{section:fixing_words_K}), we believe that the concept of constituency in particular is a natural property and is interesting to the wider graph theory community.

\subsection{Constituencies} \label{subsection:constituencies}

Let $G= (V,E)$ be a graph. A subset $S$ of $V$ is a \Define{constituency} of $G$ if there exists an independent set $I$ such that $S \subseteq \Neighbourhood(I)$ (note that this requires that $S \cap I = \emptyset$). The following are equivalent for a set of vertices $S \subseteq V$ (the proof is easy and hence omitted):
\begin{enumerate}
    \item 
    $S$ is a constituency of $G$, i.e. there exists an independent set $I$ of $G$ such that $S \subseteq \Neighbourhood(I)$;

    \item 
    $V \setminus S$ contains a maximal independent set of $G$;

    \item 
    there exists a maximal independent set $M$ of $G$ such that $M \cap S = \emptyset$;
\end{enumerate}

A \Define{non-constituency} is a set of vertices that is not a constituency. The \DecisionProblem{Constituency} (\DecisionProblem{Non-Constituency}, respectively) problem asks, given a graph $G$ and set $S$, whether $S$ a constituency (a non-constituency, respectively) of $G$.

\begin{decisionproblem}
  \problemtitle{Constituency}
  \probleminput{A graph $G = (V,E)$ and a set of vertices $S \subseteq V$.}
  \problemquestion{Is $S$ a constituency of $G$?}
\end{decisionproblem}

\begin{decisionproblem}
  \problemtitle{Non-Constituency}
  \probleminput{A graph $G = (V,E)$ and a set of vertices $S \subseteq V$.}
  \problemquestion{Is $S$ a non-constituency of $G$?}
\end{decisionproblem}

\begin{theorem} \label{theorem:complexity_constituency}
    \DecisionProblem{Constituency} is \NP-complete.
\end{theorem}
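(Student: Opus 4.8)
The plan is to prove membership in \NP{} and \NP-hardness separately, using the equivalent characterisations stated just above the theorem.

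For membership, I would use the equivalence ``$S$ is a constituency if and only if there is a maximal independent set $M$ of $G$ with $M \cap S = \emptyset$''. Such an $M$ is a polynomial-size certificate: one verifies in polynomial time that $M$ is independent, that every vertex outside $M$ has a neighbour in $M$ (so $M$ is maximal), and that $M \cap S = \emptyset$. Hence \DecisionProblem{Constituency} $\in \NP$. (Equivalently, one could certify with an independent set $I$ and check $S \subseteq \Neighbourhood(I)$ directly.)

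For hardness I would reduce from \DecisionProblem{SAT}. Given a CNF formula $\phi$ with variables $x_1, \dots, x_n$ and clauses $C_1, \dots, C_m$, I construct a graph $G$ as follows. For each variable $x_i$ I add two vertices $p_i$ and $n_i$ (representing the literals $x_i$ and $\lnot x_i$) joined by an edge; for each clause $C_j$ I add a vertex $c_j$, joined to $p_i$ whenever $x_i \in C_j$ and to $n_i$ whenever $\lnot x_i \in C_j$. There are no other edges, and I set $S = \{ c_1, \dots, c_m \}$. The claim is that $S$ is a constituency of $G$ if and only if $\phi$ is satisfiable.

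The heart of the argument, and the step I expect to require the most care, is verifying this equivalence through the maximality condition. In the forward direction, suppose $M$ is a maximal independent set with $M \cap S = \emptyset$. For each variable $i$, $M$ cannot contain both $p_i$ and $n_i$ (independence), and it cannot contain neither, since then $p_i$ — whose only neighbours are $n_i$ and clause vertices, all outside $M$ — would violate maximality; hence $M$ selects exactly one literal vertex per variable, encoding a truth assignment. Maximality also forces each $c_j \notin M$ to have a neighbour in $M$, i.e. each clause to contain a literal made true, so the assignment satisfies $\phi$. Conversely, from a satisfying assignment I take $M$ to be the set of chosen literal vertices: it is independent, disjoint from $S$, and maximal (each unchosen literal vertex $n_i$ is dominated by $p_i \in M$, and each $c_j$ is dominated by a true literal of $C_j$). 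Since the construction is clearly polynomial, this establishes \NP-hardness and completes the proof.
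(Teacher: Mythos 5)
Your proof is correct, but your hardness reduction takes a genuinely different route from the paper's. The paper reduces from \DecisionProblem{Set Cover}: for each set $C_j$ it creates $k$ copies $q_j^1,\dots,q_j^k$, joins the $l$-th copies of all sets into a clique (so that independence caps the number of usable set-vertices at $k$, encoding the cardinality bound), attaches an element vertex $v_i$ to every copy of every set containing $x_i$, and takes $S$ to be the element vertices. You instead reduce from \DecisionProblem{SAT}, with one edge $p_i n_i$ per variable and one vertex $c_j$ per clause, $S$ being the clause vertices; the crux — which you identify and verify correctly — is that maximality of $M$ together with $M \cap S = \emptyset$ forces $M$ to contain exactly one endpoint of each variable edge, so $M$ encodes a truth assignment, and domination of each $c_j$ is precisely satisfaction of the clause $C_j$. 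Both arguments are sound and of comparable length. Yours is arguably more self-contained: it dispenses with the integer parameter $k$ and the clique gadget needed to enforce it, which is fitting given that \DecisionProblem{Constituency} itself carries no numeric parameter (a point the paper emphasises in its introduction). The paper's choice, on the other hand, makes the covering/domination analogy behind constituencies explicit. Your \NP{} membership certificate (a maximal independent set disjoint from $S$) is a trivially equivalent alternative to the paper's certificate (an independent set $I$ with $S \subseteq \Neighbourhood(I)$), justified by the equivalences stated just before the theorem.
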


\begin{proof}
    Membership of \NP{} is known: the yes-certificate is an independent set $I$ such that $S \subseteq \Neighbourhood( I )$.

    The hardness proof is by reduction from \DecisionProblem{Set Cover}, which is \NP-complete \cite{Kar72}.
    In \DecisionProblem{Set Cover}, the input is a finite set of elements $X=\{x_1,\ldots,x_n\}$, a collection  $C=\{C_1,C_2,\ldots,C_m\}$ of subsets of $X$, and an integer $k$. The question is whether there exists a subset $Y \subseteq C$ of cardinality at most $k$ such that $\cup_{C_i\in Y}C_i = X$. 
    
    We first construct the graph $G$ on $n+mk$ vertices. $G$ consists of: vertices $Q_j=\{q_j^1,\ldots, q_j^k\}$, for each $j\in [m]$; vertices $v_i$ for each $i\in [n]$; edges from each vertex in $Q_j$ to $v_i$, whenever $x_i\in C_j$; edges connecting $\{q_1^l,q_2^l,\ldots, q_m^l\}$ in a clique, for each $l\in [k]$. Let the target set $S = \{ v_1,\ldots,v_n \}$. This concludes our construction; an illustrative example is shown in Figure \ref{fig:setcover}.

    We now show that if $(X,C,k)$ is a yes-instance of \DecisionProblem{Set Cover}, then $(G,S)$ is a yes-instance of \DecisionProblem{Constituency}.
    Let $Y \subseteq C$  be a set cover of $X$ of cardinality at most $k$. 
    We obtain the set $I$ as follows: 
    $$I=\{q_j^a:C_j \text{ is the $a$th element of $Y$}\}.$$
    Note that every vertex in $I$ exists in $G$ since $Y$ has cardinality at most $k$ (if $|Y|=k$ then the last subset to appear in $Y$ is its $k$th element exactly). Further, $I$ is an independent set, since by construction every vertex $q_j^a$ is adjacent to some other vertex $q_l^b$ if and only if $a=b$. 
    Lastly, every vertex $v_i\in Y$ is incident to some vertex in $I$; for any $i$, $\exists j: v_i\in C_j$. Then necessarily $\exists a: q_j^a\in I$, and by construction $(v_i, q_j^a)$ is an edge in $G$. 

    Conversely, if $(G,S)$ is a yes-instance of \DecisionProblem{Constituency} then $(X,C,k)$ is a yes-instance of \DecisionProblem{Set Cover}.
    Let $I$ be an independent set in $G$ which dominates $S$. By construction of $G$, $I$ has cardinality at most $k$. Suppose otherwise, for contradiction - then by the pigeon-hole principle there is some clique $C_j$ such that $|C_j\cap I|\geq 2$, contradicting that $I$ is an independent set.
    We obtain the set $Y$ of cardinality $|I|$ as follows:
    $$Y=\{C_j:\exists a \text{ such that }q_j^a\in I\}.$$
    We now show $Y$ is a set cover of $X$. For each $i\in[n]$, $v_i$ must be adjacent to some vertex in $I$; denote this vertex $q_j^a$ - now by construction $x_i$ is in the set $C_j$, and $C_j \in Y$. 

\end{proof}

\begin{figure}
    \centering
    \includegraphics[page=1,width=\textwidth]{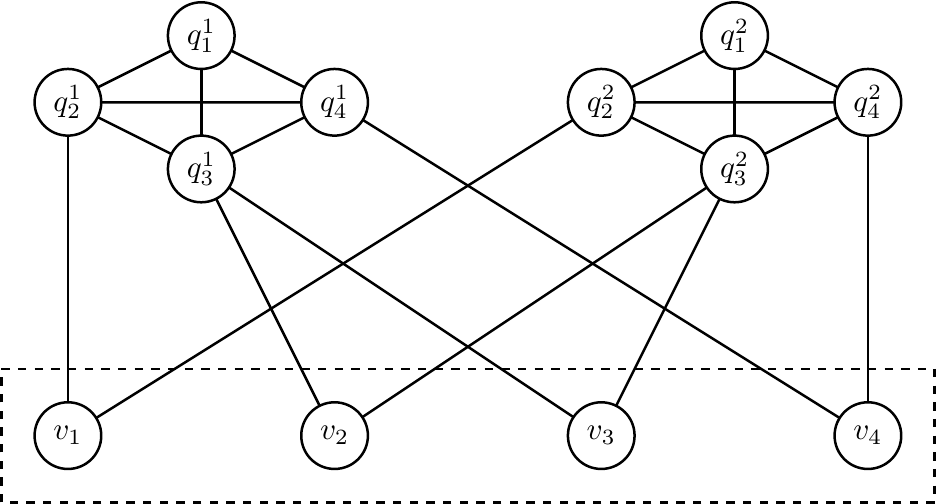}
    \caption{Illustration of the reduction from \DecisionProblem{Set Cover} to \DecisionProblem{Constituency} (the set $S$ is the vertices in the dashed box). Here the \DecisionProblem{Set Cover} instance has $C_1=\emptyset,C_2=\{x_1\},C_3=\{x_2,x_3\},C_4=\{x_4\}$, with $k=2$. Observe that both the \DecisionProblem{Set Cover} instance and the \DecisionProblem{Constituency} instance are no-instances.}
    \label{fig:setcover}
\end{figure}

\begin{corollary} \label{corollary:complexity_non-constituency}
    \DecisionProblem{Non-Constituency} is \coNP-complete.
\end{corollary}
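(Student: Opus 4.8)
The plan is to observe that \DecisionProblem{Non-Constituency} is exactly the complement of \DecisionProblem{Constituency}: both problems take the same input, a graph $G = (V,E)$ together with a set $S \subseteq V$, and the yes-instances of one are precisely the no-instances of the other. Given this, I would invoke the standard fact that a decision problem is \coNP-complete if and only if its complement is \NP-complete. Since Theorem \ref{theorem:complexity_constituency} establishes that \DecisionProblem{Constituency} is \NP-complete, the corollary follows immediately.

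For completeness I would spell out the two halves. Membership in \coNP amounts to exhibiting a polynomial-time verifiable certificate for the no-instances of \DecisionProblem{Non-Constituency}. A no-instance is an instance in which $S$ is in fact a constituency, so the certificate is an independent set $I$ with $S \subseteq \Neighbourhood(I)$ --- precisely the yes-certificate used for \DecisionProblem{Constituency} in Theorem \ref{theorem:complexity_constituency} --- which is checkable in polynomial time. For hardness, the \NP-hardness of \DecisionProblem{Constituency} gives, for any language $L \in \coNP$, a polynomial-time many-one reduction from $\overline{L} \in \NP$ to \DecisionProblem{Constituency}; since complementing both the source and target of such a reduction preserves it, the same reduction witnesses that $L$ reduces to \DecisionProblem{Non-Constituency}, so \DecisionProblem{Non-Constituency} is \coNP-hard.

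There is no genuine obstacle here: the entire content is the routine duality between \NP-completeness and \coNP-completeness under complementation. The only point that warrants a moment's care is confirming that the two problems share an identical input domain, so that ``complement'' is literally the set-complement of the yes-instances rather than something subtler; this is immediate from the fact that both are stated over arbitrary pairs $(G, S)$ with $S \subseteq V$.
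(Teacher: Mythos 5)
Your proposal is correct and matches the paper's (implicit) reasoning exactly: the paper states this corollary without proof precisely because \DecisionProblem{Non-Constituency} is the literal complement of \DecisionProblem{Constituency}, so \coNP-completeness follows immediately from Theorem \ref{theorem:complexity_constituency} by the standard duality you invoke. Your spelled-out certificate and reduction-complementation details are fine and add nothing controversial.
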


We make four remarks about constituencies. Let $G$ be a graph, $S$ be a subset of its vertices, and $T = \Neighbourhood(S) \setminus S$. First, if $G - S$ has an isolated vertex $t$, then $S$ is a constituency of $G$ if and only if $S \setminus \Neighbourhood(t)$ is a constituency of $G - t$. Second, whether $S$ is a constituency of $G$ is independent of the edges in $G[S]$. As such, we can (and shall)  reduce ourselves to either of two canonical types of instances $(G,S)$ of \DecisionProblem{Constituency} (and of course, of \DecisionProblem{Non-Constituency} as well):
% \begin{description}
%     \item[Complete type:] 
%     $V = S \cup T$ where $T = \Neighbourhood(S) \setminus S$, $G[T]$ has no isolated vertices, and $G[S]$ is complete. 

%     \item[Empty type:] 
%     $V = S \cup T$ where $T = \Neighbourhood(S) \setminus S$, $G[T]$ has no isolated vertices, and $G[S]$ is empty. 
% \end{description}
\begin{description}
    \item[Complete type:] 
    $G[S]$ is complete and $G - S$ has no isolated vertices. 

    \item[Empty type:] 
    $G[S]$ is empty and $G - S$ has no isolated vertices. 
\end{description}
Third, $S$ is a constituency of $G$ if and only if $S$ is a constituency of $G[ S \cup T]$. Therefore, we could reduce ourselves to the case where $V = S \cup \Neighbourhood(S)$; however, this assumption shall be unnecessary in our subsequent proofs and as such we shall not use it. Fourth, if $S$ is a constituency of $G$ then every subset of $S$ is also a constituency of $G$. 
% Consider a set $S \ni v$ and its proper subset $S'=S\setminus\{v\}$ for arbitrary $v\in V$.
% Consider the independent set $I$ such that $\Neighbourhood(I) \supset S$.
% Now note that $I':=I \setminus (\Neighbourhood(v) \setminus \Neighbourhood(S'))$ is an independent set satisfying $\Neighbourhood(I') \supset S'$, and hence $S'$ is a constituency of $G$.

\subsection{Districts} \label{subsection:districts}

A subset $T$ of vertices of a graph $G$ is a \Define{district} of $G$ if there exists $v \in V \setminus T$ such that $T \cap \Neighbourhood(v)$ is a constituency of $G - v$. A \Define{non-district} is a set of vertices that is not a district. The \DecisionProblem{District} (\DecisionProblem{Non-District}, respectively) decision problem asks, given a graph $G$ and a set $T$, whether $T$ is a district (a non-district, respectively) of $G$.

\begin{decisionproblem}
  \problemtitle{District}
  \probleminput{A graph $G = (V,E)$ and a set of vertices $T \subseteq V$.}
  \problemquestion{Is $T$ a district of $G$?}
\end{decisionproblem}

\begin{decisionproblem}
  \problemtitle{Non-District}
  \probleminput{A graph $G = (V,E)$ and a set of vertices $T \subseteq V$.}
  \problemquestion{Is $T$ a non-district of $G$?}
\end{decisionproblem}

\begin{theorem} \label{theorem:complexity_district}
\DecisionProblem{District} is \NP-complete.
\end{theorem}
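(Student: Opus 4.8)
The plan is to show membership in \NP{} directly and then to prove hardness by a reduction from \DecisionProblem{Constituency}, which we may invoke since it was shown to be \NP-complete in Theorem \ref{theorem:complexity_constituency}. For membership, note that a yes-certificate for $(G,T)$ being a district consists of a witness vertex $v \in V \setminus T$ together with a constituency certificate for $T \cap \Neighbourhood(v)$ in $G - v$, i.e.\ an independent set $I$ of $G - v$ with $T \cap \Neighbourhood(v) \subseteq \Neighbourhood(I)$ (the neighbourhood being taken in $G-v$). Checking that $v \notin T$, that $I$ is independent in $G-v$, and that $I$ dominates $T \cap \Neighbourhood(v)$ in $G-v$ can all be done in polynomial time, so \DecisionProblem{District} is in \NP.

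For the hardness direction, given an instance $(G,S)$ of \DecisionProblem{Constituency} with $G = (V,E)$, I would construct the graph $G' = (V \cup \{v\},\, E \cup \{ vs : s \in S \})$ by adjoining a single fresh vertex $v$ adjacent to exactly the vertices of $S$, and take the target set to be $T = V$, the entire original vertex set. This construction is plainly polynomial. The crux is that, because $T$ already contains every original vertex, the only vertex of $G'$ lying outside $T$ is the new vertex $v$; hence $v$ is the unique candidate witness in the definition of a district. For this candidate we have $\Neighbourhood(v) = S$ in $G'$, so $T \cap \Neighbourhood(v) = V \cap S = S$, and deleting $v$ recovers the original graph, $G' - v = G$. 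Consequently $T$ is a district of $G'$ if and only if $S = T \cap \Neighbourhood(v)$ is a constituency of $G' - v = G$, which is exactly the equivalence we need.

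The step I expect to require the most care is ruling out \emph{spurious} witnesses: since a set is a district as soon as \emph{some} vertex works, any vertex whose restricted neighbourhood happens to be a constituency of the deleted graph would make $T$ a district for free, and in particular the empty set is always a constituency, so any vertex with no neighbour in $T$ would trivially witness districthood. The choice $T = V$ is precisely what neutralises this issue, by forcing the witness to be the single added vertex $v$; with that choice both implications collapse to the one equivalence that $v$ is a witness if and only if $S$ is a constituency of $G$, and no case analysis over other potential witnesses is required. I would also remark that this same construction yields a reduction from \DecisionProblem{Non-Constituency} to \DecisionProblem{Non-District}, which (together with Corollary \ref{corollary:complexity_non-constituency}) gives the \coNP-completeness of \DecisionProblem{Non-District} as a byproduct.
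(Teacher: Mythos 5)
Your proof is correct, and your hardness reduction is genuinely simpler than the paper's. Both arguments reduce from \DecisionProblem{Constituency} by adjoining a fresh apex vertex adjacent to exactly $S$, and in both cases the point is to force that apex to be the unique candidate witness; the difference lies in how spurious witnesses are eliminated. The paper takes the target set to be $\hat{S} = S \cup T'$, where $T'$ consists of pendant copies of the vertices of $T = V \setminus S$: for each $t \in T$, the pendant $t'$ lies in $\hat{S} \cap \Neighbourhood(t; \hat{G})$ but is isolated in $\hat{G} - t$, so it cannot be dominated by any independent set, and hence $t$ cannot serve as a witness. This costs $|T|$ extra vertices and a short case analysis over the possible witnesses. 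Your choice of target set $T = V$ achieves the same effect for free: every original vertex lies inside the target, so the apex $v$ is the only vertex of $G'$ outside it, and districthood of $T$ unfolds directly to the statement that $S = T \cap \Neighbourhood(v; G')$ is a constituency of $G' - v = G$. Your construction is smaller (one added vertex rather than $|T|+1$), avoids the case analysis entirely, and, as you observe, dualises verbatim to a reduction from \DecisionProblem{Non-Constituency} to \DecisionProblem{Non-District}, recovering Corollary \ref{corollary:complexity_non-district}. The \NP{} membership argument is essentially identical to the paper's.
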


\begin{proof}
Membership of \NP{} is known: the yes-certificate is a vertex $v$ and a set of vertices $I$ such that $v \notin I \cup T$, $I$ is an independent set, and $T \cap \Neighbourhood(v) \subseteq \Neighbourhood( I )$.
% \begin{itemize}\itemsep=0pt
%     \item $v \notin I \cup T$,
%     \item $I$ is an independent set,
%     \item $T \cap \Neighbourhood(v) \subseteq \Neighbourhood( I )$.
% \end{itemize}
% Clearly, the three properties above can be verified in polynomial time, and so a pair $(v,I)$ satisfying them is a certificate for a yes-instance of \DecisionProblem{District}.

The hardness proof is by reduction from \DecisionProblem{Constituency}, which is \NP-complete, as proved in Theorem \ref{theorem:complexity_constituency}. Let $(G,S)$ be an instance of \DecisionProblem{Constituency}, and construct the instance $(\hat{G}, \hat{S})$ of \DecisionProblem{District} as follows.

Let $G = (V,E)$ and denote $T = V \setminus S$. Then consider a copy $T' = \{t' : t \in T\}$ of $T$ and an additional vertex $\hat{v} \notin V \cup T'$. Let $\hat{G} = (\hat{V}, \hat{E})$ with $\hat{V} = V \cup T' \cup \{ \hat{v} \}$ and $\hat{E} = E \cup \{ tt' : t \in T \} \cup \{ s\hat{v} : s \in S \}$, and $\hat{S} = S \cup T'$. This construction is illustrated in Figure \ref{fig:district}.

We only need to prove that $S$ is a constituency of $G$ if and only if $\hat{S}$ is a district of $\hat{G}$. Firstly, if $S$ is a constituency of $G$, then there exists an independent set $I$ of $G$ such that $S \subseteq \Neighbourhood( I; G )$. Then $\hat{S} \cap \Neighbourhood( \hat{v}; \hat{G} ) = S$ is contained in $\Neighbourhood( I; \hat{G} - \hat{v} )$, thus $\hat{S}$ is a district of $\hat{G}$.

Conversely, if $\hat{S}$ is a district of $\hat{G}$, then there exists $u \in \hat{V} \setminus \hat{S}$ such that $\hat{S} \cap \Neighbourhood( u; \hat{G} )$ is a constituency of $\hat{G} - u$. Then either $u = \hat{v}$ or $u \in T$. Suppose $u = t \in T$, then $t' \in \hat{S}$ is an isolated vertex of $G - t$, hence $\hat{S} \cap \Neighbourhood( t; \hat{G} )$ is not a constituency of $\hat{G} - t$. Therefore, $u = \hat{v}$ and there exists an independent set $\hat{I}$ of $\hat{G} - \hat{v}$ such that $\hat{S} \cap \Neighbourhood( \hat{v} ; \hat{G} ) = S$ is contained in $\Neighbourhood( \hat{I}; \hat{G} )$. Since $S \subseteq V$ and $\Neighbourhood(S; \hat{G} - \hat{v}) \subseteq V$, we obtain $S \subseteq \Neighbourhood( \hat{I} \cap V; \hat{G} - \hat{v} ) \cap V = \Neighbourhood( \hat{I} \cap V; G )$, where $I = \hat{I} \cap V$ is an independent set of $G$. Thus, $S$ is a constituency of $G$.

\end{proof}

\begin{figure}
    \centering
    \includegraphics[page=2,width=.5\textwidth]{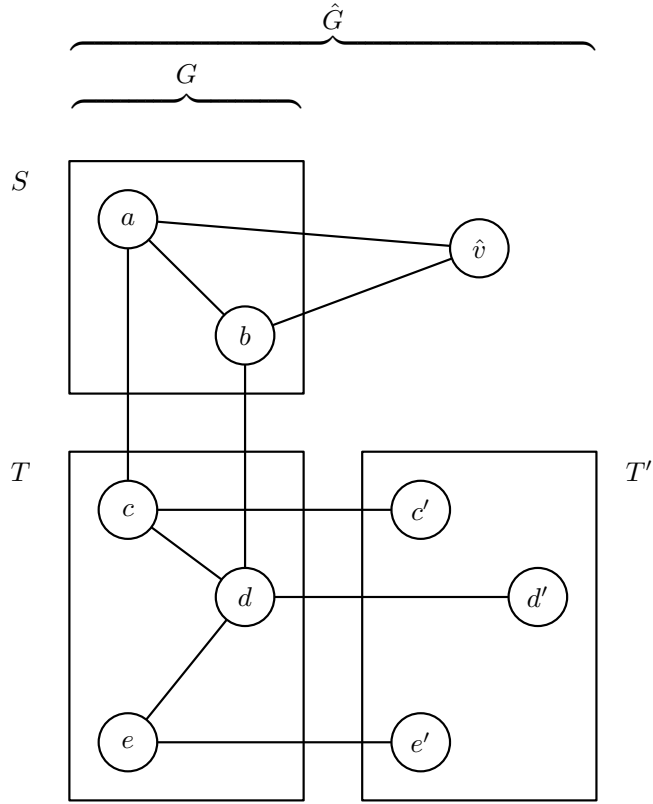}
    \caption{Example reduction from a no-instance of \DecisionProblem{Constituency} $(G,S)$ to the corresponding no-instance of \DecisionProblem{District} $(\hat{G}, \hat{S})$, with $\hat{S}:=S\cup T'$. 
}
    \label{fig:district}
\end{figure}

% \end{proof}

\begin{corollary} \label{corollary:complexity_non-district}
    \DecisionProblem{Non-District} is \coNP-complete.
\end{corollary}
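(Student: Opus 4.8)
The final statement is Corollary \ref{corollary:complexity_non-district}, asserting that \DecisionProblem{Non-District} is \coNP-complete. The plan is to deduce this directly from Theorem \ref{theorem:complexity_district}, which establishes that \DecisionProblem{District} is \NP-complete, by observing that \DecisionProblem{Non-District} is precisely the complement of \DecisionProblem{District}. This mirrors exactly the relationship between Theorem \ref{theorem:complexity_constituency} and Corollary \ref{corollary:complexity_non-constituency}, so I would present the argument in the same terse style.

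Concretely, I would argue as follows. The two decision problems share the same input, namely a graph $G = (V,E)$ together with a set of vertices $T \subseteq V$, and the question ``Is $T$ a non-district of $G$?'' is the logical negation of the question ``Is $T$ a district of $G$?''. Thus a given instance $(G,T)$ is a yes-instance of \DecisionProblem{Non-District} if and only if it is a no-instance of \DecisionProblem{District}; in other words, \DecisionProblem{Non-District} is the complement language of \DecisionProblem{District}. Since the complexity class \coNP{} is by definition closed under taking complements of \NP{} problems (a language $L$ is in \coNP{} exactly when its complement lies in \NP{}), and since \DecisionProblem{District} is \NP-complete by Theorem \ref{theorem:complexity_district}, it follows immediately that \DecisionProblem{Non-District} is \coNP-complete.

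There is essentially no obstacle here: the result is a formal consequence of the definition of \coNP{} and the closure of the complement operation under polynomial-time many-one reductions. The only minor point worth stating explicitly, if one wishes to be fully rigorous, is that a many-one reduction witnessing \NP-hardness of \DecisionProblem{District} is simultaneously a many-one reduction witnessing \coNP-hardness of \DecisionProblem{Non-District}, because reductions respect complementation (the same reduction $f$ satisfying $x \in A \iff f(x) \in B$ also satisfies $x \in \bar{A} \iff f(x) \in \bar{B}$). I would keep the proof to a single sentence or two, noting that it follows from Theorem \ref{theorem:complexity_district} in the same way Corollary \ref{corollary:complexity_non-constituency} follows from Theorem \ref{theorem:complexity_constituency}, and very likely the authors simply omit the proof altogether or replace it with such a remark.
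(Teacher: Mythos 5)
Your proposal is correct and matches the paper's (implicit) reasoning exactly: the paper states this corollary without proof, as it follows immediately from Theorem \ref{theorem:complexity_district} because \DecisionProblem{Non-District} is the complement of \DecisionProblem{District}, precisely mirroring how Corollary \ref{corollary:complexity_non-constituency} follows from Theorem \ref{theorem:complexity_constituency}. Your additional remark that a many-one reduction for \NP-hardness simultaneously witnesses \coNP-hardness of the complement is the right way to make the omitted argument fully rigorous.
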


If $T$ is a district of $G$, then any subset of $T$ is also a district of $G$. Therefore, any superset of a non-district is also a non-district. Furthermore, every graph $G$ has a trivial non-district, namely $V$. The \DecisionProblem{Non-Trivial Non-District} problem asks whether $G$ has any other non-district. We provide some illustrative instances in Figure \ref{fig:nondistrict_examples}. We need only consider sets $W$ with $|W|=n-1$. For $C_4$ and $C_3$, we can by symmetry assume $W=V \setminus \{a\}$, and then for $C_4$ $\{c\}$ is an independent set which dominates $\Neighbourhood(a)\cap W$, whereas for the $C_3$ there are no vertices outside $\Neighbourhood(a) \cap W$ and hence $\{b,c\}$ is a non-district. Similarly, for $P_3$, $W=V\setminus\{b\}$ is a non-trivial non-district.

\begin{decisionproblem}
    \problemtitle{Non-Trivial Non-District}
    \probleminput{A graph $G = (V,E)$.}
    \problemquestion{Does there exist a non-district $S \ne V$ of $G$?}
\end{decisionproblem}

\begin{figure}
    \centering
    \includegraphics[page=16,width=.8\textwidth]{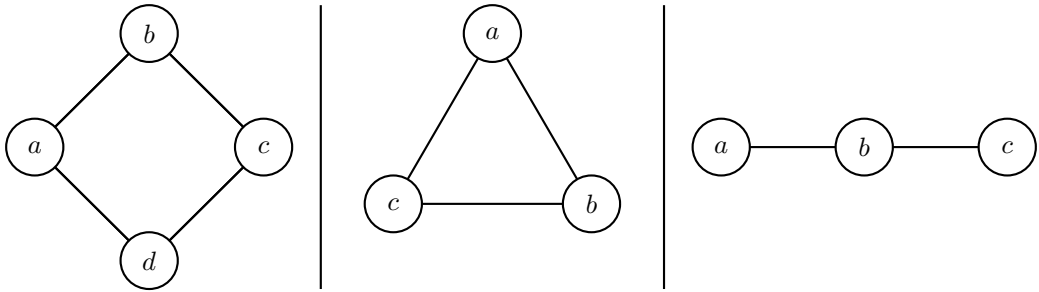}
    \caption{Some example instances of the \DecisionProblem{Non-Trivial Non-District} problem. $C_4$ (left) is a no-instance, whereas $C_3$ (center) and $P_3$ (right) are yes-instances.}
    \label{fig:nondistrict_examples}
\end{figure}

\begin{theorem} \label{theorem:complexity_non-trivial_non-district}
\DecisionProblem{Non-Trivial Non-District} is \coNP-complete.
\end{theorem}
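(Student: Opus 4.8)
The plan is to reduce the problem to a clean per-vertex condition, then establish membership in $\coNP$, and finally give a reduction from \DecisionProblem{Non-Constituency}.

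Since any superset of a non-district is a non-district, $G$ admits a non-trivial non-district iff some set $V \setminus \{u\}$ is a non-district. For such a set the only possible witness vertex is $u$ itself, and $(V\setminus\{u\}) \cap \Neighbourhood(u) = \Neighbourhood(u)$; hence $V \setminus \{u\}$ is a district iff $\Neighbourhood(u)$ is a constituency of $G-u$. So $G$ has a non-trivial non-district iff there is a vertex $u$ with $\Neighbourhood(u)$ a non-constituency of $G-u$; call such a $u$ \emph{fragile}. Because the independent set witnessing a constituency must be disjoint from the set it dominates, $u$ is non-fragile iff some independent $I \subseteq V \setminus \Neighbourhood[u]$ has $\Neighbourhood(u) \subseteq \Neighbourhood(I)$. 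This immediately gives membership in $\coNP$: a \emph{no}-instance (no fragile vertex) is certified by a family $(I_u)_{u\in V}$, where each $I_u \subseteq V\setminus\Neighbourhood[u]$ is independent and dominates $\Neighbourhood(u)$; this is polynomial and checkable in polynomial time, and it witnesses that every $V\setminus\{u\}$ is a district, so no proper subset is a non-district.

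For hardness I reduce from \DecisionProblem{Non-Constituency}. Given $(G,S)$, I first delete all edges inside $S$ (which by the remarks following Theorem \ref{theorem:complexity_constituency} preserves whether $S$ is a constituency), so that $S$ is independent; instances with $|S|\le 1$ are decided directly. Writing $T = V\setminus S$, I build $G'$ by adding one fresh vertex $u^*$ adjacent to exactly $S$ and, for every $t\in T$, a fresh $4$-cycle $t - g_t^1 - g_t^2 - g_t^3 - t$ (the three guard vertices being new and adjacent only as shown). I claim $G'$ has a fragile vertex iff $S$ is a non-constituency of $G$, and this rests on two facts. First, $u^*$ is fragile iff $S$ is a non-constituency: $\Neighbourhood(u^*)=S$, and since guards are never adjacent to $S$, the only vertices outside $\Neighbourhood[u^*]$ that dominate any $s\in S$ lie in $\Neighbourhood_{G}(S)\subseteq T$, so an independent dominator of $S$ there is precisely a constituency witness in $G$. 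Second, every other vertex is non-fragile, regardless of $S$: each $t\in T$ dominates its guard-neighbours by $g_t^2$, its $T$-neighbours $w$ by $g_w^1$, and its $S$-neighbours by $u^*$; each $s\in S$ (using $|S|\ge 2$) dominates $\Neighbourhood_G(s)\subseteq T$ by the vertices $g_w^1$ and dominates $u^*$ by any other $s'\in S$; and the guards are handled inside their own $4$-cycle together with $t$. In every case the chosen set is independent, avoids the closed neighbourhood of the vertex, and relies on guards attaching only to their own $T$-vertex. Combining the two facts, $G'$ is free of fragile vertices iff $S$ is a constituency, giving the required equivalence.

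The main obstacle is the second fact — ensuring the construction creates no fragile vertex besides $u^*$. Pendants, true twins, and near-universal vertices are all automatically fragile, since a neighbour whose entire neighbourhood lies in $\Neighbourhood[u]$ can never be dominated from outside; the guard gadget must therefore be simultaneously pendant-free and twin-free, and the $4$-cycle is the smallest such device, supplying through $g_t^2$ an external dominator for $t$ and through $g_t^1,g_t^3$ external dominators usable by the neighbours of $t$. The competing requirement is that the guards must not dominate any vertex of $S$: otherwise they would furnish an independent set dominating $S$ for free and make $u^*$ non-fragile irrespective of $S$ (exactly the failure of a naive guard attached at $S$-vertices, or of simply hanging domination onto the originals). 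Reconciling local robustness of all non-special vertices with global preservation of the constituency test is the delicate point, and confining the guards to $T$ and keeping them non-adjacent to $S$ is precisely what makes both hold at once.
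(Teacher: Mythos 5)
Your proof is correct, and its skeleton coincides with the paper's: the same observation that a non-trivial non-district exists iff some $V \setminus \{u\}$ is a non-district (so that the witness vertex is forced to be $u$ and the test becomes whether $\Neighbourhood(u)$ is a constituency of $G-u$), the same \coNP{} certificate $(I_u)_{u \in V}$, and a reduction from \DecisionProblem{Non-Constituency} organised around one special vertex whose neighbourhood is $S$, with padding that makes every other vertex automatically non-fragile. Where you genuinely diverge is the padding gadget. The paper normalises to the \emph{complete} type ($G[S]$ a clique, $G-S$ without isolated vertices) and attaches a full copy $V'$ of $V$, a clique copy $T''$ of $T$, and two hubs $\sigma''$ and $\hat{v}$, then runs a six-case analysis; you normalise to the \emph{empty} type (deleting the edges inside $S$, which the paper's remarks justify) and attach a single apex $u^*$ adjacent to $S$ plus a private $C_4$ through each vertex of $T$. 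Your version buys some simplicity and robustness: only three classes of vertices need checking ($T$-vertices, $S$-vertices via $|S| \ge 2$, and guards), the $C_4$ is a clean reusable anti-fragility device that creates neither pendants nor closed twins, and you need no hypothesis on isolated vertices of $G-S$, which the paper's $t'$-case does require. You are also slightly more careful than the paper in explicitly disposing of instances with $|S| \le 1$; the paper's case $a = s' \in S'$ silently uses the existence of $\bar{s} \in S \setminus \{s\}$, so it too needs $|S| \ge 2$. Both reductions are linear-size, so neither has a complexity advantage; the difference is purely in gadget engineering, and yours is the leaner of the two.
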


\begin{proof}
Since any superset of a non-district is also a non-district, $G$ has a non-district $S \ne V$ if and only if there exists $v \in V$ such that $V \setminus \{ v \}$ is a non-district of $G$. 
% Recall that every subset of a constituency is itself a constituency. Similarly, every superset of a non-district is itself a non-district. More formally, let $S$ is a non-district of $G$ and $S':=S \cup \{v\}$ is a strict superset of $S$ for some arbitrary $v\in V$, and suppose for contradiction that $S'$ is not a non-district of $G$ (i.e. $S'$ is a district of $G$). Then there exists some $x \in V \setminus S'$ such that $S' \cap \Neighbourhood(x)$ is a constituency of $G$ - but this would mean that same $x \in V \setminus S$ satisfies that $S \cap \Neighbourhood(x)$ is a constituency of $G$, and so $S$ is a district of $G$, yielding the desired contradiction. 
Therefore, \DecisionProblem{Non-Trivial Non-District} is in \coNP, where the no-certificate is a collection $(I_v : v \in V)$ such that $I_v$ is an independent set of $G - v$ and $\Neighbourhood( v ) \subseteq \Neighbourhood( I_v )$ for all $v$.

The hardness proof is by reduction from \DecisionProblem{Non-Constituency}, which is \coNP-complete by Corollary \ref{corollary:complexity_non-constituency}. Let $(G = (V,E), S)$ be an instance of \DecisionProblem{Non-Constituency} of complete type (i.e. where $S$ is a clique in $G$) and denote $T = V \setminus S$. Let $V' = \{ v' : v \in V \}$ be a copy of $V$, $T'' = \{ t'' : t \in T \}$ be a second copy of $T$, and $\sigma''$ and $\hat{v}$ be two additional vertices. For any $A \subseteq V$, we denote $A' = \{ a' : a \in A \}$. Let $\hat{G} = (\hat{V}, \hat{E})$ with $\hat{V} = V \cup V' \cup T'' \cup \{ \sigma'', \hat{v} \}$ and $\hat{E} = E \cup \{ vv' : v \in V \} \cup \{ \hat{v}s, \hat{v}s', s'\sigma'' : s \in S \} \cup \{ t'' \bar{t}'', t''\sigma'' : t, \bar{t} \in T \}$. This is illustrated in Figure \ref{figure:complexity_non-trivial_non-district}.

We first show that $W_a = V \setminus \{ a \}$ is a district of $\hat{G}$ for all $a \ne \hat{v}$ (note that $W_a \cap \Neighbourhood( a; \hat{G} ) = \Neighbourhood( a; \hat{G} )$). Necessarily one of the following holds.
\begin{itemize}
    \item 
    $a = s \in S$. \\
    Then $\Neighbourhood( s; \hat{G} ) = \{ \hat{v}, s' \} \cup \Neighbourhood( s; G )$ is dominated by the independent set $\{ \sigma'' \} \cup \Neighbourhood( s; G )'$.

    \item
    $a = s' \in S'$. \\
    Then $\Neighbourhood( s'; \hat{G} ) = \{ \hat{v}, s, \sigma'' \}$ is dominated by the independent set $\{ \bar{s}, \bar{s}'\}$ where $\bar{s} \in S \setminus \{s\}$ (and so necessarily $s\bar{s} \in E$).

    \item 
    $a = \sigma''$. \\
    Then $\Neighbourhood( \sigma''; \hat{G} ) = S' \cup T''$ is dominated by the independent set $\{ \hat{v} \} \cup T'$.

    \item 
    $a = t \in T$. \\
    Then $\Neighbourhood( t; \hat{G} ) = t' \cup \Neighbourhood( t; G )$ is dominated by the independent set $\{ t'' \} \cup \Neighbourhood( t; G )'$ (or alternatively $\{t'', \hat{v}\}$). 

    \item
    $a = t' \in T'$. \\
    Then $\Neighbourhood( t'; \hat{G} ) = \{ t, t'' \}$ is dominated by the independent set $\{ \bar{t}, \bar{t}''\}$ where $t\bar{t} \in E$. (Recall $G-S$ has no isolated vertices in a \DecisionProblem{Constituency} instance of complete type.)
 
    \item 
    $a = t'' \in T''$. \\
    Then $\Neighbourhood( t''; \hat{G} ) = \{ t', \sigma'' \} \cup ( T'' \setminus \{ t'' \} )$ is dominated by the independent set $\{ t \} \cup ( V' \setminus \{ t' \} ) \cup S'$.
\end{itemize}

We now show that $W_{ \hat{v} }$ is a district of $\hat{G}$ if and only if $S$ is a constituency of $G$. We remark that $W_{ \hat{v} } \cap \Neighbourhood( \hat{v}; \hat{G} ) = S \cup S'$.
If $W_{ \hat{v} }$ is a district of $\hat{G}$, then $S \subseteq \Neighbourhood( I \setminus \Neighbourhood[ \hat{v} ; \hat{G} ]; \hat{G} )$ for some independent set $I$. Therefore $S \subseteq \Neighbourhood( I \cap T; G )$, i.e. $S$ is a constituency of $G$. Conversely, if $S$ is a constituency of $G$, say $S \subseteq \Neighbourhood( I ; G )$ for some independent set $I$ of $G$, then $I \cup \{\sigma''\}$ is an independent set of $\hat{ G }$ such that $S \cup S' \subseteq \Neighbourhood( I \cup \{\sigma''\} ; \hat{G} )$, i.e. $W_{ \hat{v} }$ is a district of $\hat{G}$.

% We have $\Neighbourhood( \hat{v} ) = S \cup S'$. Since $\Neighbourhood( S' ) \setminus ( S \cup \{ \hat{v} \} ) = \sigma''$ and $\sigma'' \notin \Neighbourhood( S )$, for any $J \subseteq T$, $J$ dominates $S$ if and only if $J \cup \{ \sigma'' \}$ dominates $S \cup S'$. Therefore, $S$ is a constituency of $G$ if and only if $W_{ \hat{v} }$ is a district of $\hat{G}$.

\end{proof}

\begin{figure}
    \centering
    \includegraphics[page=5,width=\textwidth]{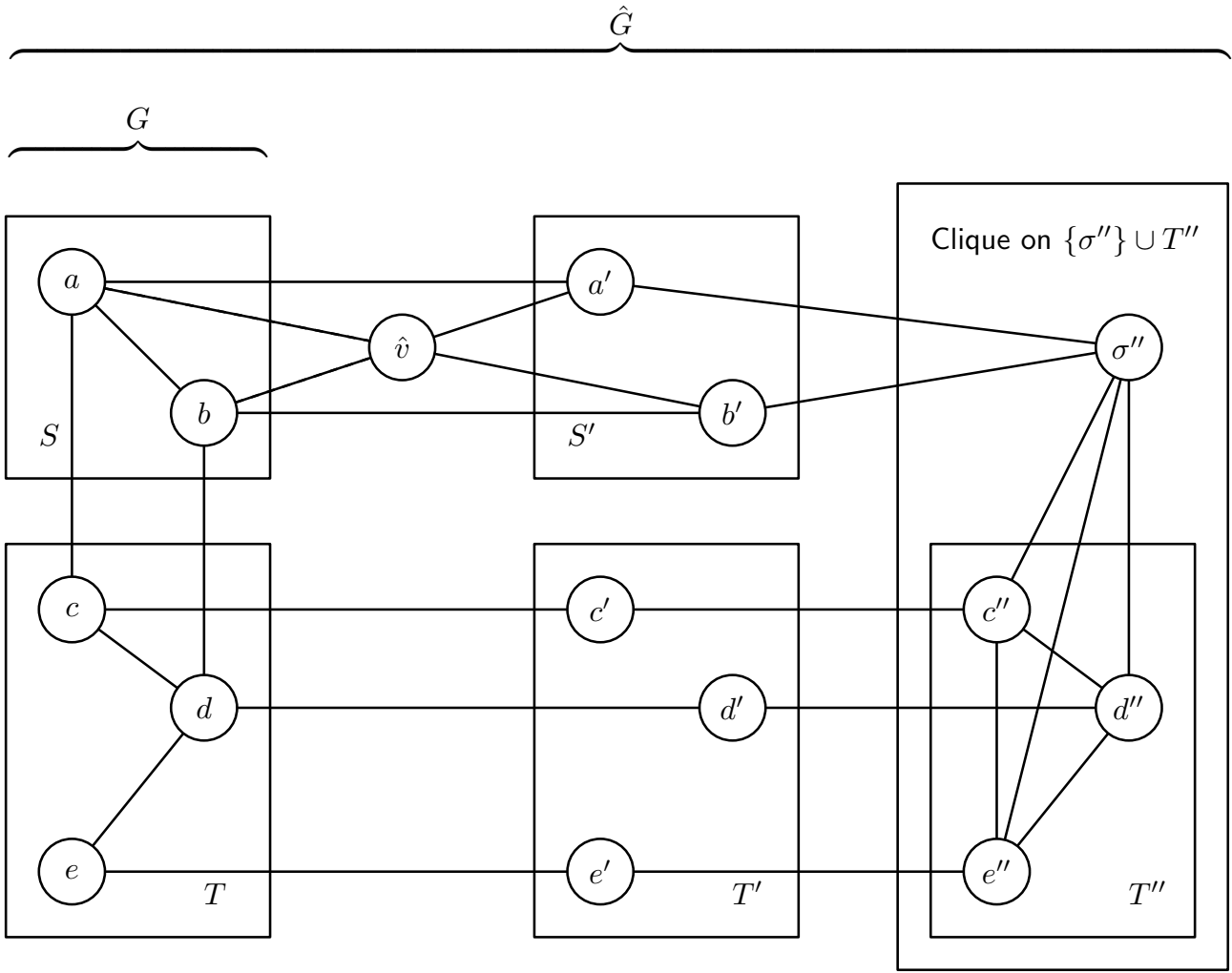}
    \caption{Illustration of the reduction from \DecisionProblem{Non-Constituency} to \DecisionProblem{Non-Trivial Non-District}.}
    \label{figure:complexity_non-trivial_non-district}.
\end{figure}

\section{Reachability of the MIS network} \label{section:reachability_K}

\subsection{The MIS network} \label{subsection:MIS_network}

By identifying a configuration $x \in \{0,1\}^V$ with its support $\one( x )$, one can interpret the MIS algorithm as sequential updates of a particular Boolean network. The \Define{MIS network} on $G$, denoted as $\MIS(G)$ or simply $\MIS$ when the graph is clear from the context, is defined by
\begin{align*}
    \MIS(x)_v &= \begin{cases}
    0 &\text{if } \exists u \in \Neighbourhood(v): x_u = 1 \\
    1 &\text{if } \forall u \in \Neighbourhood(v): x_u = 0
    \end{cases} \\
    &= \bigmeet_{u \sim v} \neg x_u,
\end{align*}
with $\MIS(x)_v = 1$ if $\Neighbourhood(v) = \emptyset$. We then have $\Fix( \MIS(G) ) = \MaximalIndependentSets(G)$ \cite{RR13,ARS14}.

The MIS algorithm then begins with the all-zero configuration, updates the state of every vertex in order, and leads to a configuration whose support is a maximal independent set. In the language of Boolean networks:
\begin{itemize}
    \item $x = 0$;

    \item $w$ is a permutation of $V$;

    \item $y = \MIS^w(x) \in \MaximalIndependentSets( G )$.
\end{itemize}

The pivotal role of constituencies for the MIS network can be explained by the equivalence below (its proof is easy and hence omitted). For a set of vertices $S \subseteq V$,
$S$ is a constituency of $G$ if and only if there exists a fixed point $y \in \MaximalIndependentSets(G)$ such that $y_S = 0$.

\subsection{Universal configurations} \label{subsection:universal_configurations}

In this paper, we are interested in removing the constraint on the initial configuration $x$. This in turn will lead to constraints on the word $w$, as we shall see in the sequel. For now, in this section, we are interested in initial configurations $x$ that can lead to any MIS $y$.

Say a configuration $x$ is \Define{$\Network$-universal} if every fixed point of $\Network$ is reachable from $x$, i.e. $x \mapsto_\Network z$ for all $z \in \Fix( \Network )$. Clearly, the all-zero configuration is $\MIS( G )$-universal, as one can reach any MIS from the empty set. In fact, those fixed points can be reached by a geodesic. We now classify the universal configurations for the MIS network, which actually also allow to reach all fixed points by a geodesic. Since the classification is based on constituencies, the problem of deciding whether a configuration is universal is \coNP-complete.

\begin{decisionproblem}
    \problemtitle{$\MIS$-Universal Configuration}
    \probleminput{A graph $G$ and a configuration $x$.}
    \problemquestion{Is $x$ an $\MIS( G )$-universal configuration?}
\end{decisionproblem}

\begin{theorem} \label{theorem:complexity_universal_configuration_K}
\DecisionProblem{$\MIS$-Universal Configuration} is \coNP-complete.    
\end{theorem}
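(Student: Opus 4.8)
The plan is to reduce the problem to \DecisionProblem{Non-Constituency} via a structural classification of the universal configurations. The key fact I would establish first is a reachability criterion: for any maximal independent set $M$ (so that $\CharacteristicVector(M) \in \MaximalIndependentSets(G)$) and any configuration $x$,
\[
    x \mapsto_{\MIS} \CharacteristicVector(M) \iff \text{every connected component of } G[\one(x)] \text{ contains a vertex of } M .
\]
Granting this, recall that $S$ is a constituency iff some maximal independent set is disjoint from $S$; hence a component $C$ of $G[\one(x)]$ meets every maximal independent set precisely when $C$ is a non-constituency. Quantifying the criterion over all $M$ then yields the classification: \emph{$x$ is $\MIS$-universal if and only if every connected component of $G[\one(x)]$ is a non-constituency of $G$}, and moreover every reachable fixed point is reached by a geodesic. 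The complexity statement follows from this classification almost immediately.

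For the reachability criterion, sufficiency is the routine direction, which I would prove by exhibiting a geodesic. Write $P = \one(x)$ and $B = P \setminus M$, and note that the vertices of $M \setminus P$ are exactly those that must be switched on. Switching a vertex of $M \setminus P$ to $1$ requires all its neighbours (which lie outside $M$) to be off, so it must be visited after all its $B$-neighbours are cleared; clearing a $b \in B$ requires an on-neighbour at the time of its update. Since every component of $G[P]$ meets $M$, every component of $G[B]$ is adjacent to $M \cap P$, and a short induction lets me order $\Delta(x, \CharacteristicVector(M))$ so that each $b \in B$ is cleared using either a neighbour in $M \cap P$ (which stays on throughout) or a not-yet-cleared $B$-neighbour, with the vertices of $M \setminus P$ switched on last. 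This visits every differing vertex exactly once and lands on $\CharacteristicVector(M)$.

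The main obstacle is the necessity direction, which must rule out \emph{all} update words, not just geodesics, and so has to cope with vertices being switched on and off repeatedly. I would prove the contrapositive: if some component $C$ of $G[\one(x)]$ satisfies $C \cap M = \emptyset$, then $C$ can never be emptied, so $\CharacteristicVector(M)$ (on which $C$ is entirely off) is unreachable. The argument is a ``locked component'' invariant. Initially $C$ is entirely on while its outer boundary $\Neighbourhood(C) \setminus C$ is entirely off, since $C$ is a full component of $G[\one(x)]$. Suppose for contradiction that some word first empties $C$ at time $t$, turning off its last on-vertex $c^\ast$. Then $c^\ast$ needs an on-neighbour $u$ at time $t$, and $u$ must lie in the boundary of $C$; but switching $u$ on at any earlier moment forces $c^\ast$ off at that moment, while switching $c^\ast$ back on forces $u$ off. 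Tracking the last switch-on of each of $u$ and $c^\ast$ before $t$ then yields a contradiction, so at least one vertex of $C$ is on in every configuration reachable from $x$.

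With the classification in hand, membership in \coNP{} is clear: a no-certificate is a maximal independent set $M$, for which one checks in polynomial time (by computing the components of $G[\one(x)]$) that some component is disjoint from $M$. For \coNP-hardness I would reduce from \DecisionProblem{Non-Constituency}, which is \coNP-complete; we may assume $S \neq \emptyset$, as the case $S = \emptyset$ is trivial. Given an instance $(G,S)$, build $G'$ from $G$ by adding all edges inside $S$, turning $S$ into a clique, and set $x = \CharacteristicVector(S)$ so that $\one(x) = S$. Whether $S$ is a constituency is independent of the edges within $G[S]$, so $S$ is a non-constituency of $G'$ iff it is one of $G$; and $G'[\one(x)] = G'[S]$ is connected, hence $S$ is its unique component. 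By the classification, $x$ is $\MIS(G')$-universal iff $S$ is a non-constituency of $G'$, i.e. iff $(G,S)$ is a yes-instance of \DecisionProblem{Non-Constituency}. This polynomial-time reduction, together with membership, gives \coNP-completeness.
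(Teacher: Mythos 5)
Your proposal is correct and follows essentially the same route as the paper: a reachability characterisation (your criterion is exactly Proposition \ref{proposition:reachability_K} specialised to MIS targets, where the paper's edge-preservation condition becomes vacuous), the identical classification that $x$ is universal if and only if every component of $G[\one(x)]$ is a non-constituency (Corollary \ref{corollary:universal_configuration_K}), and the identical reduction from \DecisionProblem{Non-Constituency} that turns $S$ into a clique and takes $x = \CharacteristicVector(S)$. Your ``locked component'' necessity argument is the same last-update analysis that the paper packages as Lemma \ref{lemma:edge}, so the differences (including your careful handling of $S = \emptyset$) are only presentational.
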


We first characterise the configurations $y$ that are reachable from a given configuration $x$. For any configuration $x$ on $G$, we denote the collection of connected components of $G[ \one(x) ]$ as $\mathcal{C}(x)$. Before giving the full statement of the result, we provide some intuition. Suppose $y$ is reachable from $x$; we show that $y$ must satisfy two conditions. First, $y$ cannot ``create an edge'': if $[w]$ intersects an edge of $G[ \one(x ) ]$, then it will destroy it. Therefore, any edge in $G[ \one( y ) ]$ must be an (untouched) edge of $G[ \one( x ) ]$. Second, $y$ cannot ``empty out'' a connected component: in order to update a vertex $v$ from $x_v = y^{a-1}_v = 1$ to $y_v = y^a_v = 0$, there must be a neighbour $a$ of $v$ such that $y^{a-1}_v = 1$. Therefore, for any $C \in \mathcal{C}$, $y_C \ne 0$.

Proposition \ref{proposition:reachability_K} then shows that these two conditions are indeed sufficient for reachability, and in fact for reachability by a geodesic.

\begin{proposition}[Reachability for the MIS network] \label{proposition:reachability_K}
Let $G$ be a graph and $x, y$ be two configurations on $G$. The following are equivalent:
\begin{enumerate}
    \item \label{item:reachability_K_reachable}
    $x \mapsto_\MIS y$;

    \item \label{item:reachability_K_geodesic}
    $x \Geodesic_\MIS y$;

    \item \label{item:reachability_K_property}
    every edge in $G[ \one( y ) ]$ is an edge in $G[ \one( x ) ]$ and $y_C \ne 0$ for any $C \in \mathcal{C}(x)$. %connected component $C$ of $G[ \one( x ) ]$.
\end{enumerate}
\end{proposition}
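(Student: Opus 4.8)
The plan is to prove the cycle of implications $\ref{item:reachability_K_geodesic}\Rightarrow\ref{item:reachability_K_reachable}\Rightarrow\ref{item:reachability_K_property}\Rightarrow\ref{item:reachability_K_geodesic}$. The first implication is free, since a geodesic is by definition a word realising $y=\MIS^w(x)$, so in particular $x\mapsto_\MIS y$. For the necessity direction $\ref{item:reachability_K_reachable}\Rightarrow\ref{item:reachability_K_property}$, I would fix a word $w$ with $y=\MIS^w(x)$ and establish both conditions by examining the \emph{last relevant update}. For the edge condition, suppose $u\sim v$ with $y_u=y_v=1$ and that at least one of them is updated by $w$; taking the last step $a$ at which either is touched, say $w_a=v$, the identity $y_v=\MIS(y^{a-1})_v$ forces $y^{a-1}_u=0$, and since $u$ is untouched from step $a$ onwards we get $y_u=y^{a-1}_u=0$, a contradiction. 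Hence neither endpoint is ever updated, so $x_u=x_v=1$ and the edge already lies in $G[\one(x)]$.

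The component condition is the crux of necessity. Suppose towards a contradiction that $y_C=0$ for some $C\in\mathcal{C}(x)$. Since $|\one(y^a)\cap C|$ drops from $|C|$ to $0$ and changes by at most one per step, some step switches a vertex of $C$ off; let $a$ be the \emph{last} such step, with $v=w_a\in C$, so that some neighbour $t\sim v$ has $y^{a-1}_t=1$. By maximality of $a$, no vertex of $C$ switches off after $a$, and because $y_C=0$ none can switch on after $a$ either (it would then remain on until the end), so the states on $C$ are frozen from step $a$ on. If $t\in C$, then $t$ is still on at step $a$ and hence $y_t=1$, contradicting $y_C=0$. If $t\notin C$, then $t\notin\one(x)$ because $C$ is a full component of $G[\one(x)]$ and $v\in C$; letting $b\le a-1$ be the last step switching $t$ on, all neighbours of $t$ — in particular $v$ — are off at time $b-1$, and $v$ cannot switch on during $[b,a-1]$ since its neighbour $t$ is on throughout that interval, so $y^{a-1}_v=0$, contradicting that step $a$ switches $v$ off. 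Either branch yields a contradiction, so $y_C\ne0$.

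For sufficiency $\ref{item:reachability_K_property}\Rightarrow\ref{item:reachability_K_geodesic}$, I would construct a geodesic explicitly on $\Delta(x,y)=\mathrm{DOWN}\cup\mathrm{UP}$, where $\mathrm{DOWN}=\one(x)\setminus\one(y)$ and $\mathrm{UP}=\one(y)\setminus\one(x)$. First note that the edge condition forces every $v\in\mathrm{UP}$ to have no on-neighbour in $y$, since such an edge would have to already lie in $G[\one(x)]$, which would put $v$ into $\one(x)$. The word then has two phases. In Phase~1 I switch off every $\mathrm{DOWN}$ vertex: within each $C\in\mathcal{C}(x)$ the component condition guarantees a surviving vertex (one with $y$-value $1$, never updated in this phase), and processing the $\mathrm{DOWN}$ vertices of $C$ by decreasing distance to the surviving set ensures each one still retains an on-neighbour — its predecessor towards that set — when updated, so $\MIS(\cdot)_v=0$ as required. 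In Phase~2 I switch on every $\mathrm{UP}$ vertex in any order: at each such moment the on-set is contained in $\one(y)$, which by the observation above is disjoint from $\Neighbourhood(v)$, so $\MIS(\cdot)_v=1$. This word visits each vertex of $\Delta(x,y)$ exactly once and produces $y$, hence is the desired geodesic.

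I expect the main obstacle to be the component condition in the necessity direction, specifically the case $t\notin C$: a vertex outside the component can be switched on and then used to switch off a vertex of $C$, and the delicate point is that fixing the \emph{last} off-switching step of $C$ (rather than an arbitrary one) is exactly what lets one trace $t$ back to a moment where $v$ was already off, closing the argument.
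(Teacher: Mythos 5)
Your proposal is correct, and its overall skeleton coincides with the paper's: the same implication cycle $\ref{item:reachability_K_geodesic}\Rightarrow\ref{item:reachability_K_reachable}\Rightarrow\ref{item:reachability_K_property}\Rightarrow\ref{item:reachability_K_geodesic}$, the same last-update argument for the edge condition, and the same two-phase geodesic (turn off $\mathrm{DOWN}$ component by component with a live neighbour preserved, then turn on $\mathrm{UP}$ in any order; your decreasing-distance ordering is exactly the paper's leaves-to-roots traversal of a spanning forest rooted at $\one(y)\cap C$, via Corollary \ref{corollary:rooted_forest}). Where you genuinely diverge is the necessity of the component condition, which you correctly identify as the crux. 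The paper truncates the word at the \emph{first} moment $C$ empties, so the final letter $w_l$ turns off the last live vertex of $C$; it then shows the blocking neighbour $u$ must itself have been switched on earlier, so that $uw_l$ is an edge of $G[\one(y^{l-1})]$ touched by $w_{1:l-1}$, contradicting Lemma \ref{lemma:edge} applied to the prefix. You instead anchor at the \emph{last} step at which a vertex of $C$ switches off, observe that the states on $C$ are then frozen (so $y^a_C=0$), and trace the blocking neighbour $t$ back to its last switch-on, at which point $v$ was off and could never switch back on while $t$ stayed on. Both are sound; the paper's version is shorter because it recycles the edge lemma, while yours is self-contained and makes the temporal mechanism explicit. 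A further small difference: your verification of the turn-on phase (invariant that the on-set stays inside $\one(y)$, combined with the observation that $\mathrm{UP}$ vertices have no neighbours in $\one(y)$) is arguably tighter than the paper's, which asserts that a blocking neighbour $u$ of $v_i$ satisfies $z_u=1$ for $z=\MIS^{w^1}(x)$ — not literally true if $u$ was itself turned on during the second phase — and then absorbs that case through its case analysis on $(x_u,y_u)$.
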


\begin{lemma} \label{lemma:edge}
Let $x$ be a configuration on $G$, $w$ be a word, and $y = \MIS^w(x)$. If $uv$ is an edge in $G[ \one( y ) ]$, then $[w] \cap \{u, v\} = \emptyset$.
\end{lemma}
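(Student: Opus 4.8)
The plan is to argue by contradiction, and the single genuine idea is to focus on the \emph{last} moment at which either endpoint of the edge is updated. So I would suppose $[w] \cap \{u,v\} \neq \emptyset$ and let $a$ be the largest index with $w_a \in \{u,v\}$. Write $p = w_a$ for the endpoint updated at this final step and let $q$ be the other endpoint; since $uv$ is an edge of $G$ we have $p \sim q$.

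Next I would record what maximality of $a$ buys. Because no index after $a$ touches either $u$ or $v$, the states of both endpoints are frozen from step $a$ onwards, so $y_p = y^a_p$ and $y_q = y^a_q$. Moreover, at step $a$ only $p = w_a$ is updated, so $q$ keeps its value across that step, giving $y^a_q = y^{a-1}_q$ and hence $y_q = y^{a-1}_q$. This is the step that truly exploits the choice of the \emph{last} (rather than first) update, and it is the crux of the argument: it lets me transport the final value of $q$ back to the configuration $y^{a-1}$ that governs the update of $p$.

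Then I would invoke the update rule at step $a$. Since $q \sim p$, the factor $\neg y^{a-1}_q$ appears in the conjunction $\MIS(y^{a-1})_p = \bigmeet_{z \sim p} \neg y^{a-1}_z$, so $y^a_p \le \neg y^{a-1}_q$; in particular $y^a_p = 1$ forces $y^{a-1}_q = 0$. Finally, because $uv$ is an edge in $G[\one(y)]$ we have $y_u = y_v = 1$, so $y_p = 1$ and $y_q = 1$. Combining, $y_p = y^a_p = 1$ yields $y^{a-1}_q = 0$, whence $y_q = y^{a-1}_q = 0$, contradicting $y_q = 1$. This contradiction gives $[w] \cap \{u,v\} = \emptyset$.

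As for difficulty, I do not expect a real obstacle: every step is immediate from the definition of sequential update once the right index is chosen. The only thing that must be gotten right is the selection of $a$ as the last update among $\{u,v\}$ — choosing the first update, or an arbitrary one, would fail to freeze the neighbour's state afterwards and the argument would collapse.
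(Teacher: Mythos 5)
Your proposal is correct and follows essentially the same argument as the paper's proof: both pick the last update among $\{u,v\}$, observe that the other endpoint's state is frozen at value $1$ from that point on, and derive a contradiction from the update rule $\MIS(y^{a-1})_p \le \neg y^{a-1}_q$. The only difference is cosmetic — the paper names the last-updated vertex $v$ without loss of generality, while you use symmetric labels $p,q$ — so no gap to report.
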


\begin{proof}
Suppose $v$ is the last updated in $\{u,v\}$, say $v = w_{a+1}$ while $w_b \notin \{ u, v \}$ for all $b > a + 1$. Then $y^a_u = 1$ and hence $y_v = y^{a+1}_v = \MIS( y^a )_v = 0$, which is the desired contradiction.
\end{proof}

\begin{proof}[Proof of Proposition \ref{proposition:reachability_K}]
Suppose that $x \mapsto_\MIS y$. It follows from Lemma \ref{lemma:edge} that every edge in $G[ \one( y ) ]$ is an edge in $G[ \one( x ) ]$. We prove that $y_C \ne 0$ for any $C \in \mathcal{C}(x)$. Suppose $y_C = 0$ for some $C \in \mathcal{C}(x)$ with $w = w_{1 : l}$ but $y^{l-1}_C \ne 0$. Then $w_l \in C$, $y^{l-1}_{C \setminus \{ w_l \}} = 0$, and $y^{l-1}_{w_l} = 1$. Since $\MIS( y^{l-1} )_{w_l} = 0$, there exists $u$ such that $u \sim w_l$ and $y^{l-1}_u = 1$.

\begin{claim}
$u \in [w_{1:l-1}]$.
\end{claim}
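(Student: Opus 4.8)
The plan is to show that the neighbour $u$ produced by the failed update $\MIS( y^{l-1} )_{w_l} = 0$ cannot already carry a $1$ in the initial configuration $x$, and then to deduce from the sequential dynamics that $u$ must have been scheduled by the prefix $w_{1:l-1}$. In other words, I would split the argument into two steps: first pin down $x_u = 0$ using the connectivity structure of $C$, and then invoke the elementary principle that a vertex whose state has changed must have been updated.

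First I would rule out $x_u = 1$. Suppose for contradiction that $x_u = 1$, i.e. $u \in \one(x)$. Since $w_l \in C \subseteq \one(x)$ and $u \sim w_l$, the edge $u w_l$ lies in $G[ \one(x) ]$, so $u$ belongs to the same connected component as $w_l$, namely $u \in C$. As $u \sim w_l$ forces $u \ne w_l$, we obtain $u \in C \setminus \{ w_l \}$. But the standing assumption of this case is $y^{l-1}_{ C \setminus \{ w_l \} } = 0$, so $y^{l-1}_u = 0$, contradicting the choice of $u$ as a vertex with $y^{l-1}_u = 1$. Hence $x_u = 0$.

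It then remains to observe that a vertex's state is constant between its updates: the sequential update $\MIS^{w_j}$ only alters the state of $w_j$, so any vertex never appearing in $w_{1:l-1}$ retains its value $x_u$ throughout. Since $y^{l-1}_u = 1 \ne 0 = x_u$, the state of $u$ has changed over the course of $w_{1:l-1}$, so $u$ must appear in that word, i.e. $u \in [ w_{1:l-1} ]$, as claimed.

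I do not expect a genuine obstacle here; the only point requiring care is the component argument in the first step, which carries the whole force of the claim: a true neighbour of $w_l$ still holding a $1$ at time $l-1$ cannot have been an original member of $C$, precisely because $C$ has already been emptied everywhere except at $w_l$. The remainder is the routine, but essential, fact that an asynchronous update cannot change the state of a vertex it never touches, so I would mainly be careful to track that $u \ne w_l$ and to apply $y^{l-1}_{ C \setminus \{ w_l \} } = 0$ to the correct vertex.
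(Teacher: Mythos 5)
Your proof is correct and follows essentially the same route as the paper's: you rule out $x_u = 1$ via the component structure of $G[\one(x)]$ (the paper phrases this directly as $u \notin C$ and then $\Neighbourhood(w_l) \cap \one(x) \subseteq C$, while you argue by contradiction, but the content is identical) and then conclude from $x_u = 0 \ne y^{l-1}_u$ that $u$ was updated. No gaps; the care you flag about $u \ne w_l$ and applying $y^{l-1}_{C \setminus \{w_l\}} = 0$ correctly is exactly where the argument's weight lies.
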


\begin{proof}
Firstly, since $y^{l-1}_{C \setminus \{ w_l \}} = 0$ and $y^{l-1}_u = 1$, we have $u \notin C$. Secondly, since $u \in \Neighbourhood( w_l ) \setminus C$ while $\Neighbourhood( w_l ) \cap \one( x ) \subseteq C$, we have $u \in \zero( x )$. Thirdly, since $x_u = 0$ and $y_u^{l-1} = 1$, we must have $u \in [w_{1 : l-1}]$.
\end{proof}

% since $\Neighbourhood( w_l ) \cap \one( x ) \subseteq C$, we have $u \in \zero( x )$. This implies that $u \in [w_{1 : l-1}]$ and 

Finally, $uw_l$ is an edge in $G[ \one( y^{l-1} ) ]$ with $[ w_{1:l-1} ] \cap \{ u, w_l \} \ne \emptyset$, which contradicts Lemma \ref{lemma:edge}.

\medskip

Conversely, suppose that every edge in $G[ \one( y ) ]$ is an edge in $G[ \one( x ) ]$ and  $y_C \ne 0$ for any $C \in \mathcal{C}(x)$. We first describe a word $w$ and we then prove that $w$ is a geodesic from $x$ to $y$. The word $w$ is constructed in four steps as follows.
\begin{enumerate}
    \item Let $w^0$ be any permutation of $\one( y ) \cap \zero( x )$.

    \item For any $C \in \mathcal{C}(x)$, the word $w^C$ is constructed as follows. By Corollary \ref{corollary:rooted_forest}, let $T$ be a spanning forest of $C$ rooted at $D = \one( y ) \cap C$, then $w^C$ is a traversal of the spanning forest from leaves towards roots, skipping the roots. More formally, $w^C = t_1 \dots t_k$ where $\{ t_1, \dots, t_k \} = C \setminus D$ and if $t_i$ is a parent of $t_j$ in $T$, then $i > j$. 

    \item Let $w^1$ be a concatenation (in any order) of $w^C$ for every $C \in \mathcal{C}(x)$. More formally, let $\mathcal{C}(x) = \{ C_1, \dots, C_m \}$ then $w^1 = w^{C_1} \dots w^{C_m}$.

    \item Let $w = w^1 w^0$.
\end{enumerate}

We now verify that $w$ is a geodesic from $x$ to $y$, i.e. that $[w] = \Delta(x, y)$ and $y = \MIS^w(x)$. Firstly, 
\begin{align*}
    [w^0] &= \one( y ) \cap \zero( x ), \\
    [w^1] &= \bigcup_{C \in \mathcal{C}(x)} [w^C] = \bigcup_{C \in \mathcal{C}(x)} C \cap \zero( y ) = \zero( y ) \cap \one( x ), \\
    [w] &= [w^0] \cup [w^1] = \Delta( x, y ).
\end{align*}
Secondly, we prove that $\MIS^w( x )_{[w^1]} = 0$ while $\MIS^w( x )_{[w^0]} = 1$. Let $C \in \mathcal{C}$ and $w^C = t_1 \dots t_k$. By induction on $1 \le j \le k$, we have $\MIS^w( x )_{t_j} = \MIS^{w^C}( x )_{t_j} = 0$ since $t_j$ has a parent $t_i \in C$ which will only be updated after $t_j$. This shows that $\MIS^w( x )_{[w^1]} = 0$. Moreover, let $w^0 = v_1 \dots v_l$. Suppose $\MIS^w( x )_{v_i} = 0$, then let $z = \MIS^{w^1}( x )$ then there exists $u$ such that $u \sim v_i$ and $z_u = 1$. We derive a contradiction from a case analysis on $u$.
\begin{enumerate}
    \item Case 1: $y_u = 1$. \\
    Then $uv_i$ is an edge in $G[ \one(y) ]$, hence it is an edge in $G[ \one(x) ]$ so that $x_{v_i} = 1$, which is a contradiction.

    \item Case 2: $y_u = 0$ and $x_u = 0$. \\ 
    Then $u \notin \Delta( x,y ) = [w]$ hence $z_u = x_u = 0$, which is a contradiction.

    \item Case 3: $y_u = 0$ and $x_u = 1$. \\
    Then $u \in [w^1]$ and hence $z_u = y_u = 0$, which is the desired contradiction.
\end{enumerate}
Therefore $\MIS^w( x )_{[w^0]} = 1$.
\end{proof}

\begin{corollary} \label{corollary:universal_configuration_K}
The configuration $x$ is $\MIS( G )$-universal if and only if every $C \in \mathcal{C}(x)$ is a non-constituency of $G$.
\end{corollary}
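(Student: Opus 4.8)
The plan is to prove Corollary \ref{corollary:universal_configuration_K} as a direct consequence of the reachability characterisation in Proposition \ref{proposition:reachability_K}, combined with the equivalence relating constituencies to maximal independent sets that was stated (without proof) just before Section \ref{subsection:universal_configurations}: namely, $S$ is a constituency of $G$ if and only if there exists $y \in \MaximalIndependentSets(G)$ with $y_S = 0$. Recall that $x$ is $\MIS(G)$-universal precisely when every MIS $z \in \Fix(\MIS) = \MaximalIndependentSets(G)$ is reachable from $x$. So I would translate the universality condition through Proposition \ref{proposition:reachability_K} condition (\ref{item:reachability_K_property}) and show it collapses to the stated component condition.

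First I would fix the forward direction. Suppose some $C \in \mathcal{C}(x)$ is a constituency of $G$; I want to produce a MIS $z$ that is \emph{not} reachable from $x$, thereby showing $x$ is not universal. Since $C$ is a constituency, the equivalence gives a MIS $z$ with $z_C = 0$. But $C \in \mathcal{C}(x)$ means $C$ is a (nonempty) connected component of $G[\one(x)]$, so $z_C = 0$ violates the condition ``$y_C \ne 0$ for all $C \in \mathcal{C}(x)$'' of Proposition \ref{proposition:reachability_K}(\ref{item:reachability_K_property}); hence $z$ is not reachable from $x$, so $x$ is not universal. This establishes the contrapositive of the ``if'' direction.

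For the converse, assume every $C \in \mathcal{C}(x)$ is a non-constituency of $G$, and let $z \in \MaximalIndependentSets(G)$ be arbitrary; I must verify both clauses of Proposition \ref{proposition:reachability_K}(\ref{item:reachability_K_property}) to conclude $x \mapsto_\MIS z$. The first clause ``every edge in $G[\one(z)]$ is an edge in $G[\one(x)]$'' is immediate and actually vacuous, since $z$ is a MIS and hence $\one(z)$ is independent, so $G[\one(z)]$ has no edges at all. For the second clause, I need $z_C \ne 0$ for every $C \in \mathcal{C}(x)$. This is exactly where the non-constituency hypothesis does its work: if we had $z_C = 0$ for some component $C$, then since $z$ is a MIS with $z_C = 0$, the equivalence would force $C$ to be a constituency, contradicting the assumption. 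Hence $z_C \ne 0$ for all $C$, both conditions hold, and $z$ is reachable. As $z$ was an arbitrary MIS, $x$ is universal.

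I do not anticipate a genuine obstacle here, as the real content lives in Proposition \ref{proposition:reachability_K} and in the pre-stated constituency equivalence, both of which I am entitled to invoke. The only point requiring a little care is ensuring the constituency equivalence is applied to the component $C$ \emph{as a subset $S = C$ of $V$}, and noting that a MIS $z$ with $z_C = 0$ corresponds under $S \subseteq \Neighbourhood(I)$ to $z$ avoiding $C$ entirely — which is precisely the ``$y_S = 0$'' phrasing of the equivalence. Once that bookkeeping is clear, the corollary follows in a few lines.
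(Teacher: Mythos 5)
Your proof is correct and follows essentially the same route as the paper's: both directions are obtained by combining Proposition \ref{proposition:reachability_K} with the equivalence ``$S$ is a constituency iff some $z \in \MaximalIndependentSets(G)$ has $z_S = 0$.'' The only difference is that you spell out the detail (left implicit in the paper) that the edge condition of Proposition \ref{proposition:reachability_K} is vacuous when the target $z$ is a maximal independent set.
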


\begin{proof}
If $C \in \mathcal{C}(x)$ is a constituency of $G$, then there exists a fixed point $z \in \MaximalIndependentSets(G)$ with $z_C = 0$, which is not reachable from $x$ by Proposition \ref{proposition:reachability_K}. Conversely, if every $C \in \mathcal{C}(x)$ is a non-constituency of $G$, then for any $z \in \MaximalIndependentSets( G )$ we have $z_C \ne 0$ for all $C$, and hence $z$ is reachable from $x$.
\end{proof}

In particular, the all-zero and all-one configurations are $\MIS$-universal for all graphs.

\begin{proof}[Proof of Theorem \ref{theorem:complexity_universal_configuration_K}]
Membership of \coNP{} is known: the no-certificate is a fixed point $z \in \MaximalIndependentSets(G)$ that is not reachable from $x$; checking that certificate is by finding $C \in \mathcal{C}(x)$ such that $z_C = 0$.

We prove \coNP-hardness by reduction from \DecisionProblem{Non-Constituency}. If $(G,S)$ is an instance of \DecisionProblem{Non-Constituency} of complete type, then let $x = \CharacteristicVector( S )$ so that $\mathcal{C}( x ) = \{ S \}$. By Corollary \ref{corollary:universal_configuration_K}, $x$ is universal if and only if $S$ is a non-constituency of $G$.
\end{proof}

Another consequence of Proposition \ref{proposition:reachability_K} is that any initial configuration can reach a fixed point via a geodesic.

\begin{corollary} \label{corollary:all_reach_fixed_point}
For any configuration $x$, there exists $y \in \MaximalIndependentSets( G )$ such that $x \Geodesic_\MIS y$.
\end{corollary}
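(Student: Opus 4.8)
The plan is to exhibit a single maximal independent set $y$ that is reachable from $x$, and then to invoke the equivalence of items \ref{item:reachability_K_reachable} and \ref{item:reachability_K_geodesic} in Proposition \ref{proposition:reachability_K} to upgrade plain reachability to geodesic-reachability for free. So by Proposition \ref{proposition:reachability_K} it suffices to produce a $y \in \MaximalIndependentSets(G)$ satisfying the two conditions of item \ref{item:reachability_K_property}. The first condition — that every edge of $G[\one(y)]$ be an edge of $G[\one(x)]$ — is vacuous for any maximal independent set, since $\one(y)$ is independent and hence $G[\one(y)]$ has no edges at all. The whole problem therefore reduces to finding a maximal independent set $y$ with $y_C \ne 0$ for every $C \in \mathcal{C}(x)$; that is, a MIS meeting every connected component of $G[\one(x)]$.

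To build such a $y$, I would first record the one fact that makes the argument work: distinct components of $G[\one(x)]$ are pairwise non-adjacent in $G$. Indeed, if $u \in C$ and $u' \in C'$ with $C \ne C'$ were adjacent in $G$, then $uu'$ would be an edge of $G[\one(x)]$ joining $C$ and $C'$, contradicting that they are separate connected components. Consequently, choosing one representative vertex $c_C$ from each $C \in \mathcal{C}(x)$ yields a set $I_0 = \{ c_C : C \in \mathcal{C}(x) \}$ that is independent in $G$, since its members are drawn from pairwise non-adjacent components.

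Finally, I would extend $I_0$ to a maximal independent set $y$ greedily. Then $y \in \MaximalIndependentSets(G) = \Fix(\MIS)$ and $c_C \in y \cap C$ for each $C$, so $y_C \ne 0$ for every $C \in \mathcal{C}(x)$; both conditions of item \ref{item:reachability_K_property} hold, and Proposition \ref{proposition:reachability_K} delivers $x \Geodesic_\MIS y$. (When $\one(x) = \emptyset$ there are no components to meet, so any MIS qualifies, consistent with reaching an arbitrary MIS from the empty configuration.) There is no genuine obstacle here; the only step requiring a moment's care is the non-adjacency of distinct components, which is precisely what guarantees that $I_0$ is actually independent and hence extends to a maximal independent set.
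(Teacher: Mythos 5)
Your proposal is correct and follows essentially the same route as the paper: pick one representative vertex per connected component of $G[\one(x)]$ (these form an independent set precisely because distinct components are non-adjacent), extend to a maximal independent set, and apply Proposition \ref{proposition:reachability_K}. The extra details you supply (the vacuity of the edge condition for a MIS, and the explicit non-adjacency argument) are fine but do not change the argument.
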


\begin{proof}
Choose a vertex $v_C$ for every $C \in \mathcal{C}(x)$, then $I = \{ v_C : C \in \mathcal{C}(x) \}$ is an independent set. Let $M$ be a maximal independent set that contains $I$, then $y = \CharacteristicVector( M ) \in \MaximalIndependentSets(G)$ satisfies Property \ref{item:reachability_K_property} of Proposition \ref{proposition:reachability_K} and hence is reachable from $x$ by a geodesic.
\end{proof}

\section{Words fixing the MIS network} \label{section:fixing_words_K}

We now focus on words fixing the MIS network. As we shall prove later, every graph $G$ has a fixing word. Whether a word $w$ fixes the MIS network does not only depend on the set $[w]$ of vertices it visits. Indeed, as seen in Example \ref{example:P3} for the graph $P_3$, the word $w = abc$ does not fix $\MIS$, while $w = acb$ does fix $\MIS$. We define \DecisionProblem{Fixing Word} to be the decision problem asking, for an instance $(G, w)$, whether $w$ fixes $\MIS( G )$.

\begin{decisionproblem}
  \problemtitle{Fixing Word}
  \probleminput{A graph $G = (V,E)$ and a word $w$.}
  \problemquestion{Does $w$ fix $\MIS(G)$?}
\end{decisionproblem}

\begin{theorem} \label{theorem:complexity_fixing_word_K}
\DecisionProblem{Fixing Word} is \coNP-complete.
\end{theorem}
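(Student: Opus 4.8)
The plan is to prove membership in \coNP{} directly and then establish \coNP-hardness by a reduction from \DecisionProblem{Non-Constituency}, which is \coNP-complete by Corollary~\ref{corollary:complexity_non-constituency}. For membership, a no-certificate is a single configuration $x$ with $\MIS^w(x) \notin \Fix(\MIS(G))$: given $x$, one computes $y = \MIS^w(x)$ in time $O(|w|\cdot|V|)$ by performing the updates in order, and then checks in polynomial time whether $\one(y)$ is a maximal independent set. If it is not, the certificate is valid, so \DecisionProblem{Fixing Word} $\in \coNP$.

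For hardness, given an instance $(G,S)$ with $G=(V,E)$, I would build $G'$ from $G$ by adding a single new vertex $z$ adjacent to exactly the vertices of $S$, and set $w = \rho\rho\rho$ where $\rho$ is an arbitrary fixed permutation of $V$ (so that $w$ never updates $z$). The reduction is plainly polynomial. The crucial feature of $G'$ is that $z$ is dominated precisely by $S$, so that in any configuration with $z=0$ the vertex $z$ is uncovered exactly when the current one-set avoids $S$. The claim to prove is that $w$ fixes $\MIS(G')$ if and only if $S$ is a non-constituency of $G$.

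For the forward direction, suppose $S$ is a constituency. By the equivalence recorded in Section~\ref{subsection:MIS_network}, $G$ has a maximal independent set $M$ with $M\cap S=\emptyset$. Take $x=\CharacteristicVector(M)$ on $V$ together with $x_z=0$. Since $z$ is never updated it stays $0$, and since $M\in\MaximalIndependentSets(G)$ is a fixed point (and $z=0$ does not affect the updates of $V$), the vertices of $V$ remain at $M$; thus $\one(y)=M$, all of $\Neighbourhood(z)=S$ is $0$ in $y$, so $z$ is uncovered and $y$ is not a fixed point of $G'$. Conversely, suppose $S$ is a non-constituency and fix any configuration $x$. Since $z\notin[w]$, the value $y_z=x_z$ is frozen. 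If $x_z=0$, then $z$ never influences any update of $V$, so the run restricted to $V$ is exactly $\MIS(G)^{\rho\rho\rho}(x_V)$; as $\rho\rho$ already fixes $\MIS(G)$ by the seminal observation, $\one(y)\cap V$ is a maximal independent set of $G$, which meets $S$ because $S$ is a non-constituency, so $z$ is dominated and $y$ is a fixed point of $G'$. If $x_z=1$, then each $s\in S$ is forced to $0$ the first time it is updated and stays $0$, so during the last two passes of $\rho$ all of $S$ is constantly $0$, and the vertices of $V\setminus S$ evolve exactly as under $\MIS(G-S)$ driven by $\rho|_{V\setminus S}$ applied twice; as two passes of a permutation fix $\MIS(G-S)$, the set $\one(y)\cap(V\setminus S)$ is a maximal independent set of $G-S$, which together with $z=1$ dominating $S$ makes $y$ a fixed point of $G'$.

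The main obstacle is the case $x_z=1$: one must guarantee that $V\setminus S$ genuinely reaches a maximal independent set of $G-S$. This is exactly why the word uses three passes rather than two. The first pass may be ``corrupted'' by vertices of $S$ that start at $1$ and only get pinned to $0$ once they are updated, so that pass cannot be read as a clean execution of the MIS network on $G-S$; only the final two passes run cleanly, and two clean passes of a permutation are what the seminal observation needs to force a fixed point of $\MIS(G-S)$. Verifying that the frozen state of $z$ correctly mediates domination of $S$ in both cases, and that the induced dynamics on $V\setminus S$ really coincide with $\MIS(G-S)$ once $S$ is pinned, are the points requiring the most care.
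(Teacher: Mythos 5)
Your proposal is correct, but it takes a genuinely different route from the paper. The paper proves membership exactly as you do (the no-certificate is a configuration $x$ with $\MIS^w(x) \notin \MaximalIndependentSets(G)$), but it does not give a standalone hardness proof for \DecisionProblem{Fixing Word}: instead it defers to Theorem \ref{theorem:complexity_permis}, which shows hardness even when $w$ is required to be a \emph{permutation}, via a more elaborate gadget (adding vertices $v,a,b$ and a copy $T'$ of $T$ to the \DecisionProblem{Non-Constituency} instance, and ordering the permutation so that all vertices except $v$ are near-transitive). Your reduction instead exploits precisely the freedom that \DecisionProblem{Fixing Word} allows and \DecisionProblem{Permis} forbids: the word $\rho\rho\rho$ never updates the new vertex $z$, so $z$'s state is frozen and acts as a switch that is satisfied exactly when the final MIS of $G$ meets $S$. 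This makes your argument considerably more elementary --- the key facts needed are just that a frozen $z=0$ makes the dynamics on $V$ identical to $\MIS(G)$, that a frozen $z=1$ pins $S$ to zero after one pass so the dynamics on $V \setminus S$ become those of $\MIS(G-S)$, and the seminal observation that two passes of a permutation fix the MIS network. (In fact $\rho\rho$ already suffices: after the first pass the configuration is independent by Proposition \ref{proposition:prefix}, and a single clean pass from an independent configuration reaches a MIS; your third pass is harmless over-caution.) What each approach buys: yours is shorter and self-contained, but proves only the theorem as stated; the paper's construction is heavier but establishes the strictly stronger permutation-restricted hardness, which is needed separately for Theorem \ref{theorem:complexity_permis} and for the distributed-computing corollaries discussed in Section \ref{section:introduction}.
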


\DecisionProblem{Fixing Word} is in \coNP; the certificate being a configuration $x$ such that $\MIS^w( x ) \notin \MaximalIndependentSets( G )$. We shall prove that \DecisionProblem{Fixing Word} is \coNP-complete, even when restricted to permutations, in Section \ref{subsection:permises}. As such, we omit the proof of Theorem \ref{theorem:complexity_fixing_word_K}.

% The rest of this section is devoted to the proof of Theorem \ref{theorem:complexity_fixing_word_K}.

\subsection{Prefixing and suffixing words} \label{subsection:prefixing_suffixing}

The seminal observation is that if $G$ is a graph, and $w$ is a permutation of $V$, then $ww$ fixes $\MIS( G )$: for any initial configuration $x$, $\MIS^w( x ) \in \IndependentSets( G )$; then for any $y \in \IndependentSets( G )$, $\MIS^w( y ) \in \MaximalIndependentSets( G )$. We shall not prove this claim now, as we will prove stronger results in the sequel (see Propositions \ref{proposition:prefix} and \ref{proposition:suffix}).

Following the seminal observation above, we say that $w^\mathrm{p}$ \Define{prefixes} $\MIS( G )$ if $\MIS^{ w^\mathrm{p} }(x) \in \IndependentSets( G )$ for all $x \in \{0,1\}^V$, and that $w^\mathrm{s}$ \Define{suffixes} $\MIS( G )$ if $\MIS^{ w^\mathrm{s} }(y) \in \MaximalIndependentSets( G )$ for all $y \in \IndependentSets( G )$. In that case, for any word $\omega$, $w^\mathrm{p}\omega$ also prefixes $\MIS( G )$ and $\omega w^\mathrm{s}$ also suffixes $\MIS( G )$. Clearly, if $w = w^\mathrm{p} w^\mathrm{s}$, where $w^\mathrm{p}$ prefixes $\MIS(G)$ and $w^\mathrm{s}$ suffixes $\MIS(G)$, then $w$ fixes $\MIS(G)$. We can be more general, as shown below.

\begin{proposition} \label{proposition_prefix_and_suffix}
If $w = w_{1:l}$ where $w_{1:a}$ prefixes $\MIS(G)$, $w_{b : l}$ suffixes $\MIS( G )$, and $[w_{b : a}]$ is an independent set of $G$ for some $0 \le a, b \le l$, then $w$ fixes $\MIS( G )$.
\end{proposition}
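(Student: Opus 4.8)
The plan is to reduce the general statement to the clean split $w=w^{\mathrm p}w^{\mathrm s}$ already discussed, by showing that the overlapping part $w_{b:a}$ of the word does no harm. Fix an arbitrary initial configuration $x$ and set $y^a = \MIS^{w_{1:a}}(x)$. Since $w_{1:a}$ prefixes $\MIS(G)$, the configuration $y^a$ is independent; this is the only place the prefix hypothesis is used. The whole difficulty is then to produce an \emph{independent} configuration $z$ with $y := \MIS^{w}(x) = \MIS^{w_{b:l}}(z)$, for then the suffix hypothesis gives $y \in \MaximalIndependentSets(G)$ at once.

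A preliminary observation I would record first is that a single update preserves independence: if $z$ is independent then so is $\MIS^{v}(z)$ for every vertex $v$ (if $\MIS(z)_v = 1$ then all neighbours of $v$ are $0$ in $z$, so turning $v$ on creates no edge; if $\MIS(z)_v = 0$ the support only shrinks). Consequently $\MIS^{\omega}(z)$ is independent for every word $\omega$ whenever $z$ is. This disposes of the case $b \ge a+1$ immediately: there $y = \MIS^{w_{b:l}}(y^{b-1})$ with $y^{b-1} = \MIS^{w_{a+1:b-1}}(y^a)$ independent, so $z = y^{b-1}$ works (and for $b=a+1$ we recover exactly $w = w^{\mathrm p}w^{\mathrm s}$). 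In this regime $[w_{b:a}] = \emptyset$ and the independence hypothesis on the overlap is vacuous.

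The substantive case is $b \le a$, where the prefix and suffix genuinely overlap on $J := [w_{b:a}]$, which is independent by hypothesis. Here I would take $z = y^a$ and prove the key lemma $\MIS^{w_{b:a}}(y^a) = y^a$, i.e. that $y^a$ is a fixed point of the overlap word. Granting this, the factorisation $w_{b:l} = w_{b:a}\, w_{a+1:l}$ gives
\[
  \MIS^{w_{b:l}}(y^a) = \MIS^{w_{a+1:l}}\!\big(\MIS^{w_{b:a}}(y^a)\big) = \MIS^{w_{a+1:l}}(y^a) = \MIS^{w}(x) = y,
\]
so $z = y^a$ is the desired independent configuration and the suffix hypothesis finishes the proof.

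Proving $\MIS^{w_{b:a}}(y^a) = y^a$ is the main obstacle, and it is exactly where independence of $J$ enters. The idea is that, because $J$ is independent, no vertex of $J$ has a neighbour in $J$, so while the letters of $w_{b:a}$ are applied the neighbourhoods of $J$-vertices are never touched. For a fixed $v \in J$, its last occurrence within $w_{1:a}$ already lies among the positions $b,\dots,a$; after that occurrence only $J$-vertices are updated, so the states of all $u \sim v$ remain at their $y^a$-values, whence $y^a_v = \bigmeet_{u \sim v} \neg y^a_u = \MIS(y^a)_v$. Thus every $v \in J$ is locally stable in $y^a$, and re-running $w_{b:a}$ on $y^a$ recomputes each such vertex as $\bigmeet_{u \sim v} \neg y^a_u = y^a_v$ (its neighbours staying frozen throughout), leaving $y^a$ unchanged. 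I expect the only care needed is the bookkeeping of ``last occurrence'' positions to certify that the neighbours of $v$ are frozen at their final values; once that is set up the equality $\MIS^{w_{b:a}}(y^a)=y^a$ is immediate.
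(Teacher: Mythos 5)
Your proof is correct, and it reaches the conclusion by a genuinely different key step in the overlapping case. Both arguments use the same case split ($a<b$ versus $b\le a$) and both aim to reduce $w$ to the clean decomposition ``prefix $w_{1:a}$, then suffix $w_{b:l}$''; the difference is the mechanism. The paper works at the level of operators: using the identities $\MIS^{vv}=\MIS^{v}$ and $\MIS^{uv}=\MIS^{vu}$ for non-adjacent $u,v$ (available precisely because $[w_{b:a}]$ is independent), it duplicates every letter of the overlap and commutes one copy of the block to the right, obtaining the functional identity $\MIS^{w}=\MIS^{w_{1:a}w_{b:l}}$ on all of $\{0,1\}^V$, after which the prefix and suffix hypotheses finish. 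You instead fix a trajectory and prove the pointwise lemma $\MIS^{w_{b:a}}(y^a)=y^a$: since each $v\in[w_{b:a}]$ has its last occurrence in $w_{1:a}$ inside the overlap, and afterwards only non-neighbours of $v$ are updated, $v$ is locally stable in $y^a$, so re-running the overlap is idle. This certifies the same equality $\MIS^{w}(x)=\MIS^{w_{b:l}}(y^a)$, but as a fixed-point property of the reached configuration rather than as a word-rewriting identity; your last-occurrence bookkeeping is sound, since positions $p+1,\dots,a$ all carry letters of the independent set $[w_{b:a}]$, none of which is $v$ or a neighbour of $v$. The paper's route is shorter and isolates a reusable commutation principle; yours is more elementary and self-contained --- it avoids commutation altogether, and your preservation-of-independence observation additionally supplies a proof of the paper's unproved assertion that $w^{\mathrm{p}}\omega$ prefixes $\MIS(G)$ whenever $w^{\mathrm{p}}$ does, which both proofs rely on in the case $a<b$.
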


\begin{proof}
First, suppose $a < b$, so that $w = w_1 \dots w_a \dots w_b \dots w_l$. As mentioned above, $w^\mathrm{p} = w_{1 : b-1}$ prefixes $\MIS(G)$ and $w^\mathrm{s} = w_{b : w_l}$ suffixes $\MIS(G)$, hence $w = w^\mathrm{p} w^\mathrm{s}$ fixes $\MIS(G)$.

Second, suppose $a \ge b$, so that $w = w_1 \dots w_b \dots w_a \dots w_l$. It is easily seen that for any two non-adjacent vertices $u$ and $v$, $\MIS^{vv} = \MIS^{ v }$ and $\MIS^{uv} = \MIS^{vu}$. As such, 
\[
    \MIS^w = \MIS^{w_1 \dots w_b \dots w_a \dots w_l} =
    \MIS^{w_1 \dots w_b w_b \dots w_a w_a \dots w_l} = \MIS^{w_1 \dots w_b \dots w_a w_b \dots w_a \dots w_l},
\]
and again if we let $w^\mathrm{p} = w_{1 : a}$ and $w^\mathrm{s} = w_{b : l}$, we have $\MIS^w = \MIS^{ w^\mathrm{p} w^\mathrm{s} }$, hence $w$ fixes $\MIS( G )$.
\end{proof}

We now characterise the words that prefix (or suffix) the MIS network. Interestingly, those properties depend only on $[w]$. Also, while deciding whether a word prefixes the MIS network is computationally tractable, deciding whether a word suffixes the MIS network is computationally hard as it is based on the \DecisionProblem{Non-District} problem.

\begin{proposition} %[Words prefixing the MIS network] 
\label{proposition:prefix}
Let $G$ be a graph. Then the word $w$ prefixes $\MIS(G)$ if and only if $[w]$ is a vertex cover of $G$.
\end{proposition}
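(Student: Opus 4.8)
The plan is to prove the two implications directly, using the already-established Lemma~\ref{lemma:edge}, which states that any edge surviving in the support of the final configuration must avoid $[w]$ entirely. Recall that $[w]$ being a vertex cover of $G$ means exactly that every edge of $G$ has at least one endpoint in $[w]$, equivalently that $V \setminus [w]$ is independent.

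For the direction ``$[w]$ is a vertex cover $\Rightarrow$ $w$ prefixes $\MIS(G)$'', I would fix an arbitrary configuration $x$, set $y = \MIS^w(x)$, and argue by contradiction that $\one(y)$ is independent. If it were not, then $G[\one(y)]$ would contain some edge $uv$; Lemma~\ref{lemma:edge} then forces $[w] \cap \{u,v\} = \emptyset$. But $uv \in E$ and $[w]$ is a vertex cover, so $[w]$ must contain $u$ or $v$, a contradiction. Hence $\MIS^w(x) \in \IndependentSets(G)$ for every $x$, which is precisely the statement that $w$ prefixes $\MIS(G)$.

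For the converse I would argue the contrapositive. Suppose $[w]$ is not a vertex cover, so there exists an edge $uv \in E$ with $u, v \notin [w]$. The key observation is that a vertex never visited by $w$ retains its initial state throughout the update sequence, since each $\MIS^{w_j}$ alters only the coordinate $w_j$. Thus, taking $x = \CharacteristicVector(\{u,v\})$ (indeed any $x$ with $x_u = x_v = 1$ works), we get $\MIS^w(x)_u = x_u = 1$ and $\MIS^w(x)_v = x_v = 1$. Then $\one(\MIS^w(x))$ contains the adjacent pair $\{u,v\}$ and so is not independent, witnessing that $w$ does not prefix $\MIS(G)$.

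This argument is short and presents no genuine obstacle; the only points requiring care are the clean invocation of Lemma~\ref{lemma:edge} in the forward direction and the explicit verification that unvisited coordinates are invariant under $\MIS^w$ in the converse. As a closing remark I would note that the construction confirms the claim made just before the statement, namely that whether $w$ prefixes $\MIS(G)$ depends only on the set $[w]$ and not on the order or multiplicities of its letters.
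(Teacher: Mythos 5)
Your proposal is correct and follows essentially the same route as the paper: the forward direction is the paper's argument (the paper inlines the ``last update in $\{u,v\}$'' reasoning, which is exactly the content of Lemma~\ref{lemma:edge} that you cite instead), and the converse uses the identical witness $x = \CharacteristicVector(\{u,v\})$ with unvisited coordinates left invariant.
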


\begin{proof}
Suppose $[w]$ is a vertex cover of $G$ and that $y = \MIS^w( x ) \notin \IndependentSets( G )$, i.e. $y_{uv} = 11$ for some edge $uv$ of $G$. Without loss, let the last update in $\{u,v\}$ be $v$, i.e. there exists $a$ such that $w_{a+1} = v$ and $w_b \notin \{ u, v \}$ for all $b > a + 1$. We obtain $y^a_u = y_u = 1$ hence $y_v = y^{a+1}_v = 0$, which is the desired contradiction.

Conversely, if $[w]$ is not a vertex cover, then there is an edge $uv \in E$ such that $[w] \cap \{u,v\} = \emptyset$. Therefore, if we take $x=\CharacteristicVector(\{u,v\})$ then $x_{uv} = 11$ and we have $y_{uv} = 11$ as well.
\end{proof}

\begin{decisionproblem}
  \problemtitle{Prefixing Word}
  \probleminput{A graph $G = (V,E)$ and a word $w$.}
  \problemquestion{Does $w$ prefix $\MIS(G)$?}
\end{decisionproblem}

\begin{corollary} \label{corollary:complexity_prefix}
\DecisionProblem{Prefixing Word} is in \P.
% Given a graph $G$ and a word $w$, deciding whether $w$ prefixes $\MIS(G)$ is in \P.
\end{corollary}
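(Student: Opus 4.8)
The plan is to invoke Proposition \ref{proposition:prefix}, which reduces the question entirely to a verification task on a single, explicitly determined set of vertices. By that proposition, $w$ prefixes $\MIS(G)$ if and only if $[w]$ is a vertex cover of $G$, so it suffices to exhibit a polynomial-time algorithm that, given the instance $(G, w)$, decides whether $[w]$ is a vertex cover of $G$.

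First I would compute the visited set $[w] = \{ u \in V : \exists j,\ w_j = u \}$ by a single scan of the word $w$, marking each vertex as it is encountered; this takes time linear in the length of $w$ plus $|V|$. Then I would pass once over the edge set $E$ and check that every edge $uv \in E$ has at least one endpoint in $[w]$ — equivalently, that $V \setminus [w]$ contains no edge and is therefore independent. With the marked-vertex table from the first step, each membership query is answered in constant time, so the whole verification runs in $O(|V| + |E|)$.

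Since both computing $[w]$ and checking the vertex-cover condition are polynomial in the size of the input $(G, w)$, the problem lies in \P. There is no genuine obstacle here: all of the combinatorial content is absorbed by the equivalence established in Proposition \ref{proposition:prefix}, and what remains is the elementary observation that \emph{verifying} a prescribed vertex cover (as opposed to \emph{finding} a minimum one, which is the \NP-hard task) is tractable. This also explains the contrast with the suffixing case, where the corresponding characterisation rests on \DecisionProblem{Non-District} and hence yields hardness rather than membership in \P.
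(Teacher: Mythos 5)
Your proof is correct and matches the paper's approach: the paper states this corollary as an immediate consequence of Proposition \ref{proposition:prefix}, with the (unstated) elementary observation that checking whether $[w]$ is a vertex cover takes polynomial time, exactly as you spell out.
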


\begin{proposition}
% [Words suffixing the MIS network] 
\label{proposition:suffix}
Let $G$ be a graph. Then the word $w$ suffixes $\MIS(G)$ if and only if $[w]$ is a non-district of $G$.
\end{proposition}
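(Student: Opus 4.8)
The plan is to drive both implications with a single monotonicity observation about independent initial configurations. If $y$ is independent, then along the run $y = y^0, y^1, \dots, y^l = z$ the support only grows: the configuration stays independent (an update can switch a vertex to $1$ only when all of its neighbours are $0$), and whenever a vertex is $1$ at some step all its neighbours are $0$, so a later update leaves it at $1$ — once on, always on. In particular $\MIS^w(y)$ is always independent, so $w$ suffixes $\MIS(G)$ if and only if $\MIS^w(y)$ is a \emph{maximal} independent set for every independent $y$, i.e. if and only if no final configuration has a \emph{free} vertex $v$ (one with $z_{\Neighbourhood[v]} = 0$). I will repeatedly use two consequences: a free vertex $v$ is off throughout, hence is never updated, so $v \notin [w]$; and if $u \in [w]$ is off at the end then $u$ was off throughout, so at its last update some neighbour was on, and by monotonicity that neighbour is still on at the end.

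To prove `$[w]$ is a non-district $\Rightarrow$ $w$ suffixes' I would argue the contrapositive and, starting from a bad run, manufacture a district. Choose independent $y$ with $z = \MIS^w(y) \notin \MaximalIndependentSets(G)$ and a free vertex $v$; by the above $v \notin [w]$. Since $v$ stays off, holding it at $0$ makes every other vertex's update rule coincide with its rule in $G - v$, so $z$ restricted to $V \setminus \{v\}$ is exactly the outcome of the same word run on $G - v$, and $I := \one(z) \setminus \{v\}$ is an independent set of $G - v$ disjoint from $\Neighbourhood(v)$. The key claim is that $I$ dominates $[w] \cap \Neighbourhood(v)$ in $G - v$: each such neighbour $u$ lies in $[w]$ and is off at the end, hence by the second consequence it has a neighbour on at the end, and that neighbour is not $v$ (which is off), so it lies in $I$. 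Thus $[w] \cap \Neighbourhood(v)$ is a constituency of $G - v$ and $[w]$ is a district.

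For the reverse implication `$w$ suffixes $\Rightarrow$ $[w]$ is a non-district', again contrapositively, I would take a district witness, a vertex $v \notin [w]$ and an independent set $I$ of $G - v$ with $[w] \cap \Neighbourhood(v) \subseteq \Neighbourhood(I; G - v)$, and build an independent $y$ whose run leaves $v$ free. The natural choice is to switch on $I$ and leave $v$ and all of its neighbours off. By monotonicity the vertices of $I$ remain on for the whole run, so each neighbour of $v$ in $[w]$ is dominated by a permanently-on vertex of $I$ and is never switched on; the neighbours of $v$ outside $[w]$ are never touched, and $v \notin [w]$ is never touched either. Hence $z_{\Neighbourhood[v]} = 0$, the vertex $v$ is free, and $\MIS^w(y)$ fails to be maximal.

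The hard part will be exactly this last construction. Because monotonicity forces anything switched on to stay on, a single on-neighbour of $v$ destroys freeness, so I need the starting on-set — in particular $I$ — to avoid the whole of $\Neighbourhood(v)$, not merely $v$ itself. The forward direction delivers such a set for free, since there $I = \one(z) \setminus \{v\}$ is automatically disjoint from $\Neighbourhood(v)$; but in the reverse direction I must verify that the district condition really does provide a dominating independent set disjoint from $\Neighbourhood(v)$. Pinning this down — reconciling `constituency of $G - v$' with the stronger requirement that the dominator avoid every neighbour of $v$ — is the crux of the argument, and is where I would concentrate the care.
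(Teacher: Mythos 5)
Your second implication (a failed run manufactures a district) is correct and is essentially the paper's own argument for that direction: monotonicity from an independent start, the observation that a free vertex $v$ satisfies $v \notin [w]$, and the witness $I = \one(z)$, which is independent, avoids $\Neighbourhood[v]$, and dominates $[w] \cap \Neighbourhood(v)$ inside $G - v$. (The paper takes $I = \one(z) \cap \Neighbourhood(W)$, which is the same idea; you are in fact more careful than the paper, which never verifies $v \notin [w]$.)

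The gap is the other implication, ``$[w]$ is a district $\Rightarrow$ $w$ does not suffix'', which you leave open --- and your instinct about where the trouble lies is so accurate that the step you postpone cannot be carried out at all. With the definition of district exactly as stated (the dominating set is merely required to be an independent set of $G - v$), that implication is false. Take $G = K_3$ with vertices $v, b, c$ and $w = b$. Then $[w] = \{b\}$ is a district: $\{b\} \cap \Neighbourhood(v) = \{b\}$ is a constituency of $G - v$, dominated by the independent set $I = \{c\}$. Yet $w$ suffixes $\MIS(K_3)$: from each of the four independent configurations, updating $b$ yields a singleton, which is a maximal independent set of $K_3$. The obstruction is precisely the one you flagged: here $I$ meets $\Neighbourhood(v)$, and no independent set avoiding $\Neighbourhood[v]$ dominates $b$ (indeed $\Neighbourhood[v]$ is the whole vertex set), so no starting configuration leaves $v$ free. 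Hence there is no way to ``reconcile'' a constituency of $G - v$ with a dominator avoiding $\Neighbourhood(v)$; they are genuinely different conditions. The equivalence is true only under the stronger reading of district in which the witness independent set $I$ is additionally required to satisfy $I \cap \Neighbourhood[v] = \emptyset$ --- and note that this stronger condition is exactly what your (and the paper's) other direction produces, since $\one(z)$ automatically avoids $\Neighbourhood[v]$. Under that reading, your construction ($x = \CharacteristicVector(I)$, monotonicity keeps $I$ on, so $[w] \cap \Neighbourhood(v)$ stays off, the untouched neighbours of $v$ stay at $0$, and $v$ ends up free) goes through verbatim. You should also know that the paper's published proof of this direction commits exactly the leap you were worried about: having shown $y_u = 0$ for all $u \in [w] \cap \Neighbourhood(v)$, it concludes ``thus $y_{\Neighbourhood[v]} = 0$'', silently ignoring the neighbours of $v$ outside $[w]$, which keep their initial value and equal $1$ whenever they lie in $I$. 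So your proposal is incomplete, but the incompleteness exposes a real imprecision in the statement rather than a deficiency in your plan.
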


\begin{proof}
Suppose $[w]$ is a district of $G$, i.e. there exists an independent set $I$ and a vertex $v \notin [w]$ such that $W = [w] \cap \Neighbourhood(v)$ is dominated by $I$. Let $x = \CharacteristicVector( I )$ (in particular, $x_v=0$), and let $y = \MIS^w(x)$. Then for any $u \in W$, $u$ has a neighbour in $I$, hence $y_u = 0$; thus $y_{\Neighbourhood[v]} = 0$ and $w$ does not suffix $\MIS$.

Conversely, suppose $w$ does not suffix $\MIS(G)$, i.e. there exists $x \in \IndependentSets(G)$ and $v \in V$ such that $y = \MIS^w(x)$ with $y_{ \Neighbourhood[ v ] } = 0$. By Proposition \ref{proposition:reachability_K}, $y \in \IndependentSets( G )$ and $y \ge x$, hence $x_{ \Neighbourhood[ v ] } = 0$. Let $W = [w] \cap \Neighbourhood(v)$ and $I = \one( y ) \cap \Neighbourhood(W)$; we note that $I$ is an independent set. For each $u \in W$, we have $y_u = 0$ hence there exists $i \in V$ such that $y_i = 1$ and $u \in \Neighbourhood(i)$, and hence $i \in I$. Therefore, $W \subseteq \Neighbourhood(I)$ and $W$ is a constituency of $G - v$, i.e. $[w]$ is a district of $G$.
\end{proof}

\begin{decisionproblem}
  \problemtitle{Suffixing Word}
  \probleminput{A graph $G = (V,E)$ and a word $w$.}
  \problemquestion{Does $w$ suffix $\MIS(G)$?}
\end{decisionproblem}

\begin{corollary}
\DecisionProblem{Suffixing Word} is \coNP-complete.
\end{corollary}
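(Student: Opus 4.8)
The plan is to derive both membership and hardness directly from Proposition~\ref{proposition:suffix}, which already reduces the question of whether $w$ suffixes $\MIS(G)$ to the purely combinatorial question of whether $[w]$ is a non-district of $G$.

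For membership in \coNP, I would observe that an instance $(G,w)$ is a ``no''-instance precisely when $w$ fails to suffix $\MIS(G)$, i.e.\ (by Proposition~\ref{proposition:suffix}) when $[w]$ is a \emph{district} of $G$. A witness for this is a vertex $v \notin [w]$ together with an independent set $I \subseteq V \setminus \{ v \}$ such that $[w] \cap \Neighbourhood(v) \subseteq \Neighbourhood(I)$; verifying that $I$ is independent, that $v \notin [w]$, and that every vertex of $[w] \cap \Neighbourhood(v)$ has a neighbour in $I$ is clearly doable in polynomial time. Hence the ``no''-instances admit short certificates and \DecisionProblem{Suffixing Word} lies in \coNP.

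For \coNP-hardness, I would reduce from \DecisionProblem{Non-District}, which is \coNP-complete by Corollary~\ref{corollary:complexity_non-district}. Given an instance $(G,T)$ of \DecisionProblem{Non-District}, the reduction simply outputs the instance $(G,w)$ of \DecisionProblem{Suffixing Word}, where $w$ is any word enumerating the elements of $T$ in arbitrary order, so that $[w] = T$; this is obviously computable in polynomial time. By Proposition~\ref{proposition:suffix}, $w$ suffixes $\MIS(G)$ if and only if $[w] = T$ is a non-district of $G$, so yes-instances are sent to yes-instances and no-instances to no-instances. This establishes \coNP-hardness and, together with membership, completes the proof.

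There is no real obstacle here: both directions are immediate consequences of the characterisation in Proposition~\ref{proposition:suffix} combined with the complexity of \DecisionProblem{Non-District}. The only points requiring a little care are to state the certificate for the correct (``no'') side of the problem, and to note that the reduction is essentially the identity on the graph, mapping $T$ to any word $w$ with $[w] = T$.
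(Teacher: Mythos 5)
Your proof is correct and follows essentially the same route as the paper: the paper's proof simply states that the corollary ``immediately follows from Theorem~\ref{theorem:complexity_district}'', which is exactly your argument spelled out --- Proposition~\ref{proposition:suffix} makes \DecisionProblem{Suffixing Word} equivalent to \DecisionProblem{Non-District} (a district witness $(v,I)$ giving the \coNP{} certificate, and the map $T \mapsto w$ with $[w]=T$ giving the hardness reduction). Your write-up just makes explicit the details the paper leaves implicit.
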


\begin{proof}
This immediately follows from Theorem \ref{theorem:complexity_district}.
\end{proof}

\subsection{Fixing sets} \label{subsection:fixing_sets}

Some graphs have fixing words that do not visit all vertices. For instance, if $G = K_n$ is the complete graph with vertices $v_1, \dots, v_n$, then it is easily shown that $w = v_1 \dots v_{n-1}$ is a fixing word for the MIS network. In general, we say that a set $S$ of vertices of $G$ is a \Define{fixing set} of $G$ if there exists a word $w$ with $[w] = S$ that fixes $\MIS(G)$. 

We first characterise the fixing sets of graphs. Interestingly, those are the same sets $S$ such that $ww$ is a fixing word of $\MIS(G)$ for any permutation $w$ of $S$.

\begin{proposition} \label{proposition:fixing_sets}
Let $S$ be a subset of vertices of $G$. The following are equivalent.
\begin{enumerate}
    \item \label{item:word_fixing}
    $S$ is a fixing set of $\MIS( G )$, i.e. there exists a fixing word $w$ of $\MIS(G)$ with $[w] = S$.

    \item \label{item:words_prefix_suffix}
    For all words $w^\mathrm{p}$, $w^\mathrm{s}$ such that $[ w^\mathrm{p} ] = [ w^\mathrm{s} ] = S$, the word $w^\mathrm{p} w^\mathrm{s}$ fixes $\MIS( G )$.

    \item \label{item:subset}
    $S$ is a vertex cover and a non-district.
\end{enumerate}
\end{proposition}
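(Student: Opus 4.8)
The plan is to establish the cyclic chain of implications $(\ref{item:words_prefix_suffix})\Rightarrow(\ref{item:word_fixing})\Rightarrow(\ref{item:subset})\Rightarrow(\ref{item:words_prefix_suffix})$, leaning almost entirely on the two characterisations already proved: Proposition \ref{proposition:prefix} (a word prefixes $\MIS(G)$ iff its visited set is a vertex cover) and Proposition \ref{proposition:suffix} (a word suffixes $\MIS(G)$ iff its visited set is a non-district), together with the elementary remark preceding Proposition \ref{proposition_prefix_and_suffix} that the concatenation of a prefixing word and a suffixing word fixes $\MIS(G)$. The implication $(\ref{item:words_prefix_suffix})\Rightarrow(\ref{item:word_fixing})$ is immediate: fixing any single permutation $w$ of $S$ and taking $w^\mathrm{p}=w^\mathrm{s}=w$, hypothesis $(\ref{item:words_prefix_suffix})$ says $ww$ fixes $\MIS(G)$, and $[ww]=S$, so $S$ is a fixing set.

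For $(\ref{item:subset})\Rightarrow(\ref{item:words_prefix_suffix})$, suppose $S$ is both a vertex cover and a non-district, and let $w^\mathrm{p},w^\mathrm{s}$ be arbitrary words with $[w^\mathrm{p}]=[w^\mathrm{s}]=S$. Since $[w^\mathrm{p}]=S$ is a vertex cover, Proposition \ref{proposition:prefix} gives that $w^\mathrm{p}$ prefixes $\MIS(G)$; since $[w^\mathrm{s}]=S$ is a non-district, Proposition \ref{proposition:suffix} gives that $w^\mathrm{s}$ suffixes $\MIS(G)$; hence $w^\mathrm{p}w^\mathrm{s}$ fixes $\MIS(G)$ by the remark above. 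The delicate direction is $(\ref{item:word_fixing})\Rightarrow(\ref{item:subset})$, and the key observation I would isolate first is that \emph{any} fixing word $w$ is automatically both prefixing and suffixing. Indeed, $\MIS^w(x)\in\MaximalIndependentSets(G)\subseteq\IndependentSets(G)$ for every configuration $x$ shows $w$ prefixes; and the same membership restricted to $x\in\IndependentSets(G)$ shows $w$ suffixes. Given a fixing word $w$ with $[w]=S$, Proposition \ref{proposition:prefix} then forces $S=[w]$ to be a vertex cover and Proposition \ref{proposition:suffix} forces it to be a non-district, which is exactly $(\ref{item:subset})$.

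The step I expect to carry the real content is this last observation that fixing implies both prefixing and suffixing; although it reduces to the set inclusion $\MaximalIndependentSets(G)\subseteq\IndependentSets(G)$ and a restriction of the universal quantifier to independent-set configurations, it is the hinge that lets the two one-sided characterisations of Propositions \ref{proposition:prefix} and \ref{proposition:suffix} combine into a two-sided statement about fixing sets. Everything else is bookkeeping: once the three directions are in place the equivalence closes, and the additional content of $(\ref{item:words_prefix_suffix})$ — that \emph{every} pair of words visiting $S$ works, not merely some word — is delivered for free because Propositions \ref{proposition:prefix} and \ref{proposition:suffix} depend only on the visited set $[w]$ and not on the order of the letters.
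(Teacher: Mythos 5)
Your proof is correct and follows essentially the same route as the paper: the same cyclic chain of implications, with $(\ref{item:subset})\Rightarrow(\ref{item:words_prefix_suffix})$ via Propositions \ref{proposition:prefix} and \ref{proposition:suffix}, and $(\ref{item:word_fixing})\Rightarrow(\ref{item:subset})$ resting on the observation that a fixing word both prefixes and suffixes $\MIS(G)$ (which the paper uses implicitly and you spell out, since $\MaximalIndependentSets(G)\subseteq\IndependentSets(G)$ and the suffixing condition only quantifies over independent-set configurations). No gaps; your explicit treatment of that hinge step is if anything slightly more careful than the paper's.
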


\begin{proof}
$\ref{item:word_fixing} \implies \ref{item:subset}$. Since $w$ prefixes $\MIS( G )$, $S = [w]$ is a vertex cover by Proposition \ref{proposition:prefix}; similarly, since $w$ suffixes $\MIS( G )$, $S = [w]$ is a non-district by Proposition \ref{proposition:suffix}. 

$\ref{item:subset} \implies \ref{item:words_prefix_suffix}$. Since $S$ is a vertex cover, then by Proposition \ref{proposition:prefix} $w^\mathrm{p}$ prefixes $\MIS( G )$; similarly, by Proposition \ref{proposition:suffix} $w^\mathrm{s}$ suffixes $\MIS( G )$. Therefore, $w^\mathrm{p} w^\mathrm{s}$ fixes $\MIS( G )$.

$\ref{item:words_prefix_suffix} \implies \ref{item:word_fixing}$. Trivial.
\end{proof}

The \DecisionProblem{Fixing Set} problem asks, given a graph $G$ and a set of vertices $S$, if $S$ is a fixing set of $G$. In other words, it asks whether the vertices outside of $S$ can be skipped by some fixing word.

\begin{decisionproblem}
  \problemtitle{Fixing Set}
  \probleminput{A graph $G = (V,E)$ and a set $S \subseteq V$.}
  \problemquestion{Is $S$ a fixing set of $\MIS(G)$?}
\end{decisionproblem}

\begin{theorem} \label{theorem:complexity_fixing_set}
\DecisionProblem{Fixing Set} is \coNP-complete.
\end{theorem}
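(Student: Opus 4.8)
By Proposition \ref{proposition:fixing_sets}, a set $S$ is a fixing set of $\MIS(G)$ if and only if $S$ is simultaneously a vertex cover and a non-district of $G$. This characterisation immediately yields membership in $\coNP$: a set $S$ fails to be a fixing set precisely when it is not a vertex cover or it is a district, and in either case there is a short, polynomially checkable no-certificate — an edge $uv$ with $\{u,v\} \cap S = \emptyset$ in the first case, and a pair $(v, I)$ with $v \notin S$, $I$ an independent set of $G - v$, and $S \cap \Neighbourhood(v) \subseteq \Neighbourhood(I)$ in the second.

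For $\coNP$-hardness I would reduce from \DecisionProblem{Non-Constituency}, which is $\coNP$-complete by Corollary \ref{corollary:complexity_non-constituency}. Given an instance $(G = (V,E), S)$, construct $\hat{G} = (\hat{V}, \hat{E})$ by adding a single new vertex $\hat{v} \notin V$ adjacent to exactly the vertices of $S$, i.e. $\hat{V} = V \cup \{ \hat{v} \}$ and $\hat{E} = E \cup \{ s\hat{v} : s \in S \}$. The \DecisionProblem{Fixing Set} instance is then $(\hat{G}, V)$; note that the candidate fixing set is the whole \emph{original} vertex set $V$, which is a proper subset of $\hat{V}$. No canonical (complete or empty type) form is needed, so the reduction applies to arbitrary \DecisionProblem{Non-Constituency} instances.

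The claim is that $S$ is a non-constituency of $G$ if and only if $V$ is a fixing set of $\MIS(\hat{G})$. First, $V$ is always a vertex cover of $\hat{G}$, since the only vertex outside $V$ is $\hat{v}$, and every edge incident to $\hat{v}$ has its other endpoint in $S \subseteq V$. Second, $\hat{v}$ is the unique vertex of $\hat{V} \setminus V$, so $V$ is a district of $\hat{G}$ if and only if $V \cap \Neighbourhood(\hat{v}; \hat{G}) = S$ is a constituency of $\hat{G} - \hat{v} = G$; equivalently, $V$ is a non-district of $\hat{G}$ exactly when $S$ is a non-constituency of $G$. Combining these two observations with Proposition \ref{proposition:fixing_sets} gives that $V$ is a fixing set of $\MIS(\hat{G})$ iff $S$ is a non-constituency of $G$, completing the reduction.

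The main point to get right is the interaction between the two requirements defining a fixing set: a naive reduction from \DecisionProblem{Non-District} would have to separately enforce that the target set is a vertex cover, which is awkward because the complement of the set must then be made independent without disturbing the district structure. Taking the candidate set to be the entire original vertex set $V$ sidesteps this difficulty — the vertex-cover condition is discharged for free by the construction, while adding the single apex $\hat{v}$ adjacent to $S$ leaves exactly one potential district-witness, whose $S$-neighbourhood is precisely the set whose constituency we wish to test. I expect checking that no spurious witnesses arise to be the only place requiring care, and here it is immediate since $\hat{v}$ is the unique vertex outside $V$.
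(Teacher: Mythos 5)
Your proof is correct, but your hardness reduction takes a genuinely different route from the paper's. For membership, the paper also invokes Proposition \ref{proposition:fixing_sets}, though its no-certificate is dynamical (a permutation $w$ of $S$ together with a configuration $x$ such that $\MIS^{ww}(x) \notin \MaximalIndependentSets(G)$) where yours is structural (an uncovered edge, or a district witness $(v,I)$); both are sound. For hardness, the paper reduces from \DecisionProblem{Non-District}: given $(G,S)$ with $T = V \setminus S$, it builds the disjoint union of $|T|$ modified copies of $G$, one per $t \in T$ (replacing $t$ by a new vertex $\hat t$ adjacent only to the copies of $S \cap \Neighbourhood(t;G)$), and takes $\hat S$ to be all vertices except the $\hat t$'s; this makes $\hat S$ a vertex cover while retaining one copy per potential district witness $t$. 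You instead reduce one level further down the paper's own chain, directly from \DecisionProblem{Non-Constituency}, by adding a single apex $\hat v$ adjacent to $S$ and taking the candidate set to be the entire original vertex set $V$. Your key observation --- that with exactly one vertex outside the candidate set, the vertex-cover condition holds automatically and $\hat v$ is the unique possible district witness, with $V \cap \Neighbourhood(\hat v; \hat G) = S$ and $\hat G - \hat v = G$ --- checks out, including the degenerate case $S = \emptyset$ (there $\emptyset$ is trivially a constituency, and consistently the isolated $\hat v$ can never be fixed by a word avoiding it). Your construction is the simpler of the two: it is essentially the apex gadget from the paper's \DecisionProblem{District}-hardness proof (Theorem \ref{theorem:complexity_district}) with the pendant-vertex machinery stripped away, since choosing a candidate set of co-size one eliminates precisely the spurious witnesses those pendants were introduced to kill; as a bonus, it shows \DecisionProblem{Fixing Set} remains \coNP-hard even when the candidate set omits a single vertex, which dovetails with the paper's Theorem \ref{theorem:complexity_non-trivial_fixing_set}. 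What the paper's more elaborate route buys is the intermediate result that \DecisionProblem{District} itself is \NP-complete, which it reuses elsewhere (e.g.\ for \DecisionProblem{Suffixing Word}), whereas your argument bypasses districts as a standalone problem.
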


\begin{proof}
Membership of \coNP{} is known: the no-certificate is a permutation $w$ of $S$ and an initial configuration $x \in \{0,1\}^V$ such that $\MIS^{ww}(x) \notin \MaximalIndependentSets(G)$ (by Proposition \ref{proposition:fixing_sets}).

The hardness proof is by reduction from \DecisionProblem{Non-District}, which is \coNP-complete, as proved in Theorem \ref{theorem:complexity_district}. Let $(G, S)$ be an instance of \DecisionProblem{Non-District}, and construct the instance $( \hat{G}, \hat{S} )$ of \DecisionProblem{Fixing Set} as follows. 

Let $G = (V,E)$ and $T = V \setminus S$. For any $t \in T$, let $G_t = (V_t \cup \{ \hat{t} \}, E_t)$ be the graph defined as follows: $V_t = \{ u_t : u \in V \setminus t \}$  is a copy of all the vertices apart from $t$, which is replaced by a new vertex $\hat{t} \notin V_t$, and $E_t = \{ a_tb_t : ab \in E, a,b \ne t \} \cup \{ s_t \hat{t} : st \in E, s \in S \}$ is obtained by removing the edges between $\hat{t}$ and the rest of $T$. Then $G$ is the disjoint union of all those graphs, i.e. $G = \bigcup_{t \in T} G_t$, while $\hat{S} = \bigcup_{t \in T} V_t$. For the sake of simplicity, we shall use the notation $A_t = \{ u_t : u \in A \}$ for all $A \subseteq V \setminus \{ t \}$. Our construction is illustrated in Figure \ref{fig:complexity_fixing_set}.

By construction, $\hat{G} - \hat{S}$ is the empty graph on $\{ \hat{t} : t \in T \}$, hence $\hat{S}$ is a vertex cover of $\hat{G}$. All we need to show is that $\hat{S}$ is a non-district of $\hat{G}$ if and only if $S$ is a non-district of $G$. We have that $\hat{S}$ is a district of $\hat{G}$ if and only if there exists $\hat{t}$ and an independent set $\hat{I}$ of $\hat{G} - \hat{t}$ such that $W = \hat{S} \cap \Neighbourhood( \hat{t}; \hat{G} ) = ( S \cap \Neighbourhood( t; G ) )_t$ is contained in $\Neighbourhood( \hat{I} ; \hat{G} )$. We have $\hat{I} \cap V_t = I_t$ for some independent set $I$ of $G$. Since $W \subseteq V_t$ and $\Neighbourhood( W; \hat{G} - \hat{t} ) \subseteq V_t$, we have $W \subseteq \Neighbourhood( \hat{I} \cap V_t; \hat{G} ) \cap V_t = \Neighbourhood( I; G )_t$, which is equivalent to $S$ being a district of $G$.
\end{proof}

\begin{figure}
    \centering
    \includegraphics[page=3, width=\textwidth]{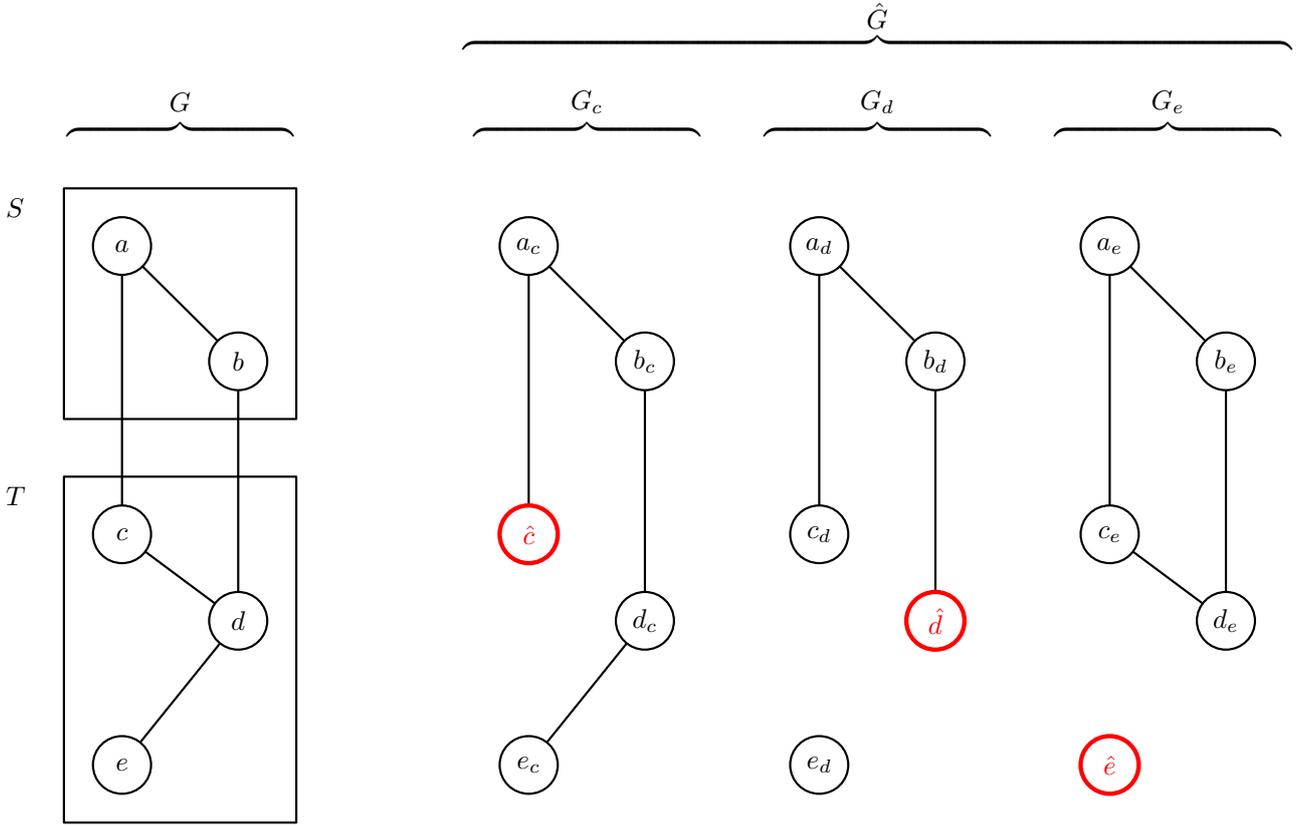}
    \caption{Example reduction from a no-instance of \DecisionProblem{Non-District} $(G,S)$ to the corresponding no-instance of \DecisionProblem{Fixing Set} $(\hat{G}, \hat{S})$, with $\hat{S} = V_c \cup V_d \cup V_e$.}
    \label{fig:complexity_fixing_set}
\end{figure}

Clearly, if $S$ is a fixing set of $\MIS(G)$, then every superset of $S$ is also a fixing set. Moreover, every graph $G$ has a trivial fixing set, namely $V$. The \DecisionProblem{Non-Trivial Fixing Set} asks whether $G$ has any other fixing set. Equivalently, it asks whether any vertex can be skipped by a fixing word.

\begin{decisionproblem}
    \problemtitle{Non-Trivial Fixing Set}
    \probleminput{A graph $G$.}
    \problemquestion{Does there exist a fixing set $S \ne V$ of $G$?}
\end{decisionproblem}

\begin{theorem} \label{theorem:complexity_non-trivial_fixing_set}
    \DecisionProblem{Non-Trivial Fixing Set} is \coNP-complete.
\end{theorem}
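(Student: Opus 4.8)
The plan is to prove that \DecisionProblem{Non-Trivial Fixing Set} is \coNP-complete by establishing membership in \coNP{} and then reducing from a known \coNP-complete problem. For the key reduction, I expect the natural source to be \DecisionProblem{Non-Trivial Non-District} (Theorem \ref{theorem:complexity_non-trivial_non-district}), since by Proposition \ref{proposition:fixing_sets} a set $S$ is a fixing set exactly when it is both a vertex cover and a non-district, and the non-trivial versions of both problems ask the same structural question (can a single vertex be dropped). The challenge is that fixing-set-ness carries the extra vertex-cover requirement, which the bare non-district problem does not track.

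First I would argue membership in \coNP. By the remark following the definition, any superset of a fixing set is a fixing set, so $G$ has a non-trivial fixing set if and only if $V \setminus \{v\}$ is a fixing set for some $v \in V$. Thus the no-certificate is, for every vertex $v$, a witness that $V \setminus \{v\}$ is not a fixing set: by Proposition \ref{proposition:fixing_sets} this means $V \setminus \{v\}$ fails to be a vertex cover or fails to be a non-district. Failing to be a vertex cover is trivial to certify (an uncovered edge, which can only be $v$ adjacent to an isolated-in-$G-v$ vertex), and failing the non-district condition is certified by an independent set $I_v$ of $G - v$ with $\Neighbourhood(v) \subseteq \Neighbourhood(I_v)$, exactly as in the proof of Theorem \ref{theorem:complexity_non-trivial_non-district}. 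Verifying the whole collection is polynomial.

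For hardness I would reduce from \DecisionProblem{Non-Trivial Non-District}. Given an input graph $G$, the obstacle is that a non-trivial non-district of $G$ need not be a vertex cover, so I cannot simply output $G$. The fix is to force every candidate set $V \setminus \{v\}$ to automatically be a vertex cover while preserving the non-district structure around each $v$. Following the style of the reduction in Theorem \ref{theorem:complexity_fixing_set}, I would attach to $G$ a gadget that makes each deleted vertex $v$ correspond to a pendant-style vertex of degree one whose unique neighbour still carries the original neighbourhood information. Concretely, one clean construction is to take a separate copy $G_v$ of $G$ for each $v \in V$, where in the copy $G_v$ the vertex $v$ is replaced by a fresh pendant vertex $\hat{v}$ adjacent only to the (copies of the) original neighbours of $v$ that lie in the relevant target region. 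Then $\hat{G} = \bigcup_v G_v$ and the only vertices we ever delete are the $\hat{v}$; deleting any $\hat{v}$ leaves a graph in which the remaining vertex set is a vertex cover (since $\hat{G} - \{\hat{v}\}$ has no isolated vertices other than possibly $\hat{v}$'s isolated twin, handled by the copy structure). This mirrors the per-vertex copy trick already used successfully for \DecisionProblem{Fixing Set}.

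The heart of the argument is then the equivalence: $\hat{G}$ has a non-trivial fixing set if and only if some $V \setminus \{\hat{v}\}$ is a fixing set, which by Proposition \ref{proposition:fixing_sets} and the built-in vertex-cover property reduces to $V \setminus \{\hat{v}\}$ being a non-district, which by the copy construction is equivalent to $V \setminus \{v\}$ being a non-district of $G$. Hence $\hat{G}$ is a yes-instance of \DecisionProblem{Non-Trivial Fixing Set} precisely when $G$ is a yes-instance of \DecisionProblem{Non-Trivial Non-District}. The main obstacle I anticipate is bookkeeping the vertex-cover condition: I must ensure that dropping one of the gadget vertices never accidentally uncovers an edge and never creates spurious isolated vertices that would make some other $V \setminus \{w\}$ trivially a non-vertex-cover (and hence not a fixing set) in a way that breaks the equivalence. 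Carefully designing the pendant gadget so that exactly the $\hat{v}$ vertices are droppable, and that each such drop faithfully encodes the district test for $v$ in $G$, is where the real work lies; the rest follows from the characterisation in Proposition \ref{proposition:fixing_sets} together with the established hardness of \DecisionProblem{Non-Trivial Non-District}.
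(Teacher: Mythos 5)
Your \coNP-membership argument is fine, but your hardness reduction rests on a false premise, and the key fact you are missing is exactly the one-line observation that makes the paper's proof trivial: \emph{for any vertex $v$, the set $V \setminus \{v\}$ is automatically a vertex cover} (the graph is irreflexive, so every edge has an endpoint different from $v$). You claim ``the obstacle is that a non-trivial non-district of $G$ need not be a vertex cover, so I cannot simply output $G$.'' This is a red herring: you do not need the given non-district itself to be a vertex cover. Since supersets of non-districts are non-districts, if $G$ has any non-district $S \ne V$, then $V \setminus \{v\}$ is a non-district for any $v \notin S$; being also a vertex cover, it is a fixing set by Proposition \ref{proposition:fixing_sets}, and it is $\ne V$. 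Conversely, a non-trivial fixing set is, again by Proposition \ref{proposition:fixing_sets}, a non-trivial non-district. So $G$ has a non-trivial fixing set if and only if $G$ has a non-trivial non-district: the two problems are the \emph{same language}, the identity map is the reduction, and \coNP-completeness follows immediately from Theorem \ref{theorem:complexity_non-trivial_non-district}. (The same observation shows the confused parenthetical in your membership argument --- an edge left uncovered by $V \setminus \{v\}$ --- describes a situation that can never occur.)

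Because you miss this, your proposed gadget construction is both unnecessary and, as written, incomplete. Concretely: (i) \DecisionProblem{Non-Trivial Non-District} has no distinguished set $S$ in its input, so your phrase ``the original neighbours of $v$ that lie in the relevant target region'' is undefined --- the per-vertex copy trick of Theorem \ref{theorem:complexity_fixing_set} uses the input set $S$ to decide which edges at $\hat{t}$ to keep, and there is nothing to play that role here; (ii) \DecisionProblem{Non-Trivial Fixing Set} quantifies over \emph{all} vertices of $\hat{G}$, so you must prove that $\hat{V} \setminus \{u\}$ fails to be a fixing set for every non-gadget vertex $u$ (compare the case analysis over $a \ne \hat{v}$ in the proof of Theorem \ref{theorem:complexity_non-trivial_non-district}); you assert ``the only vertices we ever delete are the $\hat{v}$'' but that is a property to be established, not a choice you get to make. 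You yourself concede that designing the gadget ``is where the real work lies,'' so the hardness direction is not actually proven in your proposal. The fix is not a better gadget; it is the observation above, which removes the need for any gadget at all.
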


\begin{proof}
We prove that $G$ has a non-trivial fixing set if and only if it has a non-trivial non-district. If $G$ has a non-trivial fixing set, then there exists $S \ne V$ which is a vertex cover and a non-district of $G$, hence $S$ is a non-trivial non-district of $G$. Conversely, if $G$ has a non-trivial non-district, then there exists $v$ such that $S = V \setminus \{ v \}$ is a non-district of $G$, in which case $S$ is also a vertex cover, and hence a non-trivial fixing set of $G$.

The \coNP-completeness of \DecisionProblem{Non-Trivial Fixing Set} then follows Theorem \ref{theorem:complexity_non-trivial_non-district}. The connection between the two problems is illustrated in Figure \ref{fig:non-trivial_fixing_set}.
\end{proof}

\begin{figure}
    \centering
    \includegraphics[page=17,width=.25\textwidth]{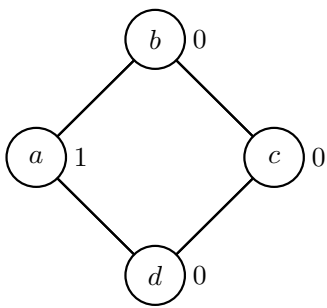}
    \caption{$C_4$ is a no-instance of the \DecisionProblem{Non-Trivial Non-District} problem, and hence also a no-instance of \DecisionProblem{Non-Trivial Fixing Set}. For any word $w$ with $[w]=\{b,c,d\}$, $\MIS^w(1000)=1000\notin \MaximalIndependentSets( C_4 )$. By symmetry, no set of three vertices is a fixing set for $\MIS(C_4)$.}
    \label{fig:non-trivial_fixing_set}
\end{figure}

% We now finalise the proof of Theorem \ref{theorem:complexity_fixing_word_K}.

% \begin{proof}[Proof of Theorem \ref{theorem:complexity_fixing_word_K}]
% \DecisionProblem{Fixing Word} is in \coNP; the certificate being a configuration $x$ such that $\MIS^w( x ) \notin \MaximalIndependentSets( G )$. The proof of hardness is by reduction from \DecisionProblem{Fixing Set}, which is \coNP-complete, as shown in Theorem \ref{theorem:complexity_fixing_set}. Let $(G, S)$ be an instance of \DecisionProblem{Fixing Set}, then consider the instance $(G, w = \omega \omega)$ of \DecisionProblem{Fixing Word}, where $\omega$ is a permutation of $S$. Then Proposition \ref{proposition:fixing_sets} shows that $w$ fixes $\MIS( G )$ if and only if $S$ is a vertex cover and a non-district.
% \end{proof}

\subsection{Permises} \label{subsection:permises}
% \section{The \DecisionProblem{Permis} problem} \label{section:permis}

The MIS algorithm doesn't use any word $w$ to update the state of each vertex, but instead restricts itself to $w$ being a permutation of $V$. 
% As such, in this section, we consider fixing the permutation to some $w$ and varying the initial configuration $x$; we want to find a permutation that guarantees that we always obtain a maximal independent set $y \in \MaximalIndependentSets( G )$. 
As such, we now focus on permutations and we call a permutation of $V$ that fixes $\MIS(G)$ a \Define{permis} of $G$. The \DecisionProblem{Permis} decision problem is equivalent to the \DecisionProblem{Fixing Word} problem, restricted to permutations.

\begin{decisionproblem}
    \problemtitle{Permis}
    \probleminput{An undirected graph $G = (V,E)$ and a permutation $w$ of $V$.}
    \problemquestion{Is $w$ a permis of $G$?}
\end{decisionproblem}

Let $w$ be a permutation of $V$, then $w$ naturally induces a linear order on $V$, whereby $w_i \succ w_j$ whenever $i < j$, i.e. $w_i$ is updated before $w_j$. Then consider the orientation of $G$ induced by $w$: $G^w = (V, E^w)$ with $E^w = \{ (u, v) : uv \in E, u \succ v \}$. We see that $G^w$ is acyclic, and that conversely any acyclic orientation of $G$ is given by some $G^w$. A simple application of \cite[Theorem 1]{AGMS09} shows that if $w, w'$ are two permutations of $V$ such that $G^w = G^{w'}$, then $w$ is a permis if and only if $w'$ is a permis.

% \begin{proposition}
% Let $w, w'$ be two permutations of $V$ such that $G^w = G^{w'}$. Then $w$ is a permis if and only if $w'$ is a permis.
% \end{proposition}

We say that the vertex $v$ is \Define{covered} by $w$ if for every $x \in \{ 0, 1 \}^V$, $y_{\Neighbourhood[ v ]} \ne 0$, where $y = \MIS^w(x)$. Thus, $w$ is a permis if and only if $w$ covers all vertices.

\begin{decisionproblem}
  \problemtitle{Covered Vertex}
  \probleminput{A graph $G = (V,E)$, a permutation $w$ of $V$ and a vertex $v \in V$.}
  \problemquestion{Is $v$ covered by $w$?}
\end{decisionproblem}

We now give a sufficient condition for a vertex to be covered. Let $G$ be a graph, $H$ be an orientation of $G$, and let $t$ and $v$ be vertices of $G$. We say $t$ is \Define{transitive} if for all $a, b \in V$, $t \to a \to b$ implies $t \to b$ in $G^w$. We say $v$ is \Define{near-transitive} if there exists a transitive vertex $t$ such that $\Neighbourhood[ t; G ] \subseteq \Neighbourhood[ v; G ]$.

\begin{lemma} \label{lemma:near-transitive_covered}
If $v$ is a near-transitive vertex of $G^w$, then $v$ is covered by $w$.
\end{lemma}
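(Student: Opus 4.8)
The plan is to reduce to the case of a transitive vertex, and then to exploit the fact that, because $w$ is a permutation, every vertex is updated exactly once. For the reduction, note that if $t$ is a transitive vertex witnessing that $v$ is near-transitive, so that $\Neighbourhood[t; G] \subseteq \Neighbourhood[v; G]$, then for any $x$ and $y = \MIS^w(x)$ the inequality $y_{\Neighbourhood[v]} \ne 0$ follows immediately from $y_{\Neighbourhood[t]} \ne 0$. Hence it suffices to show that a transitive vertex $t$ is itself covered.

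The key structural observation is a closed form for the final configuration when $w$ is a permutation. Since each vertex $z$ is updated exactly once, its state equals $x_z$ at all times strictly before its update and equals $y_z$ at all times from its update onwards. Evaluating the update of $z$ against the current states of its neighbours — which are final for the in-neighbours of $z$ in $G^w$ and initial for its out-neighbours — yields
\[
    y_z = \bigmeet_{q \to z} \neg y_q \meet \bigmeet_{z \to q} \neg x_q,
\]
where $\to$ denotes the arcs of $G^w$. In particular, $y_z = 1$ forces $x_q = 0$ for every out-neighbour $q$ of $z$.

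For the main argument I would assume for contradiction that $y_{\Neighbourhood[t]} = 0$, and write $N^+ = \OutNeighbourhood(t; G^w)$ for the out-neighbours of $t$; by transitivity $N^+$ is closed under taking out-neighbours in $G^w$ (if $t \to a \to b$ then $t \to b$, so $b \in N^+$). Applying the displayed identity to $t$, the equality $y_t = 0$ cannot be caused by an in-neighbour $q \to t$ with $y_q = 1$, since such a $q$ lies in $\Neighbourhood[t]$; hence it is caused by some out-neighbour, giving $A := \{ u \in N^+ : x_u = 1 \} \ne \emptyset$. Let $u^*$ be the last-updated vertex of $A$; since $u^* \in \Neighbourhood[t]$ we have $y_{u^*} = 0$. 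Again by the identity, $y_{u^*} = 0$ is caused either by an in-neighbour $q \to u^*$ with $y_q = 1$ — impossible, since then $x_{u^*} = 0$ because $u^*$ is an out-neighbour of $q$, contradicting $u^* \in A$ — or by an out-neighbour $u^* \to q$ with $x_q = 1$; but transitivity places $q$ in $N^+$, hence in $A$, while $u^* \to q$ means $q$ is updated after $u^*$, contradicting the choice of $u^*$. This contradiction shows $y_{\Neighbourhood[t]} \ne 0$.

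The main obstacle, and the crux of the argument, is the in-neighbour case: a priori the ``blame'' for a vertex of $\Neighbourhood[t]$ ending at $0$ might propagate to a vertex outside $\Neighbourhood[t]$, where transitivity gives no control. The permutation identity resolves this, since $y_q = 1$ forces $x = 0$ at every out-neighbour of $q$, which is incompatible with membership in $A$; transitivity then confines the out-neighbour case to $N^+$, so that taking the last-updated element of $A$ produces the contradiction directly, with no induction required.
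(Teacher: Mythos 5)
Your proof is correct and takes essentially the same approach as the paper's: the reduction to a transitive vertex $t$, the permutation identity (which the paper states in the equivalent form: if $y_{\InNeighbourhood[u; G^w]} = 0$ then some $u' \in \OutNeighbourhood(u; G^w)$ has $x_{u'} = 1$), the extremal choice of the last-updated out-neighbour of $t$ with initial state $1$, and the transitivity-based contradiction all coincide with the paper's argument. The only cosmetic difference is that you handle the in-neighbour case by contraposition ($y_q = 1$ forces $x_{u^*} = 0$), whereas the paper argues the same fact in the other direction ($x_a = 1$ forces $y_q = 0$ for every in-neighbour $q$ of $a$ in $G^w$).
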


\begin{proof}
First, suppose $v = t$ is transitive. Let $x$ such that $y_{ \Neighbourhood[ t; G ] } = 0$. We shall repeatedly use the fact that for any vertex $u$, if $y_{ \InNeighbourhood[ u; G^w ] } = 0$, then there exists $u' \in \OutNeighbourhood( u; G^w ) \cap \one(x)$, i.e. $x_{u'} = 1$ and $u \to u'$ ($u$ is updated after $u$). 

Since $y_{ \InNeighbourhood[ t; G^w ] } = 0$, there exists $a \in \OutNeighbourhood( t; G^w ) \cap \one(x)$. Without loss let $a$ be the last vertex of this kind to be updated: if $a' \ne a$ satisfies $a' \in \OutNeighbourhood( t; G^w ) \cap \one( x )$, then $a' \to a$. Again, since $x_a = 1$, we have $y^{ \InNeighbourhood( a; G^w ) } = 0$; and since $y_a = 0$ as well, there exists $b \in \OutNeighbourhood( a; G^w ) \cap \one( x )$. Since $t \to a \to b$ and by transitivity of $t$, we obtain $t \to b$, but then $b \in \OutNeighbourhood( t; G^w ) \cap \one( x )$ and hence $b \to a$, which is the desired contradiction.

Second, suppose that $t$ is transitive (and hence, as shown above, covered) and that $\Neighbourhood[ t; G ] \subseteq \Neighbourhood[ v; G ]$. For all $x$ we have $y_{ \Neighbourhood[ t; G ] } \ne 0$, and hence $y_{ \Neighbourhood[ v; G ] } \ne 0$, thus $v$ is also covered by $w$.
\end{proof}

\begin{theorem} \label{theorem:complexity_permis}
\DecisionProblem{Permis} is \coNP-complete.
\end{theorem}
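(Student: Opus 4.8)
The plan is to prove \coNP-completeness of \DecisionProblem{Permis} in two parts: membership and hardness. Membership in \coNP{} is immediate, since the no-certificate is an initial configuration $x$ such that $\MIS^w(x) \notin \MaximalIndependentSets(G)$; verifying that $\MIS^w(x)$ is not a maximal independent set is polynomial. For hardness, the natural strategy is a reduction from \DecisionProblem{Non-Constituency} (or one of its complete-type variants), which is \coNP-complete by Corollary \ref{corollary:complexity_non-constituency}, since the whole apparatus of constituencies has been set up precisely to drive these reductions. The goal is to build, from an instance $(G,S)$ of \DecisionProblem{Non-Constituency}, a graph $\hat{G}$ together with a specific permutation $w$ of $\hat{V}$ such that $w$ is a permis of $\hat{G}$ if and only if $S$ is a non-constituency of $G$.

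The key technical device will be the notion of a covered vertex and Lemma \ref{lemma:near-transitive_covered}. First I would recall that $w$ is a permis if and only if $w$ covers every vertex, so it suffices to arrange the construction so that all vertices \emph{except one designated vertex} $v^\star$ are guaranteed to be covered regardless of the instance, while $v^\star$ is covered if and only if $S$ is a non-constituency. To force the ``all other vertices are covered'' part cheaply, I would engineer the gadget and the order induced by $w$ so that every other vertex is near-transitive in the orientation $G^w$ (e.g. by attaching transitive vertices, such as simplicial vertices updated last, dominating the relevant neighbourhoods); Lemma \ref{lemma:near-transitive_covered} then discharges those vertices automatically. The heart of the reduction is then to relate the coverage of the single vertex $v^\star$ to the constituency status of $S$: making $v^\star$ uncovered should correspond exactly to the existence of an independent set $I$ dominating $S$, i.e. a witness that $S$ is a constituency. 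Concretely, I expect the construction to place $v^\star$ so that its closed neighbourhood can be driven to $0$ by starting from $\CharacteristicVector(I)$ for a dominating independent set $I$ of $S$, mirroring the proof of Proposition \ref{proposition:suffix} where districts obstruct suffixing.

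The main obstacle, and the step deserving the most care, is designing the gadget so that the permutation $w$ is \emph{fixed in advance} (the \DecisionProblem{Permis} problem takes the permutation as part of the input), yet the only way coverage can fail is through the intended constituency witness. In the \DecisionProblem{Fixing Set} reduction one had freedom over the word; here one must hard-code a single order and ensure no ``accidental'' uncovered vertex arises from the chosen order, nor any spurious way to uncover $v^\star$ that does not correspond to a genuine independent dominating set of $S$. I would therefore spend the effort verifying two directions: (i) if $S$ is a constituency via independent set $I$, exhibit an explicit starting configuration (based on $I$) for which $\MIS^w$ leaves $\Neighbourhood[v^\star]$ entirely at $0$, so $w$ is not a permis; and (ii) if $w$ is not a permis, show that the only vertex that can be uncovered is $v^\star$ (using near-transitivity of all others), and that any configuration witnessing this yields, via the reachability characterisation in Proposition \ref{proposition:reachability_K}, an independent set of $G$ dominating $S$. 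Establishing direction (ii) rigorously---ruling out every other uncovered vertex and extracting a clean independent set from the final configuration---is where the bulk of the work lies.

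Finally, because the statement also asserts (via the surrounding text) that \DecisionProblem{Fixing Word} is \coNP-complete even when restricted to permutations, I would note that this reduction simultaneously settles Theorem \ref{theorem:complexity_fixing_word_K}, since a permis is exactly a fixing word that happens to be a permutation; thus the hardness transfers upward to the unrestricted \DecisionProblem{Fixing Word} problem with no additional argument.
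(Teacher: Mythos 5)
Your plan is essentially the paper's own proof: the paper also reduces from \DecisionProblem{Non-Constituency} (empty type), builds a gadget in which all vertices except one designated vertex are transitive or near-transitive (pendant copies $T'$ of $T$, the vertices of $S$ as sinks, and a short path $v$--$a$--$b$, with update order $v \succ a \succ b \succ T \succ T' \succ S$), invokes Lemma \ref{lemma:near-transitive_covered} to discharge them, and shows the remaining vertex $v$ is covered if and only if $S$ is a non-constituency, with the two directions argued exactly as in your steps (i) and (ii). The only content you leave open --- the explicit gadget and the verification that no spurious uncovered vertex arises --- is precisely what the paper's construction supplies, and your closing observation that this simultaneously settles Theorem \ref{theorem:complexity_fixing_word_K} matches the paper's stated intent.
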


\begin{proof}
Membership of \coNP{} is known: the no-certificate is a configuration $x$ such that $y = \MIS^w( x ) \notin \MaximalIndependentSets(G)$.

The hardness proof is by reduction from \DecisionProblem{Non-Constituency}, which is \coNP-complete by Corollary \ref{corollary:complexity_non-constituency}. Let $(G,S)$ be an instance of \DecisionProblem{Non-Constituency} of empty type and construct the instance $( \hat{G}, w )$ of \DecisionProblem{Permis} as follows. Let $T = V \setminus S$ and $T' = \{ t' : t \in  T \}$ be a copy of $T$. Then let $\hat{G}$ be the graph with vertex set $\hat{V} = \{ v, a, b \} \cup V \cup T'$, and with edges $\hat{E} = E \cup \{ sv : s \in S \} \cup \{ va, ab \} \cup \{ tt' : t \in T \}$. Let $w$ be a permutation of $\hat{V}$ such that $v \succ a \succ b \succ T \succ T' \succ S$. This is illustrated in Figure \ref{figure:complexity_permis}.

We claim that $w$ is a permis of $\hat{G}$ if and only if $S$ is not a constituency of $G$. Firstly, the vertices in $S \cup T' \cup \{ b \}$ are all transitive and hence the vertices in $T \cup \{ a \}$ are near-transitive. Therefore, $w$ is a permis if and only if $v$ is covered. We prove that $v$ is covered if and only if $S$ is not a constituency of $G$.

If $S$ is a constituency of $G$, then let $I \subseteq T$ be a maximal independent set of $G$ (and hence an independent set of $\hat{G}$ as well) such that $S \subseteq \Neighbourhood(I)$.
% Choose any configuration $x$ such that $(x_a, x_b, x_S, x_I, x_J, x_{T'}) = (1, 1, 0, 1, 0, 0)$. 
Let $x = \CharacteristicVector( I \cup \{ a, b \} )$.
Then $y_v = 0$ (because $x_a = 1)$, $y_a = 0$ (because $x_b = 1$), $y_I = 1$ and $y_S = 0$ (because for any vertex $u$, if $x_u = 1$ and $x_{ \Neighbourhood(u) } = 0$, then $y_u = 1$ and $y_{ \Neighbourhood(u) } = 0$). Thus $y_{ \Neighbourhood[v] } = 0$.

Conversely, if $y_{ \Neighbourhood[v] } = 0$, then for any $s \in S$, $y_{ \Neighbourhood(s) } \ne 0$. Since $y_v = 0$, there is $t \in T$ such that $ts \in E$ and $y_t = 1$. Therefore, the set $\one( y ) \cap T$ is an independent set that dominates $S$, i.e. $S$ is a constituency of $G$.
\end{proof}

\begin{figure}
    \centering
    \includegraphics[page=6,width=0.8\textwidth]{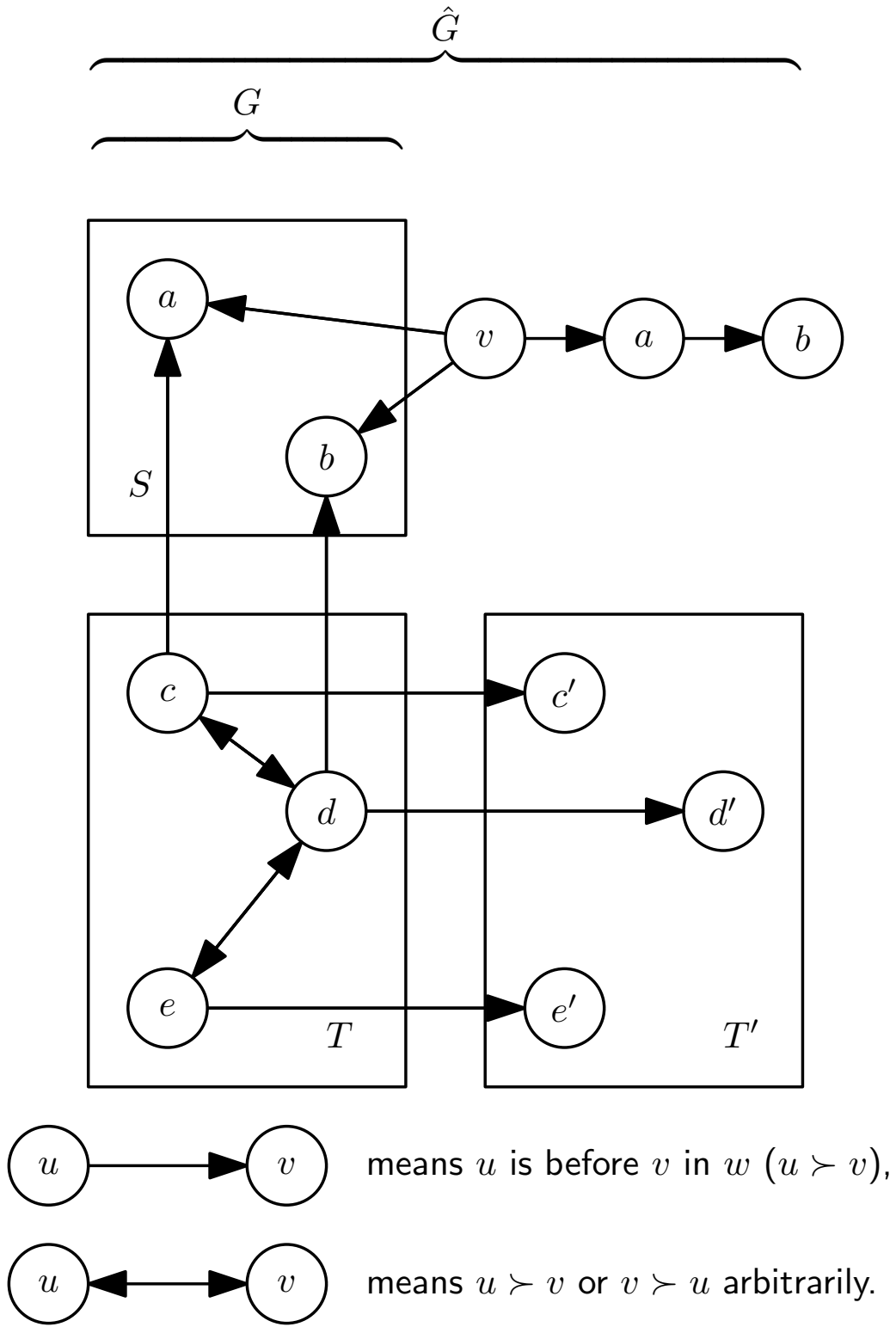}
    \caption{Illustration of the reduction from \DecisionProblem{Non-Constituency} to \DecisionProblem{Permis}.}
    \label{figure:complexity_permis}
\end{figure}

The proof of Theorem \ref{theorem:complexity_permis} also settles the complexity of \DecisionProblem{Covered Vertex}.

\begin{theorem} \label{theorem:complexity_covered_vertex}
\DecisionProblem{Covered Vertex} is \coNP-complete.
\end{theorem}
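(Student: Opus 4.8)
The plan is to observe that essentially no new work is needed beyond the proof of Theorem \ref{theorem:complexity_permis}, since that reduction was built around a covered-vertex equivalence all along. First I would establish membership in \coNP: a no-certificate is simply a configuration $x$ such that $\MIS^w(x)_{\Neighbourhood[v]} = 0$. This certificate is verifiable in polynomial time, since one can simulate the sequential update $y = \MIS^w(x)$ entry by entry along $w$ (each update touches only one vertex and reads its neighbourhood) and then check whether $y_u = 0$ for every $u \in \Neighbourhood[v]$. Thus \DecisionProblem{Covered Vertex} lies in \coNP.

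For \coNP-hardness, I would reduce from \DecisionProblem{Non-Constituency} (which is \coNP-complete by Corollary \ref{corollary:complexity_non-constituency}) using exactly the construction from the proof of Theorem \ref{theorem:complexity_permis}. Given an instance $(G,S)$ of empty type, one builds $\hat{G}$ on $\hat{V} = \{v,a,b\} \cup V \cup T'$ with the same edge set $\hat{E}$ and the same permutation $w$ ordered so that $v \succ a \succ b \succ T \succ T' \succ S$. The output of the reduction is now the triple $(\hat{G}, w, v)$, with $v$ being the distinguished apex vertex. The key point is that the central claim already proved in Theorem \ref{theorem:complexity_permis} is precisely the equivalence we need: \emph{$v$ is covered by $w$ if and only if $S$ is not a constituency of $G$}. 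That portion of the argument (choosing $x = \CharacteristicVector(I \cup \{a,b\})$ to force $y_{\Neighbourhood[v]} = 0$ when $S$ is a constituency, and conversely extracting a dominating independent set $\one(y)\cap T$ from any witness that $v$ is uncovered) never invoked the near-transitivity of the other vertices and can therefore be quoted directly.

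Consequently the reduction is immediate: $(G,S)$ is a \emph{yes}-instance of \DecisionProblem{Non-Constituency} exactly when $v$ is covered by $w$ in $\hat{G}$, i.e. exactly when $(\hat{G}, w, v)$ is a \emph{yes}-instance of \DecisionProblem{Covered Vertex}. Since the construction is clearly computable in polynomial time, \coNP-hardness follows, and combined with membership this gives \coNP-completeness.

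I do not expect any genuine obstacle here; the only thing to be careful about is making explicit that the equivalence ``$v$ covered $\iff$ $S$ non-constituency'' is a self-contained conclusion of the earlier proof rather than merely an auxiliary step toward the permis conclusion. Once that is flagged, the argument reduces to restating the reduction and noting that the near-transitivity observations (which upgraded covering of $v$ to fixing of the whole permutation) are simply not invoked for the \DecisionProblem{Covered Vertex} statement.
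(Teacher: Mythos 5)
Your proposal is correct and matches the paper exactly: the paper's proof of Theorem~\ref{theorem:complexity_covered_vertex} consists precisely of observing that the reduction in Theorem~\ref{theorem:complexity_permis} already establishes the self-contained equivalence ``$v$ is covered by $w$ if and only if $S$ is not a constituency of $G$'', which is the same reuse you describe. Your added details (the \coNP{} membership certificate and the explicit output triple $(\hat{G}, w, v)$) are fine and simply make explicit what the paper leaves implicit.
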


\section{Permissible and non-permissible graphs} \label{section:permissible_non-permissible}

We say that $G$ is \Define{permissible} if it has a permis. As we shall see, not all graphs are permissible. In this subsection, we exhibit permissible and non-permissible graphs, and we prove that deciding whether a graph is permissible is computationally hard.

% \todo[inline]{@DK complete below}

We classified (non-)permissible graphs by computer search, using \verb|nauty|'s \verb|geng| utility (\url{https://doi.org/10.1016/j.jsc.2013.09.003}) to exhaustively generate connected graphs up to 9 vertices. Full results are available at \url{https://github.com/dave-ck/MISMax/}. Here are some highlights. 
Of 273194 connected graphs on at most nine vertices, only 432 are non-permissible; the heptagon $C_7$, 13 8-vertex graphs (including the perfect graph shown in Figure \ref{figure:permisless_perfect_graph}), and 418 9-vertex graphs. The Petersen graph is also non-permissible. 

% \todo[inline]{@MG Non-permissible perfect graph - figure OK? Paragraph proving graph is perfect necessary/sufficient?}

We prove the graph in Figure \ref{figure:permisless_perfect_graph} is perfect as follows. First note that four vertices in the graph have degree $3$ and four vertices in the graph have degree $5$. The absence of an induced $C_5$ can be verified manually. There is no induced $C_7$: any subgraph on seven vertices includes at least one vertex formerly of degree $5$ hence of degree at least $4$ in the induced subgraph. Similarly, there is no induced $\overline{C_7}$; any subgraph on seven vertices includes at least one vertex formerly of degree $3$ and hence of degree at most $3$ in the induced subgraph ($\overline{C_7}$ is $4$-regular). 

\begin{figure}
    \centering
    \includegraphics[page=15,width=0.3\textwidth]{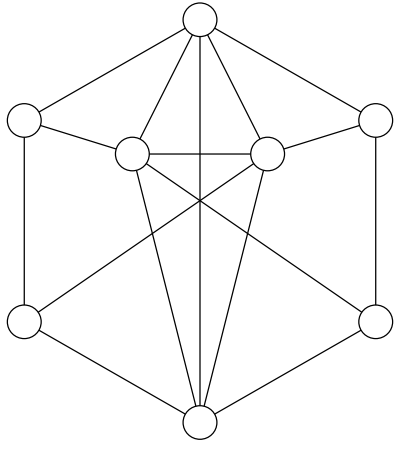}
    \caption{An 8-vertex perfect graph with no permis.}
    \label{figure:permisless_perfect_graph}
\end{figure}

% \begin{proposition} \label{proposition:small_graphs}
% If $G$ has at most seven vertices, and is not the heptagon $C_7$, then it is permissible.
% \end{proposition}

\subsection{Permissible graphs} \label{subsection:permissible_graphs}

We now exhibit large classes of permissible graphs.

An orientation of $G$ is \Define{transitive} (\Define{near-transitive}, respectively) if all the vertices are transitive (near-transitive, respectively). Any transitive orientation is necessarily acyclic. A graph that admits a transitive orientation is called a \Define{comparability graph}. The following are comparability graphs: complete graphs, bipartite graphs, permutation graphs, cographs, and interval graphs. Accordingly, we say that a graph that admits a near-transitive orientation is a \Define{near-comparability graph}. Lemma \ref{lemma:near-transitive_covered} immediately yields the permissibility of near-comparability graphs.

\begin{proposition} \label{proposition:near-comparability_permissible}
All near-comparability graphs are permissible.
\end{proposition}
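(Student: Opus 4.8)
The plan is to apply Lemma~\ref{lemma:near-transitive_covered} uniformly and then invoke the characterisation, recalled just before the \DecisionProblem{Covered Vertex} problem, that a permutation $w$ is a permis of $G$ if and only if $w$ covers every vertex. Thus it suffices to exhibit a single permutation $w$ under which all vertices are covered. By definition, a near-comparability graph admits a near-transitive orientation $H$, i.e. an orientation in which every vertex is near-transitive; and since every acyclic orientation of $G$ is of the form $G^w$ for some permutation $w$, the first step is to write $H = G^w$.

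With $w$ in hand, the argument closes at once. Each vertex $v$ is near-transitive in $G^w = H$, so Lemma~\ref{lemma:near-transitive_covered} guarantees that $v$ is covered by $w$. As this holds for every $v \in V$, the permutation $w$ covers all vertices; by the equivalence above, $w$ is therefore a permis, so $G$ is permissible. No case analysis or auxiliary construction is needed, which is precisely the sense in which permissibility follows \emph{immediately} from Lemma~\ref{lemma:near-transitive_covered}; I would write the proof essentially in these two steps.

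The only point deserving attention — and hence the main (if modest) obstacle — is the passage $H = G^w$, which presupposes that $H$ is acyclic. In the present setting this is unproblematic: the notions of transitive and near-transitive vertex are defined relative to orientations $G^w$ induced by permutations, and such orientations are acyclic, so a near-transitive orientation is by construction of the form $G^w$; I would state this explicitly at the outset. It is worth flagging that near-transitivity alone does \emph{not} force acyclicity if arbitrary orientations are permitted — for instance, orienting $K_4$ on $\{t,v_0,v_1,v_2\}$ by a directed triangle $v_0\to v_1\to v_2\to v_0$ together with $t\to v_0,t\to v_1,t\to v_2$ makes $t$ transitive and every $v_i$ near-transitive via $t$, yet the orientation is cyclic. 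So if one wished to define near-comparability via unrestricted orientations, the genuine work would be to replace such an $H$ by an acyclic near-transitive orientation; a natural attempt is to pick, for each vertex, a witness of minimal closed neighbourhood (a benjamin of $G$) and to order the vertices so that these witnesses are updated first, but verifying that this ordering keeps the witnesses transitive is exactly where the difficulty would concentrate.
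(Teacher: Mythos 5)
Your core argument is exactly the paper's: take a near-transitive acyclic orientation $G^w$, apply Lemma~\ref{lemma:near-transitive_covered} to every vertex, and conclude via the covered-vertex characterisation that $w$ is a permis. The problem is the step you declare ``unproblematic''. Under the paper's definitions, a near-transitive orientation is \emph{not} by construction of the form $G^w$: transitivity is defined for an arbitrary orientation $H$ of $G$ (the stray ``$G^w$'' in that definition is a slip), and the decisive evidence is Theorem~\ref{theorem:near-comparability} itself, which lists ``admits a near-transitive orientation'' and ``admits a near-transitive acyclic orientation'' as two \emph{separate} equivalent conditions and proves the implication $\ref{item:near-comparability} \implies \ref{item:near-transitive_acyclic}$ nontrivially --- under your reading that implication would be vacuous. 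Your own $K_4$ example (a cyclic near-transitive orientation) shows the distinction is real; so the issue you relegate to a hypothetical ``if one wished to define near-comparability via unrestricted orientations'' is in fact the actual setting, and your main proof inherits precisely that gap.

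The repair is supplied by the paper's proof of $\ref{item:near-comparability} \implies \ref{item:near-transitive_acyclic}$, and it goes in the opposite direction from your sketch. Let $T$ be the set of transitive vertices of $H$; since a transitive vertex on a cycle $t \to v_1 \to \dots \to v_k \to t$ would force $t \to v_k$, the set $T$ meets no cycle, so $H[T]$ is acyclic. One then builds an acyclic $G^w$ with $G^w[T] = H[T]$ and $V \setminus T \succ T$, i.e.\ the transitive witnesses are updated \emph{last}: every $t \in T$ then has all its out-neighbours inside $T$, so its transitivity reduces to that in $H[T]$ and is preserved, and every vertex keeps its near-transitivity witness. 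Your proposal to update the witnesses \emph{first} fails: an early witness $t$ acquires out-edges into arbitrarily oriented territory, where $t \to a \to b$ with $b \notin \Neighbourhood[t]$ can occur, destroying transitivity. Moreover a benjamin of $G$ need not be transitive in $H$ at all --- the link between benjamins and transitive orientations is item~\ref{item:benjamins_comparability} of Theorem~\ref{theorem:near-comparability} and requires the twin-quotient machinery of that proof, not just a choice of minimal closed neighbourhoods.
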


We now give a characterisation of near-comparability graphs below. Recall that the vertex $m$ is a benjamin of $G$ if there is no vertex $v$ with $\InNeighbourhood[ v ] \subset \InNeighbourhood[ m ]$ and that $G_\Benjamins$ is the subgraph of $G$ induced by its benjamins.

\begin{theorem} \label{theorem:near-comparability}
Let $G$ be a graph. The following are equivalent:
\begin{enumerate}
    \item \label{item:near-comparability}
    $G$ is a near-comparability graph, i.e. it admits a near-transitive orientation;

    \item \label{item:near-transitive_acyclic}
    $G$ admits a near-transitive acyclic orientation;

    \item \label{item:benjamins_comparability}
    $G_\Benjamins$ is a comparability graph.
\end{enumerate}
\end{theorem}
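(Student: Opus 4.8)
The plan is to prove the cycle of implications $(2) \Rightarrow (1) \Rightarrow (3) \Rightarrow (2)$, with $(2)\Rightarrow(1)$ being immediate since an acyclic near-transitive orientation is in particular a near-transitive orientation. The crucial preliminary observation, used in both nontrivial directions, concerns how benjamins interact with near-transitivity: if $H$ is a near-transitive orientation and $m$ is a benjamin, then its witnessing transitive vertex $t$ satisfies $\Neighbourhood[t] \subseteq \Neighbourhood[m]$, but minimality of $\Neighbourhood[m]$ forces $\Neighbourhood[t]=\Neighbourhood[m]$, so $t$ is a transitive closed twin of $m$ and is itself a benjamin. Hence every closed-twin class of benjamins contains a transitive vertex. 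I would also record two routine structural facts: closed-twin classes are cliques, and between two distinct closed-twin classes the adjacency is either complete or empty (both following from constancy of $\Neighbourhood[\cdot]$ on a class).

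For $(1)\Rightarrow(3)$, first I would note that the set $\mathcal T$ of transitive vertices of $H$ induces a comparability graph, since the restriction $H[\mathcal T]$ is transitive (transitivity of a vertex $t$ is inherited by any induced subgraph containing $t$). Picking one transitive representative $t_C$ from each closed-twin class $C$ of benjamins yields a set $R \subseteq \mathcal T$ with $G[R]$ a comparability graph, oriented transitively by $H[R]$. Since $G_\Benjamins$ is obtained from $G[R]$ by substituting a clique (the closed-twin class) for each representative, and cliques are comparability graphs, the substitution lemma for comparability graphs gives that $G_\Benjamins$ is a comparability graph. To keep the argument self-contained I would instead exhibit the transitive orientation of $G_\Benjamins$ directly: orient each closed-twin clique by an arbitrary linear order, and orient a between-class edge according to the $H$-orientation of the two representatives; a short case analysis on the classes of $x,y,z$ shows $x\to y\to z \Rightarrow x\to z$, where the only nonimmediate case (three distinct classes) uses exactly the transitivity of $t_{C_x}$ in $H$.

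For $(3)\Rightarrow(2)$, starting from a transitive orientation $P_B$ of $G_\Benjamins$, I would build a total order $\prec$ on $V$ placing every benjamin below every non-benjamin, with $\prec$ restricted to $\Benjamins(G)$ a linear extension of $P_B$, and orient every edge from the $\prec$-larger to the $\prec$-smaller endpoint. Any such orientation is acyclic. By construction its restriction to $\Benjamins(G)$ reproduces $P_B$, and every benjamin--non-benjamin edge points into the benjamin, so a benjamin's out-neighbours all lie in $\Benjamins(G)$; a two-line check then shows each benjamin inherits transitivity from $P_B$ (if $m\to a\to b$, then $a\in\Benjamins(G)$ forces $b\in\Benjamins(G)$, whence $m\to b$ by transitivity of $P_B$). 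Finally, every non-benjamin $v$ has a benjamin $m$ with $\Neighbourhood[m]\subsetneq\Neighbourhood[v]$ (take a $\subseteq$-minimal closed neighbourhood below $\Neighbourhood[v]$), and since that $m$ is now transitive, $v$ is near-transitive; benjamins are trivially near-transitive. Thus the orientation is near-transitive and acyclic.

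The main obstacle throughout is that near-transitivity ties each vertex to a transitive vertex across the whole graph $G$, not merely within $G_\Benjamins$. In $(1)\Rightarrow(3)$ this surfaces because an individual benjamin need not itself be transitive -- only its twin class must contain a transitive vertex -- so one cannot simply restrict $H$ and must instead rebuild the orientation class-by-class. In $(3)\Rightarrow(2)$ the difficulty is dual: a transitive orientation of $G_\Benjamins$ only controls benjamin--benjamin edges, and the work is to orient the remaining edges (by making benjamins sinks relative to non-benjamins) so that each benjamin becomes transitive in all of $G$ while keeping the global orientation acyclic.
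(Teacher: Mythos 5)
Your proof is correct, but it follows a genuinely different route from the paper's. The paper proves the cycle $(1)\Rightarrow(2)\Rightarrow(3)\Rightarrow(1)$: its $(1)\Rightarrow(2)$ step shows transitive vertices are disjoint from all cycles and rebuilds an acyclic orientation around them; its hard step $(2)\Rightarrow(3)$ passes through the quotient graph $\Twins{G}$ obtained by collapsing closed-twin classes (four claims: $\Twins{G}$ is closed twin-free, $\Twins{G}$ is near-comparability, $\Twins{G}_\Benjamins$ is comparability, and this lifts back to $G_\Benjamins$), and this step genuinely uses acyclicity to make each twin class a transitive tournament with a unique source; its $(3)\Rightarrow(1)$ constructs a near-transitive orientation that may contain cycles (the edges inside $V\setminus\Benjamins(G)$ are oriented arbitrarily), which is why the separate $(1)\Rightarrow(2)$ argument is needed. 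You instead prove $(2)\Rightarrow(1)\Rightarrow(3)\Rightarrow(2)$, and the two nontrivial legs both differ in substance. Your $(1)\Rightarrow(3)$ works directly in $G$ without acyclicity: the observation that a benjamin's near-transitivity witness must be a transitive closed twin (which is the same minimality argument the paper runs inside $\Twins{G}$ in its Claim about $\Twins{G}_\Benjamins$) lets you pick transitive representatives and then substitute cliques back in, rather than quotienting twins out; your explicit class-by-class orientation with the three-distinct-classes case invoking transitivity of $t_{C_x}$ in $H$ is a correct replacement for the paper's substitution-free lifting. Your $(3)\Rightarrow(2)$ strengthens the paper's $(3)\Rightarrow(1)$ by ordering all of $V$ linearly -- benjamins at the bottom, extending the transitive orientation of $G_\Benjamins$ (which is acyclic, hence extendible) -- so acyclicity comes for free and the paper's $(1)\Rightarrow(2)$ machinery becomes unnecessary, replaced by the trivial $(2)\Rightarrow(1)$. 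Net effect: the paper's route isolates reusable structure (the quotient graph, and the fact that transitive vertices avoid cycles), while yours is shorter, avoids the quotient entirely, and concentrates all the work in one key observation plus one linear-extension trick.
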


\begin{proof}
$\ref{item:near-comparability} \implies \ref{item:near-transitive_acyclic}$. Let $H$ be a near-transitive orientation of $G$. Let $T$ be the set of transitive vertices of $H$. Then $T$ is disjoint from all the cycles in $H$ (for if $t \in T$ belongs to the cycle $t \to v_1 \to \dots \to v_k \to t$, we must have $t \to v_k$ by transitivity), and in particular $H[T]$ is acyclic. Construct any acyclic orientation of $G$, say $G^w$, such that $G^w[ T ] = H[ T ]$ and $V \setminus T \succ T$. Note such an orientation can be constructed efficiently (for example by a greedy algorithm). Then any vertex in $T$ is still transitive in $G^w$, and hence $G^w$ is near-transitive.

% $\ref{item:near-transitive_acyclic} \implies \ref{item:near-comparability}$. Trivial.

$\ref{item:near-transitive_acyclic} \implies \ref{item:benjamins_comparability}$. Let $G$ be a near-comparability graph. In order to prove that $G_\Benjamins$ is a comparability graph, we need to consider the graph $\Twins{G}$ obtained by only keeping one twin out of every set of twins. More formally, for any $v \in V(G)$, let $\Twins{v} = \{ u \in V : \Neighbourhood[u] = \Neighbourhood[v] \}$ be the equivalence class of $v$. We further denote $\Twins{V} = \{ \Twins{v} : v \in V \}$, $ \Twins{E} = \{ \Twins{v} \Twins{v'} : vv' \in E \}$. Then $\Twins{G} = ( \Twins{V}, \Twins{E}  )$. 
% For any word $w = w_1 \dots w_l \in V^*$, we accordingly denote $\Twins{w} = \Twins{w_1} \dots \Twins{w_l} \in \Twins{V}^*$.

Say that a graph is \Define{closed twin-free} if there are no closed twins, i.e. $\Neighbourhood[ u ] \ne \Neighbourhood[ v ]$ for all $u \ne v \in V$.

\begin{claim} \label{claim:twin-free}
$\Twins{G}$ is closed twin-free.
\end{claim}

\begin{proof}
Suppose $\Neighbourhood[ \Twins{u}; \Twins{G} ] = \Neighbourhood[ \Twins{v}; \Twins{G} ]$. Then $\Neighbourhood[ u; G ] = \bigcup_{ \Twins{a} \in \Neighbourhood[ \Twins{u}; \Twins{G} ] } \Twins{a} = \Neighbourhood[ v; G ]$, hence $u$ and $v$ are closed twins in $G$ and $\Twins{u} = \Twins{v}$.
\end{proof}

\begin{claim} \label{claim:twin_near-comparability}
$\Twins{G}$ is a near-comparability graph.
\end{claim}

\begin{proof}
Suppose $G^w$ is a near-transitive acyclic orientation of $G$. Then for every equivalence class $c \in \Twins{V}$, $G^w[ c ]$ is a transitive tournament with a unique source.  Consider the orientation $H$ of $\Twins{G}$ naturally induced by $w$, i.e. $\Twins{u} \to \Twins{v}$ in $\Twins{G}$ if and only if $u \to v$, where $u$ and $v$ are the unique sources of $G^w[ \Twins{u} ]$ and $G^w[ \Twins{v} ]$, respectively. We now prove that all vertices of $\Twins{G}$ are near-transitive in $H$. Firstly, if $t$ is transitive in $G^w$, then $\Twins{t}$ is transitive in $H$. Secondly, if $v$ satisfies $\Neighbourhood[ v; G ] \supseteq \Neighbourhood[ t; G ]$ for some transitive $t$, then $\Neighbourhood[ \Twins{v}; \Twins{G} ] \supseteq \Neighbourhood[ \Twins{t}; \Twins{G} ]$ and hence $\Twins{v}$ is near-transitive in $H$.
\end{proof}

\begin{claim} \label{claim:twin_benjamins_comparability}
$\Twins{G}_\Benjamins$ is a comparability graph. 
\end{claim}

\begin{proof}
Let $\Twins{G}^w$ be a near-transitive acyclic orientation of $\Twins{G}$ and let $B = \Benjamins( \Twins{G} )$. If $m \in B$, then there exists a transitive vertex $t$ such that $\Neighbourhood[ m ] \supseteq \Neighbourhood[ t ]$. Since $m \in B$ we have $\Neighbourhood[ m ] = \Neighbourhood[ t ]$ and since $\Twins{G}$ is closed twin-free we obtain $m = t$, i.e. $m$ is transitive in $\Twins{G}^w$. Therefore, $m$ is also transitive in $\Twins{G}^w[ B ] = { \Twins{G}_\Benjamins }^{w[ B ]}$. Thus, ${ \Twins{G}_\Benjamins }^{w[ B ]}$ is a transitive orientation of $\Twins{G}_\Benjamins$.
\end{proof}

\begin{claim} \label{claim:benjamins_comparability}
$G_\Benjamins$ is a comparability graph.
\end{claim}

\begin{proof}
We first remark that $\Twins{u} \in \Benjamins( \Twins{G} )$ if and only if $u \in \Benjamins(G)$. Now, let ${ \Twins{G}_\Benjamins }^{ w' }$ be a transitive orientation, then consider the orientation ${ G_\Benjamins }^w$ of $G_\Benjamins$ as follows. First, fix an arbitrary order of every equivalence class $\Twins{v}$. Second, orient $u \to v$ if $\Twins{u} \to \Twins{v}$ in ${ \Twins{G}_\Benjamins }^{ w' }$. 

We now verify that this orientation is transitive. If $u \to a \to b$ in ${ G_\Benjamins }^w$, then $\Twins{u} \to \Twins{a} \to \Twins{b}$ in ${ \Twins{G}_\Benjamins }^{ w' }$, hence $\Twins{u} \to \Twins{b}$ in ${ \Twins{G}_\Benjamins }^{ w' }$, and finally $u \to b$ in ${ G_\Benjamins }^w$.
\end{proof}

$\ref{item:benjamins_comparability} \implies \ref{item:near-comparability}$. Construct the orientation of $G$ as follows. First, use the transitive orientation on $G_\Benjamins$. Second, orient every edge $v \to m$ where $v \notin \Benjamins( G )$ and $m \in \Benjamins( G )$. Third, use any orientation on $G - \Benjamins( G )$. Then the vertices in $\Benjamins( G )$ remain transitive, and for any $v \notin \Benjamins( G )$, there exists $m \in \Benjamins( G )$ such that $\Neighbourhood[ m; G ] \subseteq \Neighbourhood[ v; G ]$, i.e. $v$ is near-transitive.

\end{proof}

Recognising comparability graphs can be done in polynomial time; see \cite{MS97} and references therein. In fact, the algorithm in \cite{MS97} not only decides whether a graph is a comparability graph, but it also produces a transitive orientation if such exists. In view of Theorem \ref{theorem:near-comparability}, applying that algorithm to $G_\Benjamins$ not only decides whether a graph is a near-comparability graph, but it also produces a near-transitive orientation (see the proof of Claim \ref{claim:benjamins_comparability}) if one exists.

% A vertex is \Define{simplicial} if its neighbourhood is a clique, i.e. if $\Neighbourhood[ s ] \subseteq \Neighbourhood[ v ]$ for all $v \in \Neighbourhood[ s ]$. If the set of simplicial vertices is a dominating set, then $G$ is near-transitive and hence is permissible. 

Any induced subgraph of a comparability graph is a comparability graph. However, as we shall prove below, any graph is the induced subgraph of some near-comparability graph. Thus, Proposition \ref{proposition:near-comparability_permissible} shows that every graph is the induced subgraph of a permissible graph. This entails that the class of permissible graphs is not hereditary, i.e. it is impossible to characterize permissible graphs by some forbidden induced subgraphs.

\begin{corollary} \label{corollary:induced_permissible}
For every graph $G$, there exists a near-transitive (and hence permissible) graph $H$ such that $G$ is an induced subgraph of $H$.
\end{corollary}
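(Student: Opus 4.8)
The plan is to reduce to condition \ref{item:benjamins_comparability} of Theorem \ref{theorem:near-comparability}: it suffices to exhibit, for a given graph $G = (V,E)$, a supergraph $H$ containing $G$ as an induced subgraph and whose benjamin subgraph $H_\Benjamins$ is a comparability graph, since then $H$ is a near-comparability graph and Proposition \ref{proposition:near-comparability_permissible} makes it permissible. The construction I would use is to attach a single pendant to every vertex: let $H = (V \cup P, E \cup \{ v p_v : v \in V \})$ where $P = \{ p_v : v \in V \}$ is a set of fresh vertices and each $p_v$ is adjacent only to $v$. Since the only new edges are incident to $P$, the subgraph of $H$ induced on $V$ is exactly $G$, so $G$ is an induced subgraph of $H$ as required.

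The heart of the argument is then to determine $\Benjamins(H)$. First I would show that no non-isolated vertex of $G$ survives as a benjamin of $H$: if $v \in V$ has a neighbour in $G$, then $\Neighbourhood[ p_v; H ] = \{ p_v, v \} \subsetneq \{ v, p_v \} \cup \Neighbourhood(v; G) = \Neighbourhood[ v; H ]$, so $p_v$ witnesses that $v$ is not a benjamin. Conversely, I would check that each such $p_v$ is itself a benjamin: any $u$ with $\Neighbourhood[ u; H ] \subsetneq \{ p_v, v \}$ would have to lie in $\{ p_v, v \}$ (as $u \in \Neighbourhood[ u; H ]$ always), and neither choice yields a strictly smaller closed neighbourhood, the case $u = v$ forcing $v$ to be isolated, a contradiction. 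Hence the benjamins coming from non-isolated vertices are precisely the corresponding pendants, and these are pairwise non-adjacent and adjacent in $H$ only to non-benjamins.

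The one case requiring care, and the step I expect to be the main (if modest) obstacle, is the isolated vertices of $G$. If $v$ is isolated in $G$, then $\Neighbourhood[ v; H ] = \Neighbourhood[ p_v; H ] = \{ v, p_v \}$, so $v$ and $p_v$ are closed twins and both are benjamins, contributing a $K_2$ to $H_\Benjamins$. Collecting the cases, $H_\Benjamins$ is a disjoint union of single vertices (the pendants of non-isolated vertices) and edges (the twin pairs $\{ v, p_v \}$ of isolated vertices), i.e. a graph of maximum degree at most one. Such a graph is bipartite and therefore a comparability graph, so condition \ref{item:benjamins_comparability} holds and $H$ is near-transitive. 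The bulk of the write-up is thus the routine but slightly fiddly verification of membership in $\Benjamins(H)$; the choice of construction, namely a single pendant rather than a longer path (which would fail to push the original vertices out of the benjamin set), is the only genuinely delicate design decision.
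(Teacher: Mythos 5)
Your construction is exactly the one in the paper: attach a single pendant to every vertex of $G$ (the paper writes the pendants as $V' = \{v' : v \in V\}$), and note that the pendant edges are the only new edges, so $G$ is induced. Where you diverge is in the verification of near-comparability. The paper does this in one line, without invoking Theorem \ref{theorem:near-comparability} at all: take any acyclic orientation of $H$ in which every pendant is a sink (i.e.\ a permutation $w$ with $V \succ V'$); a sink has no out-neighbours and is therefore vacuously transitive, and every original vertex $v$ is near-transitive because $\Neighbourhood[ p_v; H ] = \{v, p_v\} \subseteq \Neighbourhood[ v; H ]$ -- note that this argument needs no case distinction between isolated and non-isolated vertices of $G$. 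You instead route the proof through condition \ref{item:benjamins_comparability} of Theorem \ref{theorem:near-comparability}, which obliges you to determine $\Benjamins(H)$ exactly; your computation is correct (non-isolated vertices of $G$ are evicted by their pendants, their pendants are benjamins and isolated in $H_\Benjamins$, and each isolated vertex of $G$ forms a closed-twin $K_2$ with its pendant), so $H_\Benjamins$ is a disjoint union of $K_1$'s and $K_2$'s, hence a comparability graph, and your proof is valid. What the paper's route buys is brevity and independence from the (nontrivial) characterisation theorem; what your route buys is a worked demonstration that the benjamin criterion is mechanically checkable on concrete constructions. Your closing design remark -- that a longer pendant path would fail because it would not push the original vertices out of the benjamin set -- is also sound, and since Theorem \ref{theorem:near-comparability} is an equivalence, that failure is genuine: with longer paths the resulting graph need not be near-comparability at all, so the choice of a single pendant is indeed the essential point of the construction under either verification.
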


\begin{proof}
Let $G = (V,E)$ and construct the graph $\hat{G} = (\hat{V}, \hat{E})$ so that $G = \hat{G}[ V ]$ as follows. Let $V' = \{ v' : v \in V \}$ be a copy of $V$, $\hat{V} = V \cup V'$, and $\hat{E} = E \cup \{ vv' : v \in V \}$. Then $\hat{G}$ is a near-comparability graph: let $w$ be a permutation of $\hat{V}$ where $V \succ V'$, then $\hat{G}^w$ is a near-transitive acyclic orientation of $\hat{G}$. By Proposition \ref{proposition:near-comparability_permissible}, $H$ is permissible.
\end{proof}

We now introduce an operation on graphs, that we call \Define{graph composition}, that preserves permissibility. Let $H$ be an $n$-vertex graph, $G_1, \dots, G_n$ other graphs, then the composition $H(G_1, \dots, G_n)$ is obtained by replacing each vertex $h$ of $H$ by the graph $G_h$, and whenever $hh' \in E(H)$, adding all edges between $G_h$ and $G_{h'}$. More formally, we have
% \begin{align*}
%     V(G) &= \{ (h, v) :  h \in V(H), v \in V(G_h) \}, \\
%     E(G) &= \{  (h,v) (h', v') : hh' \in E(H), v \in V( G_h ), v' \in V( G_{h'} ) \} \cup \{ (h,v) (h, v') : h \in V(H), vv' \in E( G_h ) \}.
% \end{align*}
\begin{align*}
    V(G) &= \{ v_h :  h \in V(H), v \in V(G_h) \}, \\
    E(G) &= \{ v_h v'_{h'} : hh' \in E(H), v \in V( G_h ), v' \in V( G_{h'} ) \} \cup \{ v_h v'_h : h \in V(H), vv' \in E( G_h ) \}.
\end{align*}
This is illustrated in Figure \ref{figure:composition}.

\begin{figure}
    \centering
    \includegraphics[page=11,width=\textwidth]{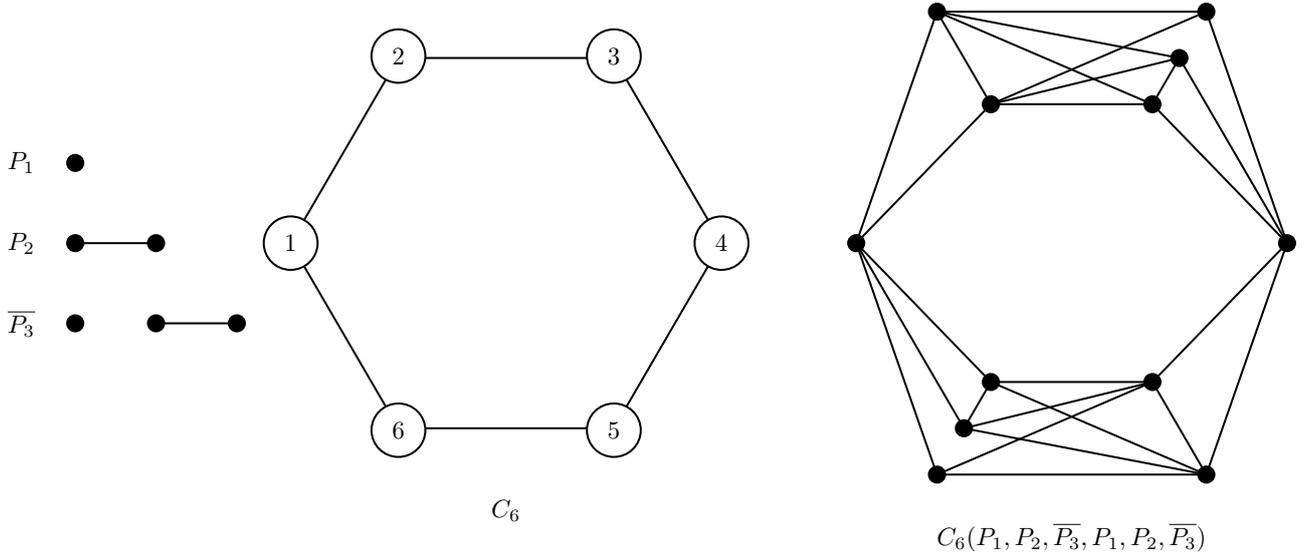}
    \caption{Illustration of the composition operation.}
    \label{figure:composition}
\end{figure}

This construction includes for instance the disjoint union of two graphs: $G_1 \cup G_2 = \bar{K}_2(G_1, G_2)$ and the join of two graphs: $K_2(G_1, G_2)$. In the special case where only a single vertex $b$ is replaced by the graph $G_b$, we use the notation
\[
    H( b, G_b ) = H( K_1, \dots, K_1, G_b, K_1, \dots, K_1 ).
\]
This special case includes adding an open twin (a new vertex $b'$ with $\Neighbourhood( b' ) = \Neighbourhood( b )$): $H( b, \bar{K_2})$ and adding a closed twin ($\Neighbourhood[ b' ] = \Neighbourhood[ b ]$): $H( b, K_2 )$. In fact, any composition can be obtained by repeatedly replacing a single vertex.

\begin{lemma} \label{lemma:composition}
Let $G = H(G_1, \dots, G_n)$ be a graph composition, where $V(H) = \{1, \dots, n\}$. For all $0 \le i \le n$, let $G^i$ be defined as $G^0 = H$ and $G^i = G^{i-1}( i, G_i )$. Then $G^n = G$.
% A graph composition can be obtained by repeatedly blowing up one vertex $b$ into the graph $G_b$ at a time. 
\end{lemma}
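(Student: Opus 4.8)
The plan is to prove by induction on $i$ the stronger invariant
\[
    G^i = H(G_1, \dots, G_i, K_1, \dots, K_1),
\]
where the final $n-i$ arguments are single vertices; setting $i = n$ then yields $G^n = H(G_1, \dots, G_n) = G$. For the base case $i = 0$, replacing every vertex of $H$ by a single vertex returns $H$ itself: the composition $H(K_1, \dots, K_1)$ has one vertex per vertex of $H$ and, since each $K_1$ has no internal edges, exactly the join edges corresponding to $E(H)$. Under the natural identification of the unique vertex of the $K_1$-block replacing $h$ with the symbol $h$, this reads $G^0 = H$.

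For the inductive step I would assume $G^{i-1} = H(G_1, \dots, G_{i-1}, K_1, \dots, K_1)$ and unfold $G^i = G^{i-1}(i, G_i)$, the single-vertex replacement of the block labelled $i$ (currently a $K_1$) by $G_i$. The crux is to read off, from the composition structure of $G^{i-1}$, the neighbourhood of the vertex $i$: since block $i$ is a single vertex, $i$ is adjacent in $G^{i-1}$ to precisely the vertices $v'_h$ with $h \sim i$ in $H$ and $v' \in V(G_h)$ (taking $G_h = K_1$ when $h > i$). The single-vertex replacement then does three things: it introduces the vertices $v_i$ for $v \in V(G_i)$; it adds the internal edges $v_i v'_i$ for $vv' \in E(G_i)$; and it joins every $v_i$ to every neighbour of $i$ in $G^{i-1}$, i.e. to every $v'_h$ with $h \sim i$ in $H$.

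It then remains to match these contributions against the defining edge set of $H(G_1, \dots, G_i, K_1, \dots, K_1)$. The internal edges are exactly the $h = i$ part of the set $\{ v_h v'_h : h \in V(H),\ vv' \in E(G_h) \}$; the new join edges are exactly the edges between block $i$ and its neighbouring blocks in the composition; and every edge of $G^{i-1}$ not incident to block $i$ — the internal edges of the other blocks and the join edges between two blocks both distinct from $i$ — is inherited unchanged and coincides with the corresponding edge of the target composition. This gives equal vertex and edge sets and closes the induction. The only real difficulty is bookkeeping: keeping the nested vertex labels consistent across successive replacements and confirming that the neighbourhood of $i$ inside $G^{i-1}$ is correctly dictated by the composition structure. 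Once the identification of each $K_1$-block with its $H$-vertex is fixed, every edge falls into exactly one of the categories above, and none is double-counted or lost.
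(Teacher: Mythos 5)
Your proposal is correct and follows essentially the same route as the paper: an induction on $i$ whose invariant describes $G^i$ as the partial composition in which blocks $1,\dots,i$ have been blown up and blocks $i+1,\dots,n$ remain single vertices, closed by unfolding the single-vertex replacement $G^{i-1}(i,G_i)$. The paper states this invariant as explicit vertex and edge sets rather than as $H(G_1,\dots,G_i,K_1,\dots,K_1)$, which sidesteps the $K_1$-block identification you handle by relabelling, but the content is identical.
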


\begin{proof}
The proof is by induction on $0 \le i \le n$. Let $I = \{1, \dots, i\}$ and $J = \{i+1, \dots, n\}$. We prove that
% \begin{align*}
%     V( G^i ) &= J \cup \{ ( a, v ) : a \in I, v \in V( G_a ) \}, \\
%     E( G^i ) &= E( H[J] ) \cup \{ ( a, v )( a, v') : a \in I, vv' \in E( G_a ) \} \\
%     & \ \cup \{ (a,v)(b,v') : ab \in E( H[I] ) \} \cup \{ h(a,v) : h \in J, a \in I, ha \in E( H ) \}.
% \end{align*}
\begin{align*}
    V( G^i ) &= J \cup \{ v_a : a \in I, v \in V( G_a ) \}, \\
    E( G^i ) &= E( H[J] ) \cup \{ v_a v'_a : a \in I, vv' \in E( G_a ) \} \\
    & \ \cup \{ v_a v'_b : ab \in E( H[I] ) \} \cup \{ h v_a : h \in J, a \in I, ha \in E( H ) \}.
\end{align*}

This is clear for $i=0$, and for $i \ge 1$ this is easily verified from the recurrence property:
% \begin{align*}
%     V( G^i ) &= ( V( G^{i-1} ) \setminus \{ i \} ) \cup \{ ( i, v ) : v \in V( G_i ) \} \\
%     E( G^i ) &= E( G^{i-1} - i ) \cup \{ (i, v)(i, v') : vv' \in E( G_i ) \} \cup \{ h( i, v ) : v \in V( G_i ), hi \in E( H ) \}.
% \end{align*}
\begin{align*}
    V( G^i ) &= ( V( G^{i-1} ) \setminus \{ i \} ) \cup \{ v_i : v \in V( G_i ) \} \\
    E( G^i ) &= E( G^{i-1} - i ) \cup \{ v_i v'_i : vv' \in E( G_i ) \} \cup \{ h v_i : v \in V( G_i ), hi \in E( H ) \}.
\end{align*}
The details are omitted.
\end{proof}

\begin{proposition}
If $G = H(G_1, \dots, G_n)$, where each of $H, G_1, \dots, G_n$ is permissible, then $G$ is permissible.
\end{proposition}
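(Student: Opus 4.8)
The plan is to reduce to the case of replacing a single vertex and then to simulate the dynamics on $G$ by the dynamics on $H$. By Lemma \ref{lemma:composition}, $G$ is obtained from $H$ by the sequence of single-vertex replacements $G^i = G^{i-1}( i, G_i )$ with $G^0 = H$, and in $G^{i-1}$ the vertex $i$ is still a plain (unreplaced) vertex. Hence it suffices to prove the single-vertex statement: \emph{if $H$ and $G_b$ are permissible, then $H(b, G_b)$ is permissible}. The full result then follows by induction on $i$, since $G^0 = H$ is permissible and each $G^{i-1}$ is permissible by the inductive hypothesis while $G_i$ is permissible by assumption.

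So fix permises $w_H$ of $H$ and $w_b$ of $G_b$, write $G = H(b, G_b)$, and let $B = \{ v_b : v \in V(G_b) \}$ be the copy of $G_b$ inside $G$; note every vertex of $B$ has the same external neighbourhood $\Neighbourhood(b; H)$ in $G$. I would take as candidate permis the word $w$ obtained from $w_H$ by replacing its single occurrence of $b$ by $w_b$; this is a permutation of $V(G)$. To analyse it, define the projection $\pi : \{0,1\}^{V(G)} \to \{0,1\}^{V(H)}$ by $\pi(x)_h = x_h$ for $h \ne b$ and $\pi(x)_b = \bigvee_{v \in V(G_b)} x_{v_b}$, so that $b$ is on exactly when the block is nonempty.

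The heart of the argument is a coupling showing that running $w$ on $G$ projects step by step to running $w_H$ on $H$. Updating any $h \notin \Neighbourhood[b; H]$ involves no block vertex and matches trivially. Updating $h \in \Neighbourhood(b; H)$ sets $h$ to $0$ iff some neighbour of $h$ is on, and the block vertices contribute exactly $\neg\,\pi(\cdot)_b$, so the update agrees with that of $h$ in $H$. Finally, the block update (standing in for $b$) splits into two cases: if some vertex of $\Neighbourhood(b; H)$ is currently on, then since the external neighbours are untouched during the block update, every $v_b$ is set to $0$ and the block empties, matching $b \mapsto 0$; otherwise the external neighbours stay off throughout, so the block update is exactly $\MIS^{w_b}$ on $G_b$, and since $w_b$ is a permis the block becomes a (nonempty) MIS of $G_b$, matching $b \mapsto 1$. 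Consequently $\pi(\MIS^w(x)) = \MIS^{w_H}(\pi(x))$, which is a MIS of $H$ because $w_H$ is a permis.

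It then remains to upgrade ``block nonempty'' to ``block is a MIS of $G_b$''. Since $w$ is a permutation, the block is updated exactly once and is frozen afterwards; the two cases above show that after this unique update the block is either empty (when it forces $b \mapsto 0$) or a genuine MIS of $G_b$ (when it forces $b \mapsto 1$). Writing $M = \MIS^{w_H}(\pi(x))$ and using that $\MIS^w(x)$ agrees with $M$ off the block, a direct check confirms $\one(\MIS^w(x))$ is a maximal independent set of $G$: it is independent (when the block is nonempty, $b \in M$, so no vertex of $\Neighbourhood(b;H) \cap M$ lies in $M$), and every excluded vertex is dominated (vertices of $H$ by maximality of $M$; block vertices either by the block-internal MIS when $b \in M$, or by a neighbour in $\Neighbourhood(b; H) \cap M$ when $b \notin M$). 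Thus $w$ is a permis of $G$. I expect the main obstacle to be exactly this last bookkeeping: the projection $\pi$ only records whether the block is nonempty, so recovering the full MIS structure of $G_b$ inside $G$ requires the separate observation that the block's final state equals its state right after its single update.
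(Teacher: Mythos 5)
Your proof is correct and takes essentially the same route as the paper's: the reduction to single-vertex replacement via Lemma \ref{lemma:composition}, the same candidate permis obtained by splicing a permis of $G_b$ into the slot of $b$ in a permis of $H$, the same disjunction projection $\pi$, and the same case analysis for the block update (external neighbour on $\Rightarrow$ block empties; otherwise the block runs $\MIS^{w_b}$ in isolation and lands on a MIS of $G_b$). The only differences are cosmetic: you verify independence of the final configuration explicitly, whereas the paper gets it implicitly from the fact that a permutation is a vertex cover and hence prefixes $\MIS(G)$.
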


\begin{proof}
According to Lemma \ref{lemma:composition}, we only need to prove the case where $G = H(b, G_b)$, where the vertices are sorted according to a permis $\hat{w} = \hat{w}_{1 : b-1} \hat{w}_b \hat{w}_{b+1 : n}$ of $H$. We denote the vertex set of $G_b$ as $V_b$, and we let $w^b$ be a permis of $G_b$. Then we claim that $w = \hat{w}_{1 : b-1} w^b \hat{w}_{b+1 : n}$ is a permis of $G$. 

For any configuration $x$ of $G$, let $\hat{x}$ be the configuration of $H$ such that $\hat{x}_u = x_u$ for all $u \ne \hat{w}_b$ and $\hat{x}_{ \hat{w}_b} = \bigjoin_{v \in V_b} x_v$. We then prove that $y \in \MaximalIndependentSets( G )$ by considering the three main steps of $w$. 
    \begin{itemize}
        \item Step 1: before the update of $G_b$ ($\hat{w}_{1:b-1}$). \\
        In Step 1, the initial configuration is $x$ and the final configuration is $\alpha = y^{b-1}$. It is easy to show that for any $1 \le a < b$, we have $\MIS^{ w_{1 : a} }( x; G )_{ G - V_b } = \MIS^{ \hat{w}_{1 : a} }( \hat{x}; H )_{H - \hat{w}_b}$.  We obtain $\hat{ \alpha } = \MIS^{ \hat{w}_{1 : b-1} }( \hat{x}; H )$.

        \item Step 2: update of $G_b$ ($w^b$). \\
        In Step 2, the initial configuration is $\alpha = y^{b-1}$ and the final configuration is $\beta = y^{b-1 + |V_b|}$. Note that $V_b$ is a tethered set of $G$, so let $T = \Neighbourhood( V_b; G ) \setminus V_b$.  If $\alpha_{ T } \ne 0$, then the whole of $G_b$ will be updated to $0$: $\beta_{V_b} = 0$. Otherwise, it is as if $G_b$ is isolated from the rest of the graph and $\beta_{V_b} = \MIS^{w^b}( \alpha_{V_b} ; G_b )$. In either case, we have $\hat{ \beta } = \MIS^{ \hat{w}_b }( \hat{ \alpha }; H )$.

        \item Step 3: after the update of $G_b$ ($\hat{w}_{b+1:n}$). \\
        In Step 3, the initial configuration is $\beta = y^{b-1 + |V_b|}$ and the final configuration is $y$. Again, we have for all $b < a \le n$, $\MIS^{ w_{b+1 : a} }( \beta; G )_{ G - V_b } = \MIS^{ \hat{w}_{b+1 : a} }( \hat{ \beta }; H )_{H - \hat{w}_b}$. We obtain $\hat{y} = \MIS^{ \hat{w}_{b+1 : n} }( \hat{ \beta }; H )$.
    \end{itemize}

    We obtain
    \begin{align*}
        \hat{y} &= \MIS^{ \hat{w}_{b+1 : n} }( \hat{ \beta }; H ) \\
        &= \MIS^{ \hat{w}_{b+1 : n} }( \MIS^{ \hat{w}_b }( \hat{ \alpha }; H ) ; H ) \\
        &= \MIS^{ \hat{w}_{b+1 : n} }( \MIS^{ \hat{w}_b }( \MIS^{ \hat{w}_{1 : b-1} }( \hat{x}; H ) ; H ) ; H ) \\
        &= \MIS^{ \hat{w} }( \hat{x}; H ).
    \end{align*}
    We can now prove that $y_{ \Neighbourhood[ v; G ] } \ne 0$ for every vertex $v$ of $G$. First, if $v$ is not a vertex of $G_b$, then $\hat{y}_{ \Neighbourhood[ v; H ] } \ne 0$, and hence $y_{ \Neighbourhood[ v; G ] } \ne 0$. Second, if $v = u_b$ is a vertex of $G_b$, then  we need to consider two cases. Either $\hat{y}_b = 0$, in which case there exists $a \in \Neighbourhood( b; H ) \subseteq \Neighbourhood( u_b ;G )$ with $\hat{y}_a = y_a \ne 0$; or $\hat{y}_b = 1$, in which case $y_{V_b} = \MIS^{w^b}( x_{ V_b }; G_b ) \in \MIS( G_b )$ and in particular $y_{ \Neighbourhood[ u_b ;G ] } \ne 0$.
\end{proof}

\subsection{Non-permissible graphs}

We now exhibit classes of non-permissible graphs. As mentioned earlier, the smallest non-permissible graph is the heptagon; in fact, any odd hole with at least seven vertices is non-permissible.

\begin{proposition} \label{proposition:bad_cycles}
For all $k \ge 3$, the odd hole $C_{2k+1}$ is not permissible.
\end{proposition}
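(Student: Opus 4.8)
The plan is to reason through the acyclic orientation $G^w$ induced by a candidate permis $w$ and to show that the odd length forces a ``defect'' around which an initial configuration drives three consecutive vertices to $0$. I would first record the overall reduction. Since $w$ is a permutation, $[w]=V$ is a vertex cover, so by Proposition~\ref{proposition:prefix} the output $y=\MIS^w(x)$ is always an independent set. On a cycle an independent set is maximal exactly when it contains no run of three consecutive $0$s (a run of length $\ge 3$ leaves its middle vertex undominated). Thus $w$ is a permis if and only if no initial configuration produces three consecutive $0$s in $y$, and it suffices to exhibit, for every acyclic orientation $G^w$, a configuration $x$ yielding such a run.

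The key gadget is the following local forcing lemma, which I would prove by a three-line direct computation. If $u\to v\to z\to z'$ is a directed path in $G^w$ (four consecutive cycle vertices whose update times are monotone), then setting $x_v=x_z=x_{z'}=1$ and the rest to $0$ yields $y_u=y_v=y_z=0$: indeed $u$ is updated before $v$ and sees $x_v=1$, so $y_u=0$; then $v$ sees $z$ (updated later, still $1$), so $y_v=0$; then $z$ sees $z'$ (updated later, still $1$), so $y_z=0$. Hence a permis cannot contain a directed path on four vertices, i.e. every maximal monotone stretch along the cycle has at most three vertices. Encoding each edge by the direction of its orientation relative to a fixed traversal, this says the resulting cyclic binary word of length $2k+1$ has no three equal consecutive symbols; its maximal runs therefore have length $1$ or $2$, they alternate in sign, and since they alternate there are an even number $2q$ of them, so the number of length-$2$ runs equals $2k+1-2q$, which is odd. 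In particular at least one length-$2$ run exists.

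I would then localise around a length-$2$ run, a directed segment $s\to m\to t$ with $s$ a source and $t$ a sink, and run a short case analysis on the orientations of the two neighbouring edges (equivalently, on the lengths of the adjacent runs). In each case I would build a configuration forcing an all-zero triple; the representative case is when the edges just outside the run make the off-run neighbour $s^-$ of $s$ a sink whose further neighbour is a source, in which case seeding that further source to $1$ cascades, through $s^-$ and $s$, to make $\{s^-,s,m\}$ all end at $0$. The crucial point, and the main obstacle, is the index bookkeeping: one must verify that all vertices involved (six consecutive vertices together with the ``far'' neighbours used to seed the cascade) are genuinely distinct, and this is exactly where $2k+1\ge 7$ is used. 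For $C_5$ the analogous indices collide (the far sink of $m$ coincides with the vertex that must remain $0$ to keep the auxiliary source at $1$), producing a conflict, consistent with $C_5$ being permissible; so the finishing step must be carried out carefully to confirm that no such collision occurs once $2k+1\ge 7$, which then completes the proof that $C_{2k+1}$ admits no permis.
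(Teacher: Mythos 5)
Your proposal is correct, and it runs on the same engine as the paper's proof: cascading initial configurations that force three consecutive $0$s along same-direction arcs of the acyclic orientation $G^w$, plus the parity obstruction of the odd cycle. But the decomposition is genuinely reorganised. The paper proves the stronger local claim that a permis admits no \emph{two} consecutive arcs in the same direction (its first case, $x_{abc}=111$ on arcs $a \gets b \gets c \gets d$, is exactly your four-vertex gadget), so the orientation would have to alternate, which is impossible on an odd cycle; parity comes last. You forbid only directed paths on four vertices, use run-counting parity in the middle to locate a maximal two-arc run $s \to m \to t$ ($s$ a source, $t$ a sink), and then kill that run locally. Your representative case checks out, and the deferred cases do close --- in fact they collapse: writing $s^-, s^{--}, s^{---}$ for the vertices preceding $s$, the single configuration $x = \CharacteristicVector( \{ s^{--}, m, t \} )$ yields $y_{s^-} = y_s = y_m = 0$ under \emph{every} orientation of the remaining edges. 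Indeed $y_s = 0$ since $s$, a source, sees $x_m = 1$ at its update; $y_m = 0$ since $m$ is updated after $s$ and sees $x_t = 1$; and for $y_{s^-}$, either $s^- \to s^{--}$ and $s^-$ sees $x_{s^{--}} = 1$, or $s^{--} \to s^-$ and $y_{s^{--}} = 1$ in both sub-orientations at $s^{---}$ (if $s^{---} \to s^{--}$, then $s^{---}$ dies first on $x_{s^{--}} = 1$, so $s^{--}$ sees only zeros). The six vertices $s^{---}, \dots, t$ are distinct exactly when $2k+1 \ge 7$, whereas in $C_5$ one gets $s^{---} = t$, which would have to be $0$ and $1$ simultaneously --- precisely the collision you predict, consistent with $C_5$ being permissible. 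So your route trades the paper's cleaner structural conclusion (full alternation, obtained from one six-vertex case analysis) for an easier first gadget plus a run-counting lemma the paper does not need; both arguments are sound and of essentially the same length.
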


\begin{proof}
Let $w$ be a permutation of the vertex set of $G = C_{2k+1}$. We shall prove that if $w$ is a permis there cannot be two consecutive arcs in $G^w$ with the same direction; this shows that the direction of arcs must alternate, which is impossible because there is an odd number of arcs in the cycle. We do this by a case analysis on the arcs preceding those two consecutive arcs.

% We consider six vertices $f, e, d, c, b, a$, where the last two arcs $c \to b \to a$ are in the same direction. We prove that the vertex $c$ is never covered. The first case is where $d \to c$. In that case, if $x_{abc} = 111$, then $y_{bcd} = 000$ as shown below (top left) along with the other three cases.

We consider six vertices $a$ to $f$, where the first two arcs $a \gets b \gets c$ are in the same direction. The first case is where $c \gets d$; in that case, if $x_{abc} = 111$, then $y_{bcd} = 000$ and hence $c$ is not covered. This is shown in Figure \ref{figure:bad_cycles}, along with the other three cases.

\begin{figure}
    \centering
    \includegraphics[page=14,width=.5\textwidth]{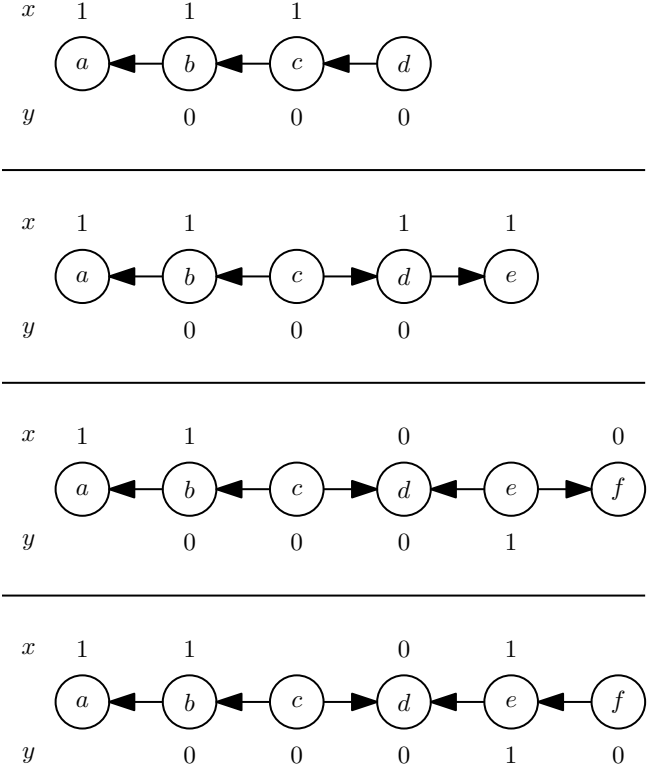}

    \caption{Illustration of Proposition \ref{proposition:bad_cycles}.}

    \label{figure:bad_cycles}
\end{figure}

\end{proof}

% \begin{proposition}
% Let $H = K_2(G_1, G_2) = G_1 \bar{\cup} G_2$, then $H$ is good if and only if both $G_1$ and $G_2$ are good.
% \end{proposition}

We now give two ways to construct larger non-permissible graphs.

Recall that a set of vertices $S$ is tethered if there is an  edge $st$ between any $s \in S$ and any $t \in T = \Neighbourhood(S) \setminus S$.

\begin{proposition} \label{proposition:blocking_set}
Let $G$ be a graph. If $G$ has a tethered set of vertices $S$ such that $G[ S ]$ has no permis, then $G$ has no permis.
\end{proposition}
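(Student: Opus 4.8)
The plan is to prove the statement in contrapositive form: I take an arbitrary permutation $w$ of $V(G)$ and, using that $G[S]$ has no permis, exhibit an initial configuration on which $w$ fails to reach a maximal independent set, so that $w$ is not a permis of $G$. The induced subword $w[S]$ is a permutation of $S$, hence not a permis of $G[S]$, so there is a configuration $x'$ on $S$ and a vertex $p \in S$ with $\MIS^{w[S]}(x'; G[S])_{\Neighbourhood[p; G[S]]} = 0$. I record that $x' \ne 0$: running the full permutation $w[S]$ from the all-zero configuration is exactly the classical MIS algorithm on $G[S]$, which terminates at a maximal independent set, so the all-zero configuration is never a failure witness.

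The key ingredient, which I would isolate as a one-line lemma, is that the all-zero configuration is never \emph{entered} by the MIS dynamics: for any graph, any word, and any start $z \ne 0$, every intermediate configuration is nonzero. Indeed, the first time the running configuration were all-zero, the last update would set some vertex $v$ to $0$ while all other vertices are already $0$; but $\MIS(\cdot)_v = 0$ needs a neighbour of $v$ in state $1$, and there is none, so $v$ is set to $1$ instead, a contradiction. Hence a nonzero start stays nonzero throughout.

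With these in hand I extend $x'$ to a configuration $x$ on $G$ by setting $x_S = x'$ and $x_{V \setminus S} = 0$, and run $w$. Writing $T = \Neighbourhood(S) \setminus S$, the tethered hypothesis gives $\Neighbourhood(s; G) \setminus S = T$ for every $s \in S$: every vertex of $T$ is adjacent to every vertex of $S$, and every outside-$S$ neighbour of an $s \in S$ lies in $T$. I claim $T$ stays identically $0$. A vertex $t \in T$ is updated once, and $\MIS(\cdot)_t = 1$ forces all its neighbours, in particular all of $S$, to be $0$ at that moment. Consider the first time (if any) some $t \in T$ fires; before it, $T$ is identically $0$, so the updates of $S$-vertices see only zeros outside $S$ and therefore reproduce exactly the $G[S]$-dynamics of $w[S]$ started at $x'$. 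By the lemma and $x' \ne 0$, the $S$-part is never all-zero, so $t$ cannot fire, a contradiction; hence no vertex of $T$ ever fires and $T$ remains $0$.

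Since $T \equiv 0$ and each $s \in S$ has all its outside-$S$ neighbours in $T$, the subsequence of $S$-updates of $w$ acts on the $S$-coordinates exactly as $w[S]$ acts in $G[S]$, so $y := \MIS^w(x; G)$ satisfies $y_S = \MIS^{w[S]}(x'; G[S])$. Thus $y_{\Neighbourhood[p; G[S]]} = 0$ while $y_T = 0$, and since $\Neighbourhood[p; G] = \Neighbourhood[p; G[S]] \cup T$, the vertex $p$ has $y_{\Neighbourhood[p; G]} = 0$; so $y$ is not a maximal independent set of $G$ and $w$ is not a permis. I expect the only delicate step to be controlling $T$: the tethered condition reduces the question to whether the $S$-dynamics can empty out completely, and the non-enterability of the all-zero configuration settles this cleanly, including the case of several vertices in $T$.
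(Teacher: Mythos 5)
Your proposal is correct and follows essentially the same route as the paper's proof: extend a failing configuration $x'$ on $S$ by zeros, and use tetheredness plus the fact that a nonzero $S$-part can never become all-zero to show that $T$ stays at $0$, so the $S$-coordinates evolve exactly as the $G[S]$-dynamics of $w[S]$. The only differences are organisational — you argue via a ``first firing in $T$'' contradiction and explicitly prove the never-enters-all-zero lemma (which the paper merely asserts in its induction with cases $w_b \in S$, $T$, $U$), and you phrase the final failure via an uncovered vertex $p$ rather than via $y_S \notin \MaximalIndependentSets(G[S])$ together with $y_T = 0$.
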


\begin{proof}
Let $w$ be any permutation of $V$ and $\hat{w} = w[S]$. Let $\hat{x}$ be a configuration of $G[S]$ which is not fixed by $\hat{w}$: $\MIS^{\hat{w}} ( \hat{x}; G[S] ) \notin \MaximalIndependentSets( G[S] )$. We first note that $\hat{x} \ne 0$ and that for all $0 \le a \le |\hat{w}|$, $\MIS^{ \hat{w}_{1 : a} }( \hat{x}; G[S] ) \ne 0$.

Let $T = \Neighbourhood( S ) \setminus S$ and $U = V \setminus ( S \cup T )$ and $x = ( x_S = \hat{x}, x_T = 0, x_U )$, where $x_U$ is any partial configuration. We prove by induction on $0 \le b \le |w|$ that 
\[
    y^b := \MIS^{ w_{1 : b} }( x; G ) = \left( y^b_S = \MIS^{ \hat{w}_{1 : b'} }( \hat{x}; G[S] ) , y^b_T = 0, y^b_U \right),
\]
where $b'$ is defined by $[ \hat{w}_{1 : b'} ] = S \cap [ w_{1 : b} ]$. The base case $b = 0$ is clear. Suppose it holds for $b - 1$.
\begin{itemize}
    \item Case 1: $w_b \in S$. \\
    Then $b' = (b-1)' + 1$ and $w_b = \hat{w}_{ b' }$. Since $y^{b-1}_T = 0$, we have 
    \[
        y^b_{w_b} = \MIS( y^{b-1} ; G )_{w_b} = \MIS( y^{b-1}_S ; G[S] )_{w_b} = \MIS(  \MIS^{ \hat{w}_{1 : b' - 1} }( \hat{x} ; G[S] ) ; G[S]  )_{ \hat{w}_{ b' } } = \MIS^{ \hat{w}_{1 : b'} }( \hat{x}; G[S] )_{w_b},
    \]
    and hence $y^b_S = \MIS^{ \hat{w}_{1 : b'} }( \hat{x}; G[S] )$.

    \item Case 2: $w_b \in T$. \\
    Then $b' = (b-1)'$. Since $y^{b-1}_S \ne 0$, we have $\MIS( y^{b-1} ; G )_{w_b} = 0$ and hence $y^b_T = 0$.

    \item Case 3: $w_b \in U$. \\
    This case is trivial.
\end{itemize}

For $b = |w|$ we obtain $y = \MIS^w(x ; G) = ( \MIS^{ \hat{w} } ( \hat{x}; G[S] ), 0, y_U )$, for which $y_S \notin \MaximalIndependentSets( G[ S ] )$, and hence $y \notin \MaximalIndependentSets( G )$.
\end{proof}

Propositions \ref{proposition:bad_cycles} and \ref{proposition:blocking_set} yield perhaps the second simplest class of non-permissible graphs. The \Define{wheel graph} is $W_{n+1} = K_2( C_n, K_1 )$.

\begin{corollary} \label{corollary:bad_wheels}
For all $k \ge 3$, the wheel graph $W_{2k+2}$ is not permissible.
\end{corollary}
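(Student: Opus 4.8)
The plan is to apply Proposition \ref{proposition:blocking_set} directly, using the rim of the wheel as the tethered set and invoking Proposition \ref{proposition:bad_cycles} for the induced subgraph. The wheel $W_{2k+2} = K_2(C_{2k+1}, K_1)$ consists of an odd cycle $C_{2k+1}$ (the rim) together with a single hub vertex $h$ that is adjacent to every rim vertex. So the natural candidate for the tethered set is $S = V(C_{2k+1})$, the set of rim vertices.

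First I would verify that $S$ is tethered. The only vertex of $W_{2k+2}$ not in $S$ is the hub $h$, and since $h$ is adjacent to all rim vertices, we have $T = \Neighbourhood(S) \setminus S = \{ h \}$. The tethering condition requires an edge between every $s \in S$ and every $t \in T$; since $T = \{h\}$ and every rim vertex is adjacent to $h$, this holds trivially. Thus $S$ is tethered.

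Next I would observe that the induced subgraph $G[S]$ is exactly the odd hole $C_{2k+1}$, which by Proposition \ref{proposition:bad_cycles} has no permis whenever $k \ge 3$. With both hypotheses of Proposition \ref{proposition:blocking_set} now in place — a tethered set $S$ whose induced subgraph $G[S]$ has no permis — that proposition immediately yields that $W_{2k+2}$ has no permis, i.e. it is not permissible, completing the argument.

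There is no real obstacle here: the corollary is a direct composition of the two preceding results, and the only thing to check carefully is the tethered condition, which is immediate because the rim's external neighbourhood collapses to the single hub vertex adjacent to all of $S$. The one point worth stating explicitly for the reader is why $T$ is exactly $\{h\}$ (every rim vertex has its two cycle-neighbours inside $S$ and only the hub outside), so that the tethering requirement reduces to the trivially-true fact that $h$ is universal on the rim.
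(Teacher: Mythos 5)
Your proof is correct and matches the paper's intended argument exactly: the corollary is stated as an immediate consequence of Propositions \ref{proposition:bad_cycles} and \ref{proposition:blocking_set}, applied with the rim $S = V(C_{2k+1})$ as the tethered set (tethered since $T = \{h\}$ and the hub is adjacent to all of $S$). Your explicit verification of the tethering condition is the right level of care, and nothing is missing.
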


Clearly, a graph is permissible if and only if all its connected components are permissible. In particular, for any $G$, the union $H = G \cup C_7$ is not permissible, but is disconnected. An interesting consequence of Proposition \ref{proposition:blocking_set} is that permissibility of a connected graph cannot be decided by focusing on an induced subgraph, even if the latter has all but seven vertices of the original graph. Indeed, for any graph $G$, the join $H' = K_2( C_7, G )$ is not permissible, since the heptagon is tethered in $H'$.

\begin{corollary} \label{corollary:non-permissible_induced}
Let $G$ be a graph. Then there exists a connected non-permissible graph $H'$ such that $G$ is an induced subgraph of $H'$.
\end{corollary}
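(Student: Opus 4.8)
The plan is to realise $G$ inside a connected non-permissible graph by attaching a non-permissible gadget in a way that propagates non-permissibility to the whole graph. The natural candidate, already hinted at in the discussion preceding the corollary, is the join $H' = K_2(C_7, G)$: take a copy of the heptagon $C_7$ together with $G$ and add every edge between the two parts. The heptagon is the smallest non-permissible graph, and the join will make it tethered, so that the blocking-set machinery of Proposition \ref{proposition:blocking_set} applies.

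First I would record three structural facts about $H'$. (i) It is connected: two vertices of $C_7$ are joined by a path inside $C_7$, two vertices of $G$ are joined through any vertex of the heptagon, and a vertex of $C_7$ and a vertex of $G$ are adjacent by construction of the join. (ii) $G$ is an induced subgraph of $H'$, since the join adds no edges inside the $G$-part, so $H'[V(G)] = G$. (iii) Writing $S = V(C_7)$, the set $S$ is tethered: because $H'$ is a join, $T = \Neighbourhood(S) \setminus S = V(G)$, and every $s \in S$ is adjacent to every $t \in V(G)$, so there is an edge $st$ between any $s \in S$ and any $t \in T$, which is exactly the definition of tethered.

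Second, I would invoke the two earlier propositions. By Proposition \ref{proposition:bad_cycles} with $k = 3$, the induced subgraph $H'[S] = C_7$ has no permis. Since $S$ is tethered and $H'[S]$ is non-permissible, Proposition \ref{proposition:blocking_set} immediately yields that $H'$ has no permis, i.e. $H'$ is non-permissible. Together with (i) and (ii) this proves the corollary; the degenerate case $G = \emptyset$ is trivial, as then $H' = C_7$ already works.

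There is no substantial obstacle here: the whole argument is a direct application of the blocking-set result. The one point requiring a moment's care is the verification in (iii) that $T$ is \emph{exactly} $V(G)$, so that the ``edge between every $s$ and every $t$'' condition of Proposition \ref{proposition:blocking_set} is genuinely met. This is precisely where the join, rather than a disjoint union, is essential: it is the completeness of the bipartite connection between $C_7$ and $G$ that renders $S$ tethered, and a disjoint union (where $S$ would have empty $T$ and the heptagon would merely form a separate component) would only give the disconnected construction $H = G \cup C_7$ mentioned just before the corollary.
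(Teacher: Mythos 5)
Your proposal is correct and matches the paper's own argument exactly: the paper also takes the join $H' = K_2(C_7, G)$, observes that the heptagon is tethered in $H'$, and concludes non-permissibility via Proposition \ref{proposition:blocking_set} together with Proposition \ref{proposition:bad_cycles}. Your write-up simply makes explicit the routine verifications (connectedness, inducedness, and that $T = \Neighbourhood(S)\setminus S$ is exactly $V(G)$) that the paper leaves implicit.
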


Second, and unsurprisingly, we can construct larger non-permissible graphs by using a constituency.

\begin{proposition} \label{proposition:no_permis_constituency}
Let $G$ be a graph. If there exists $A \subseteq V$ such that $G[A]$ is not permissible and $S = \Neighbourhood( A ) \setminus A$ is a constituency of $G - A$, then $G$ is not permissible.
\end{proposition}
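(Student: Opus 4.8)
The plan is to mirror the strategy of Proposition~\ref{proposition:blocking_set}: fix an arbitrary permutation $w$ of $V$ and exhibit a single starting configuration $x$ for which $\MIS^w(x;G) \notin \MaximalIndependentSets(G)$, so that $w$ is not a permis. Write $\hat{w} = w[A]$, which is a permutation of $A$; since $G[A]$ is not permissible, there is a configuration $\hat{x}$ on $G[A]$ with $\MIS^{\hat{w}}(\hat{x};G[A]) \notin \MaximalIndependentSets(G[A])$. The aim is to embed this failure into $G$ by forcing the $A$-part of the dynamics to run exactly as it would in isolation inside $G[A]$, and the whole argument hinges on protecting $A$ from its boundary $S = \Neighbourhood(A) \setminus A$.

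First I would upgrade the constituency hypothesis using the equivalence recorded after the definition of constituency: $S$ being a constituency of $G - A$ is the same as the existence of a maximal independent set $M$ of $G - A$ with $M \cap S = \emptyset$. Such an $M$ dominates all of $V \setminus A \supseteq S$, and crucially it has no neighbour in $A$, since any $m \in M$ adjacent to $A$ would lie in $\Neighbourhood(A) \setminus A = S$, contradicting $M \cap S = \emptyset$. Choosing $M$ \emph{maximal} (rather than an arbitrary dominating independent set) is the key move, because $\CharacteristicVector(M)$ is then a fixed point of $\MIS(G-A)$, and this is what will let me freeze the outside.

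Next I would set $x = (x_A = \hat{x},\ x_{V \setminus A} = \CharacteristicVector(M))$ and prove by induction on $b$ that $y^b_{V \setminus A} = \CharacteristicVector(M)$ for every prefix length $b$; in particular $y^b_S = 0$ throughout. The case analysis on $w_b$ is where the work lives. Updating a vertex of $A$ leaves $V \setminus A$ untouched. Updating $m \in M$ keeps it at $1$, since all its neighbours lie in $S \cup U$ (with $U = V \setminus (A \cup S \cup M)$) and are currently $0$. Updating $s \in S$ or $u \in U$ sends it to $0$, because $M$ dominates $V \setminus A$, so such a vertex has a neighbour in $M$ at state $1$. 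I expect this induction, and specifically the $S$-vertex case, to be the main obstacle: $S$ is exactly the interface through which $A$ and $G - A$ interact, since an $s \in S$ also sees neighbours inside $A$. The resolution is that those extra neighbours can only AND further towards $0$, which is consistent with keeping $S$ at $0$; combined with the fact that $M$ is disjoint from $S$ and has no neighbour in $A$, the outside is never disturbed.

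Finally I would harvest the conclusion. Because $y_S = 0$ at all times and every neighbour of an $A$-vertex outside $A$ lies in $S$, each update of $a \in A$ satisfies $\MIS(G)(y^{b-1})_a = \MIS(G[A])(y^{b-1}_A)_a$, so $y_A = \MIS^{\hat{w}}(\hat{x};G[A]) \notin \MaximalIndependentSets(G[A])$. If $y_A$ is not independent in $G[A]$, then since $M$ is independent and has no neighbour in $A$, the offending edge survives in $\one(y)$ and $y$ is not independent in $G$. If instead $y_A$ is independent but not maximal in $G[A]$, there is $a \in A$ with $y_{\Neighbourhood[a] \cap A} = 0$; as $\Neighbourhood(a) \subseteq A \cup S$ and $y_S = 0$, we get $y_{\Neighbourhood[a]} = 0$, so $a$ is uncovered and $y \notin \Fix(\MIS(G))$. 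In both cases $y \notin \MaximalIndependentSets(G)$, so no permutation of $V$ is a permis and $G$ is not permissible.
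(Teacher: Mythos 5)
Your proof is correct and takes essentially the same approach as the paper: the paper picks an independent set $I$ of $G-A$ dominating $S$, sets $x_I = 1$ and the rest of the exterior to $0$, and shows that $I$ stays at $1$ and $S$ stays at $0$ throughout, so the dynamics on $A$ decouple and inherit the failure of $w[A]$. Your only variation is to upgrade $I$ to a maximal independent set $M$ of $G-A$ avoiding $S$ (using the equivalence stated after the definition of constituency), which freezes the entire exterior at a fixed point rather than just controlling $I \cup \Neighbourhood(I)$ --- a cosmetic strengthening of the same argument.
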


\begin{proof}
Let $I$ be an independent set of $G - A$ such that $S \subseteq \Neighbourhood( I )$. Let $w$ be a permutation of $V$ and $w[A]$ be the permutation of $A$ induced by $w$. Then let $\dot{ x } \in \{0,1\}^A$ be a configuration such that $\dot{ y } = \MIS^{ w[A] }( \dot{x} ) \notin \MaximalIndependentSets( G[A] )$. Then consider $x \in \{0,1\}^V$ such that: $x_I = 1$, $x_A = \dot{ x }$, $x_{V - A - I} = 0$. We then have $y_I = 1$, $y_S = 0$, and hence $y_A = \dot{ y }$. Since $\dot{ y } \notin \MaximalIndependentSets( G[A] )$ and $y_S = 0$, we obtain that $y \notin \MaximalIndependentSets( G )$, i.e. $w$ is not a permis of $G$.
\end{proof}

For all $k \ge 3$ the odd hole $C_{2k+1}$ is non-permissible. Consider the graph $C_{2k+1+}$ by adding a vertex of degree one to $C_{2k+1}$; as we shall see later it is permissible. Now add another vertex of degree one to the tail of $C_{2k+1+}$ to obtain $C_{2k+1++}$. The graphs $C_7$, $C_{7+}$ and $C_{7++}$ are illustrated in Figure \ref{figure:c7_c7+_c7++}. In $C_{7++}$, the vertex $S = \{ \eta \}$ is a constituency and is the neighbourhood of the heptagon; therefore $C_{7++}$ is not permissible. Obviously, this reasoning applies to all larger $C_{2k+1++}$ as well.

\begin{corollary} \label{corollary:odd_cycle++}
For all $k \ge 3$, $C_{2k+1++}$ is not permissible.
\end{corollary}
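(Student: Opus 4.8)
The plan is to apply Proposition \ref{proposition:no_permis_constituency} directly, with the odd hole playing the role of the non-permissible induced subgraph $A$ and the inner tail vertex forming the constituency $S$. The argument is uniform in $k \ge 3$, so it suffices to exhibit the right choice of $A$ and $S$ in $C_{2k+1++}$.

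First I would fix notation for the tail. Write the two added pendant vertices of $C_{2k+1++}$ as $\eta$ and $\xi$, where $\eta$ is adjacent to a single cycle vertex $c$, and $\xi$ is the outer leaf adjacent only to $\eta$. Set $A$ to be the vertex set of the odd hole, so that $G[A] = C_{2k+1}$.

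Next I would check the two hypotheses of Proposition \ref{proposition:no_permis_constituency}. The first, that $G[A]$ is not permissible, is exactly Proposition \ref{proposition:bad_cycles}. For the second, I compute $S = \Neighbourhood(A) \setminus A$: the only vertex outside the cycle with a neighbour on the cycle is $\eta$ (adjacent to $c$), so $S = \{ \eta \}$. The graph $G - A$ is the single edge $\eta\xi$, and in it the independent set $\{ \xi \}$ dominates $\eta$, since $\eta \sim \xi$; hence $S = \{ \eta \} \subseteq \Neighbourhood(\{\xi\}; G - A)$, so $S$ is a constituency of $G - A$. Proposition \ref{proposition:no_permis_constituency} then yields that $G = C_{2k+1++}$ is not permissible.

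I do not expect a genuine obstacle here; the only point requiring care is why the tail must have length exactly two. With a single pendant (i.e.\ in $C_{2k+1+}$), the graph $G - A$ would be the isolated vertex $\eta$, which cannot lie in the neighbourhood of any independent set of $G - A$, so $\{ \eta \}$ would fail to be a constituency — consistent with $C_{2k+1+}$ being permissible. The second pendant $\xi$ is precisely what supplies the dominating independent set that makes $\eta$ a constituency, and this is the crux that lets the proposition apply.
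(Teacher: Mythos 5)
Your proof is correct and follows exactly the paper's argument: the paper likewise applies Proposition \ref{proposition:no_permis_constituency} with $A$ the odd hole and $S = \{\eta\}$ (the neighbourhood of the hole), which is a constituency of $G - A$ thanks to the outer pendant vertex, combined with Proposition \ref{proposition:bad_cycles}. Your closing remark on why a single pendant would not suffice matches the paper's observation that $C_{2k+1+}$ is permissible, so there is nothing to add.
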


\begin{figure}
    \centering
    \includegraphics[page=13,width=\textwidth]{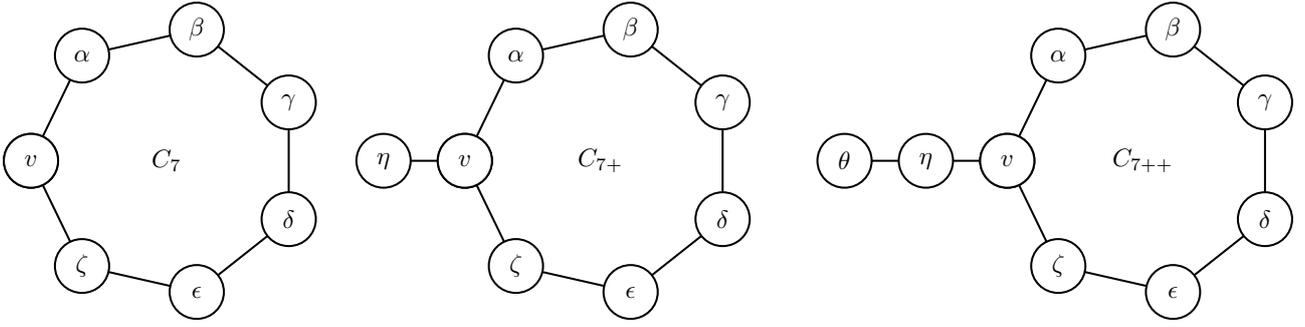}
    \caption{Graphs $C_{7}$ (non-permissible), $C_{7+}$ (permissible), and $C_{7++}$ (non-permissible).}
    \label{figure:c7_c7+_c7++}
\end{figure}

\subsection{The \DecisionProblem{Permissible} decision problem}

We now prove that deciding whether a graph is permissible is computationally hard.

\begin{decisionproblem}
    \problemtitle{Permissible}
    \probleminput{A graph $G$.}
    \problemquestion{Is $G$ permissible?}
\end{decisionproblem}

\begin{figure}
    \centering
    \includegraphics[page=12,width=0.5\textwidth]{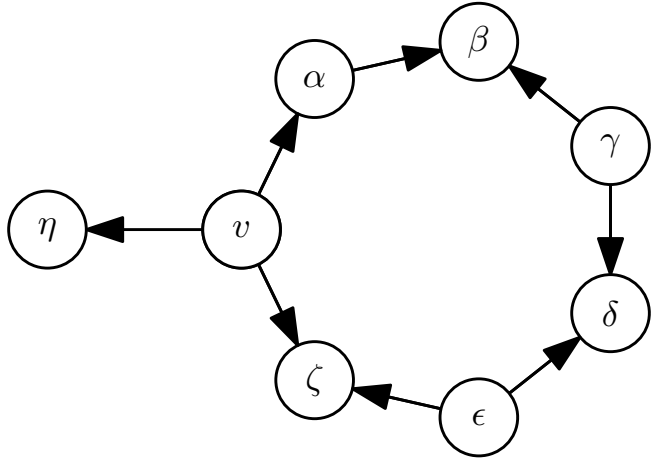}
    \caption{The permis of $C_{7+}$.}
    \label{figure:permis_C7+}
\end{figure}

As mentioned above, $C_{7+}$ is permissible; the proof of Lemma \ref{lemma:permis_C7+} below can easily be generalised to show that $C_{2k+1+}$ is permissible for all $k \ge 3$.

\begin{lemma} \label{lemma:permis_C7+}
Any $w$ such that ${ C_{7+} }^w$ is given on Figure \ref{figure:permis_C7+} is a permis of $C_{7+}$.
\end{lemma}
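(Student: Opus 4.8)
The plan is to reduce the statement to a short, finite verification within the covering framework. First I would invoke the remark following the definition of the induced orientation $G^w$: by a simple application of \cite[Theorem 1]{AGMS09}, whether a permutation is a permis depends only on the acyclic orientation it induces. Hence it suffices to analyse the single orientation $H = {C_{7+}}^w$ drawn in Figure \ref{figure:permis_C7+}, and since $w$ is a permis if and only if it covers every vertex, the goal becomes: every vertex of $C_{7+}$ is covered by $w$.

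The engine of the argument is Lemma \ref{lemma:near-transitive_covered}, which guarantees that every near-transitive vertex is covered; so I would show that $H$ is a near-transitive orientation. Write the cycle as $c_1 \dots c_7$ with the pendant vertex $p$ attached to $c_1$, so that $\Neighbourhood[ p ] = \{ p, c_1 \} \subsetneq \Neighbourhood[ c_1 ]$; in particular $c_1$ is the unique non-benjamin. Reading the orientation off the figure, both cycle arcs at $c_1$ point outward and the edge $c_1 p$ is oriented $c_1 \to p$, so $p$ is a sink and is therefore (vacuously) transitive; consequently $c_1$ is near-transitive, being dominated by the transitive vertex $p$ through $\Neighbourhood[ p ] \subseteq \Neighbourhood[ c_1 ]$. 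It then remains to check the six cycle vertices $c_2, \dots, c_7$: in the figure's orientation the path $c_2 c_3 c_4 c_5 c_6 c_7$ is oriented in alternating (zig-zag) fashion, so that every out-neighbour of each $c_i$ is a sink. Thus no $c_i$ is the tail of a directed path of length two, and each is transitive. Having exhibited every vertex as transitive or near-transitive, Lemma \ref{lemma:near-transitive_covered} shows all vertices are covered, so $w$ is a permis.

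The only genuinely delicate point, and the reason the pendant is present, is that all six cycle vertices must be made transitive simultaneously. By Proposition \ref{proposition:bad_cycles} the bare $C_7$ admits no such orientation: without the pendant the attachment vertex would itself have to be transitive, which would force the arcs around the odd cycle to alternate, an impossibility. The pendant resolves this by allowing $c_1$ to be merely near-transitive, which frees its two cycle arcs to point outward and leaves the deleted-$c_1$ path to be oriented transitively. I would remark that this is exactly the near-transitive orientation produced by the construction in the proof of Theorem \ref{theorem:near-comparability} ($\ref{item:benjamins_comparability} \Rightarrow \ref{item:near-comparability}$), since $( C_{7+} )_\Benjamins$ is the path $c_2 \dots c_7$ together with the isolated vertex $p$, which is bipartite and hence a comparability graph; the figure simply displays this orientation. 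The identical reasoning for $C_{2k+1+}$, whose benjamin subgraph is a path on $2k$ vertices plus an isolated vertex, yields the promised generalisation.
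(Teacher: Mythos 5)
Your proposal is correct, but it takes a genuinely different route from the paper. The paper's proof only invokes Lemma \ref{lemma:near-transitive_covered} for the four sinks $\beta, \delta, \zeta, \eta$ (transitive vacuously) and for $v$ (near-transitive via the pendant $\eta$), and then verifies the remaining vertices $\alpha$, $\gamma$, $\epsilon$ by hand, via explicit chains of implications that propagate constraints on $x$ and $y$ around the cycle until a contradiction. You instead observe that in the figure's orientation --- which, reading it off the paper's own implication chains, is $v \to \alpha \to \beta \gets \gamma \to \delta \gets \epsilon \to \zeta \gets v$ together with $v \to \eta$ --- every out-neighbour of $\alpha$, $\gamma$ and $\epsilon$ is a sink, so these three vertices are \emph{also} transitive (no directed path of length two leaves them), and the single application of Lemma \ref{lemma:near-transitive_covered} to a near-transitive orientation finishes the proof with no case analysis at all. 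This is sound: the definition of transitivity in the paper is exactly ``$t \to a \to b$ implies $t \to b$'', which holds vacuously here, and Lemma \ref{lemma:near-transitive_covered} handles transitive vertices directly. So your proof is strictly more economical than the paper's, and it buys more: as you note, $(C_{7+})_\Benjamins$ is the path $c_2 \dots c_7$ plus the isolated vertex $p$ (since $\Neighbourhood[p] \subset \Neighbourhood[c_1]$, the attachment vertex is the unique non-benjamin), which is bipartite and hence a comparability graph, so by Theorem \ref{theorem:near-comparability} and Proposition \ref{proposition:near-comparability_permissible} the lemma becomes an instance of ``near-comparability implies permissible'', and the generalisation to $C_{2k+1+}$ promised after the lemma is immediate rather than requiring the chains to be redone. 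Two small caveats: your argument rests on correctly reading the orientation from the figure (the zig-zag claim), though this is the same epistemic dependence the paper's own proof has, and it is confirmed by the arcs cited in the paper's chains; and your attribution of ``$C_7$ admits no near-transitive orientation'' to Proposition \ref{proposition:bad_cycles} is slightly off --- that proposition proves the stronger statement that no permis exists, while the impossibility of an alternating orientation of an odd cycle is an elementary parity fact --- but this remark is contextual and not load-bearing.
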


\begin{proof}
The vertices $\beta$, $\delta$, $\zeta$, and $\eta$ are all transitive, and $v$ is near-transitive, and hence these are covered. We only need to show that $\alpha$, $\gamma$ and $\epsilon$ are covered.

For $\alpha$, suppose $y_{v \alpha \beta} = 000$. We then have the following chain of implications:
\begin{alignat*}{3}
    ( y_\alpha = 0 \meet y_\beta = 0 ) & \; \meet \; &  ( \alpha \to \beta \gets \gamma ) & \; \implies y_\gamma = 1 \\
    ( y_\gamma = 1 ) & \; \meet \; & ( \gamma \to \beta ) & \; \implies x_\beta = 0 \\
    ( x_\beta = 0 \meet  y_v = 0 ) & \; \meet \; & ( v \to \alpha \to \beta ) & \; \implies y_\alpha = 1,
\end{alignat*}
which is the desired contradiction.

For $\gamma$, suppose $y_{\beta \gamma \delta } = 000$. We then have the following chain of implications:
\begin{alignat*}{3}
    ( y_\beta = 0 \meet y_\gamma = 0 ) & \; \meet \; & ( \alpha \to \beta \gets \gamma ) & \; \implies y_\alpha = 1 \\
    ( y_\alpha = 1 ) & \; \meet \; & ( \alpha \to \beta ) & \; \implies x_\beta = 0 \\
    ( y_\gamma = 0 \meet y_\delta = 0 ) & \; \meet \; & ( \gamma \to \delta \gets \epsilon ) & \; \implies y_\epsilon = 1 \\
    ( y_\epsilon = 1 ) & \; \meet \; & ( \delta \gets \epsilon ) & \; \implies x_\delta = 0 \\
    ( x_\beta = 0 \meet x_\delta = 0 ) & \; \meet \; & ( \beta \gets \gamma \to \delta ) & \; \implies y_\gamma = 1,
\end{alignat*}
which is the desired contradiction.

For $\epsilon$, suppose $y_{\delta \epsilon \zeta} = 000$. We then have the following chain of implications:
\begin{alignat*}{3}
    ( y_\delta = 0 \meet y_\epsilon = 0 ) & \; \meet \; & ( \gamma \to \delta \gets \epsilon ) & \; \implies y_\gamma = 1 \\
    ( y_\gamma = 1 ) & \; \meet \; & ( \gamma \to \delta ) & \; \implies x_\delta = 0 \\
    ( y_\epsilon = 0 \meet y_\zeta = 0 ) & \; \meet \; & ( \epsilon \to \zeta \gets v ) & \; \implies y_v = 1 \\
    ( y_v = 1 ) & \; \meet \; & ( \zeta \gets v ) & \; \implies x_\zeta = 0 \\
    ( x_\delta = 0 \meet x_\zeta = 0 ) & \; \meet \; & ( \delta \gets \epsilon \to \zeta ) & \; \implies y_\epsilon = 1,
\end{alignat*}
which is the desired contradiction.
%
% \begin{tikzpicture}[xscale=2]
%     \node (v)       at (0,0) {$v$};
%     \node (alpha)   at (1,0) {$\alpha$};
%     \node (beta)    at (2,0) {$\beta$};
%     \node (gamma)   at (3,0) {$\gamma$};

%     \draw[-latex] (v) -- (alpha);
%     \draw[-latex] (alpha) -- (beta);
%     \draw[-latex] (gamma) -- (beta);

%     \node (x) at (-1, 1) {$x$};
%     \node (xv) at (0,1) {};
%     \node (xa) at (1,1) {};
%     \node (xb) at (2,1) {$0$};
%     \node (xc) at (3,1) {};

%     \node (y) at (-1, -1) {$y$};
%     \node (yv) at (0,-1) {$0$};
%     \node (ya) at (1,-1) {$0$};
%     \node (ya1) at (1,-0.5) {$1$};
%     \node (yb) at (2,-1) {$0$};
%     \node (yc) at (3,-1) {$1$};

%     \draw[-latex, green] (yb) -- (yc);
%     \draw[-latex, green] (ya) to [bend right=30] (yc);
    
%     \draw[-latex, blue] (yc) -- (xb);

%     \draw[-latex, red] (xb) -- (ya1);
%     \draw[-latex, red] (yv) -- (ya1);
% \end{tikzpicture}
\end{proof}

\begin{theorem} \label{theorem:complexity_permissible}
    \DecisionProblem{Permissible} is \coNP-hard.
\end{theorem}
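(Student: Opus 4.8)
The plan is to prove \coNP-hardness by reduction from \DecisionProblem{Non-Constituency}, which is \coNP-complete by Corollary~\ref{corollary:complexity_non-constituency}. Given an instance $(G,S)$ — which I may take to be of empty type, so that $S$ is an independent set of $G$ — I would build a graph $\hat{G}$ with the property that $\hat{G}$ is permissible if and only if $S$ is a non-constituency of $G$. Since the introduction already notes that there is no obvious candidate no-certificate, I only aim for hardness and make no attempt at membership in \coNP.

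The construction attaches the smallest non-permissible graph, the heptagon, to $S$. Write $T = V \setminus S$, and let $A = \{ v, \alpha, \beta, \gamma, \delta, \epsilon, \zeta \}$ be a fresh copy of $C_7$ with the cyclic adjacencies fixed so that $v$ plays the role of the distinguished vertex of $C_{7+}$ in Lemma~\ref{lemma:permis_C7+}. Form $\hat{G}$ from $G$ by adding $A$, joining $v$ to every vertex of $S$, and attaching one pendant vertex $p_t$ to each $t \in T$. Two features drive the reduction. First, $\Neighbourhood(A) \setminus A = S$, since $v$ is the only vertex of $A$ with neighbours outside $A$. Second, each pendant $p_t$ is adjacent only to its base $t \in T$ and so can never help dominate $S$; hence an independent set of $\hat{G} - A$ dominating $S$ restricts to one of $G$ and vice versa, so $S$ is a constituency of $\hat{G} - A$ if and only if it is a constituency of $G$. (This is precisely why pendants are placed on $T$ and never on $S$.) The forward direction is then immediate: if $S$ is a constituency of $G$, then $\hat{G}[A] = C_7$ is non-permissible by Proposition~\ref{proposition:bad_cycles} while $\Neighbourhood(A)\setminus A = S$ is a constituency of $\hat{G}-A$, so Proposition~\ref{proposition:no_permis_constituency} shows $\hat{G}$ is non-permissible.

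For the converse I would exhibit one permis $w$ and verify it works whenever $S$ is a non-constituency. Orient the heptagon exactly as in Lemma~\ref{lemma:permis_C7+}, orient each pendant edge as $t \to p_t$, and place $S$ at the very end of $w$ (so in particular $v \to s$ for every $s \in S$). Because $w$ is a permutation, $[w] = \hat{V}$ is a vertex cover, so $w$ prefixes $\MIS(\hat{G})$ by Proposition~\ref{proposition:prefix} and $y = \MIS^w(x)$ is independent for every $x$; it therefore suffices to show $w$ covers every vertex. Each $p_t$ is a sink, hence transitive and covered, and each $t \in T$ is near-transitive through $p_t$, hence covered by Lemma~\ref{lemma:near-transitive_covered}. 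Each $s \in S$ is updated last and has all neighbours outside $S$ (as $S$ is independent), so its neighbours hold their final values when $s$ is updated; thus either $y_s = 1$ or some neighbour of $s$ is $1$ in $y$, and $s$ is covered unconditionally. Inside the heptagon, $\beta, \delta, \zeta$ are sinks (transitive) and $\alpha, \gamma, \epsilon$ are covered by the implication chains of Lemma~\ref{lemma:permis_C7+}, which reference only the internal orientation of $A$ and so transfer verbatim.

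The one vertex whose coverage genuinely invokes the hypothesis — and the main obstacle — is $v$. Suppose $v$ is not covered, i.e. $y_{\Neighbourhood[v]} = 0$ for some $x$; then $y_v = y_\alpha = y_\zeta = 0$ and $y_s = 0$ for all $s \in S$. Since $S$ is updated last and independent, each $s$ with $y_s = 0$ has a neighbour with final value $1$, and as $y_v = 0$ this neighbour lies in $\Neighbourhood(s) \cap T$; hence $I = \one(y) \cap \Neighbourhood(S)$ is a subset of the independent set $\one(y)$ that dominates $S$ in $G$, making $S$ a constituency of $G$ and contradicting the hypothesis. Thus $v$ is covered, $w$ is a permis, and $\hat{G}$ is permissible. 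The heart of the proof is therefore to translate the purely combinatorial non-constituency condition into the dynamical statement that the distinguished heptagon vertex cannot be emptied out; the leverage that makes this manageable is that attaching the heptagon through the single vertex $v$ lets me reuse the explicit permis of $C_{7+}$ unchanged for the remaining six heptagon vertices, while the pendants-on-$T$ device handles the rest of $G$ without disturbing the constituency status of $S$.
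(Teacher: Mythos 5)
Your proposal is correct and takes essentially the same approach as the paper: the identical construction (a heptagon attached through a single vertex $v$ to $S$, with pendant vertices on $T$), the same forward direction via Propositions \ref{proposition:bad_cycles} and \ref{proposition:no_permis_constituency}, and the same permis for the converse (the $C_{7+}$ orientation of Lemma \ref{lemma:permis_C7+} with $S$ and the pendants as sinks, with $v$'s coverage deduced from non-constituency exactly as in the paper). The only cosmetic difference is that you verify coverage of $\alpha,\gamma,\epsilon$ by observing that the implication chains of Lemma \ref{lemma:permis_C7+} transfer verbatim to $\hat{G}$, whereas the paper instead simulates the dynamics on $A$ by a $C_{7+}$ configuration with a synthetic pendant value; both verifications are sound.
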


\begin{proof}
Reduction from \DecisionProblem{Non-Constituency}. Let $(G, S)$ be an instance of \DecisionProblem{Non-Constituency} of empty type, and let $T = V \setminus S$. Let $T'$ be a copy of $T$. Let $A$ be another set of $7$ vertices inducing a heptagon and let $v \in A$. Then consider the instance $\hat{G} = ( \hat{V}, \hat{E} )$ of \DecisionProblem{Permissible} with 
\begin{align*}
    \hat{V} &= V \cup T' \cup A \\
    \hat{E} &= E \cup \{ tt' : t \in T \} \cup C_7(A) \cup \{ sv : s \in S \}.
\end{align*}
% $\hat{V} = V \cup T' \cup A$ and $\hat{E} = E \cup \{ tt' : t \in T \} \cup C_7(A) \cup \{ sv : s \in S \}$. 
This construction is illustrated in the left hand side of Figure \ref{figure:complexity_permissible}.

We claim that $(G, S)$ is a yes-instance of \DecisionProblem{Non-Constituency} if and only if $\hat{G}$ is a yes-instance of \DecisionProblem{Permissible}. First, if $S$ is a constituency of $G$, then $S$ is a constituency of $\hat{G} - A$, and hence $\hat{G}$ is not permissible by Proposition \ref{proposition:no_permis_constituency}. Second, if $S$ is not a constituency of $G$, let $\omega$ be the permis of $C_{7+}$ given in Lemma \ref{lemma:permis_C7+} and consider the orientation $\hat{G}^w$ such that $T' \cup S$ are all sinks and $\hat{G}^w[A \cup \{ s \}] = { C_{7+} }^\omega$ for all $s \in S$. (The edges of $\hat{G}[ T ]$ can be oriented arbitrarily.) This is illustrated in the right hand side of Figure \ref{figure:complexity_permissible}.

The vertices in $T \cup T' \cup S$ are all near-transitive, and hence covered by $w$. We now prove that the vertices in $B = A \setminus \{ v \}$ are also covered by $w$. Let $\dot{ x } \in \{0,1\}^{ A \cup \{ \eta \} }$ be given by $\dot{ x }_a = x_a$ for all $a \in A$ and $\dot{ x }_\eta = \bigmeet_{s \in S} \neg x_s$, and let $\dot{ y } \in \{ 0,1 \}^{A \cup \{ \eta \} } = \MIS^\omega( \dot{ x }; C_{7+} )$. Since $A \succ V \setminus A$ in $\hat{G}^w$, we obtain $y_A = \MIS^w( x; \hat{G} )_A = \MIS^\omega( \dot{x}; C_{7+} )_A = \dot{ y }_A$. For all $u \in B$, we have $\Neighbourhood[ u; \hat{G} ] = \Neighbourhood[ u; C_{7+} ]$ and since $\omega$ is a permis, $y_{ \Neighbourhood[ u; \hat{G} ] } = \dot{ y }_{ \Neighbourhood[ u; C_{7+} ] } \ne 0$, thus $u$ is covered by $w$. Therefore, the only vertex in contention is $v$. Suppose $y_v = 0$ and let $I = \one( y ) \cap T$. Since $I$ is an independent set of $G$ and $S$ is not a constituency of $G$, there exists $s \in S$ outside of the neighbourhood of $I$. Thus, $y_{ \Neighbourhood(s) } = 0$ and hence $y_s = 1$, which yields $y_{\Neighbourhood[ v ]} \ne 0$. Thus $v$ is covered, and $w$ is a permis of $\hat{G}$.
\end{proof}

\begin{figure}
    \centering
    \includegraphics[page=7,width=\textwidth]{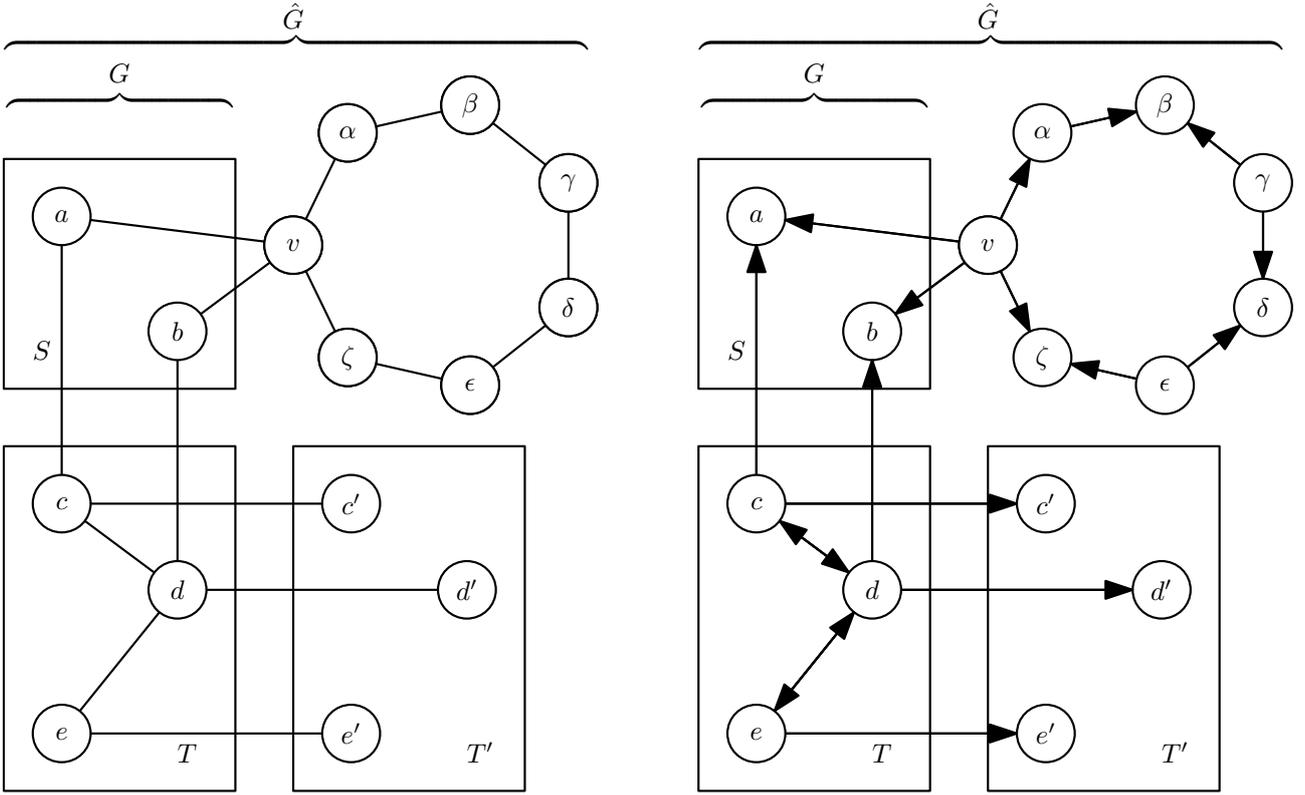}
    \caption{Illustration of the reduction from \DecisionProblem{Non-Constituency} to \DecisionProblem{Permissible}.}
    \label{figure:complexity_permissible}
\end{figure}

\section{Extension to kernels in digraphs} \label{section:digraphs}

The MIS network can be easily extended to digraphs as follows. The \Define{kernel network}, denoted by $\Kernel(G)$, is defined by 
\[
    \Kernel(x)_v = \bigmeet_{u \to v} \neg x_u,
\]
where $\Kernel(x)_v = 1$ if $\InNeighbourhood(v) = \emptyset$. We then have $\Fix( \Kernel( G ) ) = \Kernels(G)$.

\subsection{Fixability of the kernel network} \label{subsection:sequential_convergence_K}

% We say a Boolean network is \Define{fixable} if it has a fixing word.

Henceforth we consider the fixability of the kernel network.  We saw that $\vec{C}_{2k+1}$ has no kernel, and hence it is not fixable. However, the existence of kernels does not guarantee fixability, as we shall prove in Proposition \ref{proposition:no_convergence_K}.

First, we give some sufficient conditions for a digraph to have a kernel and yet to not be fixable. Generalising the definition from the undirected case, we say a set of vertices $S$ is \Define{tethered} if there is an undirected edge $st$ between any $s \in S$ and any $t \in T = \InNeighbourhood(S) \setminus S$. Note that necessarily $T \subseteq \OutNeighbourhood(S)$ but possibly $\OutNeighbourhood(S) \setminus T\neq \emptyset$.

\begin{proposition} \label{proposition:no_convergence_K}
If $G$ has a tethered set $S$ such that $\Kernel( G[ S ] )$ is not fixable, then $\Kernel( G )$ is not fixable.
\end{proposition}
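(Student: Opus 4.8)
The plan is to mirror the strategy of Proposition~\ref{proposition:blocking_set} (the undirected counterpart): take an \emph{arbitrary} word $w$ on $V(G)$ and exhibit a starting configuration from which $w$ fails to reach a fixed point, so that $w$ does not fix $\Kernel(G)$; since $w$ is arbitrary, $\Kernel(G)$ is not fixable. Writing $T = \InNeighbourhood(S) \setminus S$ and $U = V \setminus (S \cup T)$, I would set $\hat{w} = w[S]$, choose a configuration $\hat{x}$ of $G[S]$ that $\hat{w}$ fails to fix (possible because $\Kernel(G[S])$ is not fixable, so \emph{no} word fixes it, in particular not $\hat{w}$), and lift it to $x = (x_S = \hat{x},\ x_T = 0,\ x_U)$ on $G$ with $x_U$ arbitrary. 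The tethered hypothesis supplies two structural facts I will use throughout: every $t \in T$ has a symmetric edge to every $s \in S$, so $S \subseteq \InNeighbourhood(t)$; and $\InNeighbourhood(S) = (\InNeighbourhood(S) \cap S) \cup T \subseteq S \cup T$, so no vertex of $U$ is an in-neighbour of any vertex of $S$.

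Two preliminary facts drive the argument, and the first is where I expect the real work, since it is exactly where the directed setting departs from the permutation-based proof of Proposition~\ref{proposition:blocking_set}. \textbf{(i) A nonzero bad configuration exists.} In Proposition~\ref{proposition:blocking_set} the projected word was a full permutation of $S$, so $\hat{x} = 0$ was automatically fixed and could be discarded; here $\hat{w} = w[S]$ is an arbitrary (possibly empty) word, so I must argue directly that $\hat{x}$ may be taken nonzero. If the only non-fixed configuration were $0$ and $\hat{w}$ is nonempty, then $z = \Kernel^{\hat{w}_1}(0; G[S]) \ne 0$ (updating any vertex from $0$ sets it to $1$), and by the idempotency $\Kernel^{vv} = \Kernel^{v}$ one checks $\Kernel^{\hat{w}}(z; G[S]) = \Kernel^{\hat{w}}(0; G[S])$, which is not a kernel, so $z$ is a nonzero bad configuration. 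If instead $\hat{w}$ is empty, then ``$0$ is the only non-kernel configuration'' would force every nonempty subset of $S$ to be a kernel, which is impossible once $|S| \ge 2$, while $|S| \le 1$ makes $G[S]$ trivially fixable, contradicting the hypothesis. \textbf{(ii) Nonzero is preserved.} In the kernel network a single update of a nonzero configuration is again nonzero: if $\one(z) = \{v\}$ and one updates $v$ then $\Kernel(z)_v = 1$ since $v$ has no in-neighbour in $\one(z)$, and otherwise the support cannot be emptied by one update. Hence $\Kernel^{\hat{w}_{1:a}}(\hat{x}; G[S]) \ne 0$ for every $a$.

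With these in hand I would run the core induction on $0 \le b \le |w|$, claiming that $y^b = \Kernel^{w_{1:b}}(x; G)$ satisfies $y^b_T = 0$ and $y^b_S = \Kernel^{\hat{w}_{1:b'}}(\hat{x}; G[S])$, where $b'$ is the number of letters of $w_{1:b}$ lying in $S$. The step splits into three cases. If $w_b \in S$, then since $y^{b-1}_T = 0$ and $\InNeighbourhood(w_b) \subseteq S \cup T$, the update of $w_b$ in $G$ agrees with its update in $G[S]$, advancing the $S$-coordinate correctly. If $w_b \in T$, then $S \subseteq \InNeighbourhood(w_b)$ and $y^{b-1}_S \ne 0$ by fact (ii), so some in-neighbour is at $1$ and $y^b_{w_b} = 0$, keeping $y^b_T = 0$; this is the one place fact (ii) is essential. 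If $w_b \in U$, nothing in $S \cup T$ changes. At $b = |w|$ we get $y_T = 0$ and $y_S = \Kernel^{\hat{w}}(\hat{x}; G[S]) \notin \Kernels(G[S])$.

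Finally I would deduce that $y = \Kernel^w(x; G) \notin \Kernels(G)$. If $y$ were a kernel of $G$, then for each $s \in S$, using $y_T = 0$ and $\InNeighbourhood(s) \subseteq S \cup T$, the fixed-point identity $y_s = \bigmeet_{u \to s} \neg y_u$ collapses to $y_s = \bigmeet_{u \to s,\, u \in S} \neg y_u = \Kernel(y_S; G[S])_s$; ranging over $s \in S$ this says $y_S$ is a kernel of $G[S]$, contradicting the last line of the induction. Hence $w$ does not fix $\Kernel(G)$, and as $w$ was arbitrary, $\Kernel(G)$ is not fixable. The main obstacle, as flagged, is fact (i): securing a nonzero failing configuration for an arbitrary (and possibly empty) projected word $\hat{w}$, a difficulty the permutation hypothesis made vanish in the undirected case.
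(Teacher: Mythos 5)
Your proof is correct and takes essentially the same route as the paper's: the paper isolates your fact~(ii) as Lemma~\ref{lemma:zero_not_reachable_K} and your fact~(i) as Lemma~\ref{lemma:nonzero_not_fixable}, then lifts a nonzero bad configuration with $x_T = 0$ and runs the identical three-case induction ($w_b \in S$, $w_b \in T$, $w_b \in U$) to conclude $y_S \notin \Kernels(G[S])$ and $y_T = 0$. The only divergence is inside fact~(i): the paper argues that if every nonzero configuration were fixed by $w$ then $ww$ would fix $\Kernel(G)$ outright, whereas you prepend $\hat{w}_1$ and invoke the idempotency $\Kernel^{vv} = \Kernel^{v}$ --- both are valid, and your variant has the small merit of treating the empty projected word explicitly, a degenerate case the paper's lemma glosses over.
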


We need two lemmas before giving the proof of Proposition \ref{proposition:no_convergence_K}.

% \subsection{Proof of Proposition \ref{proposition:no_convergence_K}}

\begin{lemma} \label{lemma:zero_not_reachable_K}
If $x \in \{0,1\}^V \ne 0$ and $x \mapsto_\Kernel y$, then $y \ne 0$.
\end{lemma}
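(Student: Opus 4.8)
The plan is to show that the all-zero configuration $0$ is simply unreachable from any nonzero configuration, by inspecting the very last update performed along a word that witnesses reachability. The guiding intuition is that the kernel network can only switch a vertex off when one of its in-neighbours is currently on; hence one can never flip the last remaining on-vertex to off, and in particular one can never arrive at $0$ from a nonzero state.

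First I would dispose of the trivial case: if $x \mapsto_\Kernel y$ is witnessed by the empty word, then $y = x \ne 0$ and we are done. So assume there is a word $w = w_{1:l}$ with $l \ge 1$ and $y = \Kernel^w( x )$, and suppose for contradiction that $y = 0$. The key step is to examine the final update $w_l$. Since $\Kernel^{w_l}$ only alters the state of $w_l$, the configurations $y^{l-1} = \Kernel^{ w_{1:l-1} }( x )$ and $y = y^l$ agree on every vertex other than $w_l$; as $y = 0$, this forces $y^{l-1}_u = 0$ for all $u \ne w_l$.

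Now I would invoke irreflexivity of $G$: every in-neighbour $u$ of $w_l$ satisfies $u \ne w_l$, so $y^{l-1}_u = 0$ for each such $u$. Consequently
\[
    y_{w_l} = \Kernel( y^{l-1} )_{w_l} = \bigmeet_{u \to w_l} \neg y^{l-1}_u = 1,
\]
where the value is also $1$ when $\InNeighbourhood( w_l ) = \emptyset$ by the stated convention for $\Kernel$. This contradicts $y = 0$, completing the argument.

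I do not expect any genuine obstacle here; the statement is essentially an observation about the last update. The only two points requiring a moment of care are that the in-neighbours of $w_l$ are truly distinct from $w_l$ (which is precisely the irreflexivity of the digraph, so that their states in $y^{l-1}$ coincide with their states in $y$) and the boundary convention $\Kernel( x )_v = 1$ when $v$ has no in-neighbours, both of which are already in place.
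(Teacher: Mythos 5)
Your proof is correct and follows essentially the same route as the paper's: both inspect the final update $w_l$, observe that all in-neighbours of $w_l$ are in state $0$ in $y^{l-1}$ (using irreflexivity), and conclude $y_{w_l} = \Kernel(y^{l-1})_{w_l} = 1$, contradicting $y = 0$. If anything, your version is marginally tidier, since you deduce $y^{l-1}_u = 0$ for all $u \ne w_l$ directly from $y = 0$, whereas the paper's proof implicitly takes $l$ minimal so that $y^{l-1} \ne 0$; this extra assumption is harmless but, as you show, unnecessary.
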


\begin{proof}
Suppose, for the sake of contradiction, that $y = \Kernel^w(x) = 0$ with $w = w_{1 : l}$, while $y^{l-1} \ne 0$. Then $y^{l-1}_{\InNeighbourhood( w_l )} = 0$, hence $y_{w_l} = \Kernel( y^{l-1} )_{w_l} = 1$, which is the desired contradiction.
\end{proof}

\begin{lemma} \label{lemma:nonzero_not_fixable}
    If $\Kernel( G )$ is not fixable, then for any word $w$, there exists a \emph{nonzero} configuration $x$ of $G$ such that $\Kernel^w( x ) \notin \Kernels( G )$.
\end{lemma}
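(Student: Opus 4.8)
The plan is to argue by contraposition at the level of a fixed word $w$. I would take an arbitrary word $w$ and assume, towards a contradiction, that $w$ admits \emph{no} nonzero counterexample, i.e.\ that $\Kernel^w(x) \in \Kernels(G)$ holds for \emph{every} nonzero configuration $x$. The goal is then to manufacture from $w$ an honest fixing word, contradicting the hypothesis that $\Kernel(G)$ is not fixable.

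The key observation I would exploit is that a single update applied to \emph{any} starting configuration always produces a nonzero configuration. On the one hand, the boundary convention gives $\Kernel(0)_v = \bigmeet_{u \to v} \neg 0 = 1$ (and $\Kernel(0)_v = 1$ also when $\InNeighbourhood(v) = \emptyset$), so $\Kernel^v(0) = \CharacteristicVector(\{v\})$ is nonzero. On the other hand, for any $z \ne 0$ we have $z \mapsto_\Kernel \Kernel^v(z)$ via the one-letter word $v$, so Lemma \ref{lemma:zero_not_reachable_K} yields $\Kernel^v(z) \ne 0$. Hence, whatever the starting configuration $z$, the intermediate configuration $\Kernel^v(z)$ is nonzero.

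Combining these, I would pick any vertex $v$ — one exists because a vertexless digraph has the empty configuration as a (vacuous) kernel and is therefore trivially fixable, so non-fixability forces $V \ne \emptyset$ — and consider the word $vw$. For every configuration $z$, the intermediate $\Kernel^v(z)$ is nonzero, so the standing assumption gives $\Kernel^{vw}(z) = \Kernel^w(\Kernel^v(z)) \in \Kernels(G) = \Fix(\Kernel)$. Thus $vw$ fixes $\Kernel(G)$, contradicting non-fixability. The assumption therefore fails, so $w$ admits a nonzero counterexample, which is exactly the claim; note that this reasoning covers the empty word $w$ as well, since then $vw = v$ and the same computation applies.

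The argument is short and I expect no serious obstacle. The only points requiring care are the boundary convention $\Kernel(0) = 1$, which is what guarantees that the very first update breaks out of the all-zero configuration, and the appeal to Lemma \ref{lemma:zero_not_reachable_K} to keep already-nonzero configurations nonzero; together these are precisely what make prepending a single update safe.
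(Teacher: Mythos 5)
Your proof is correct and takes essentially the same approach as the paper's: both argue by contradiction from the assumption that $\Kernel^w(x) \in \Kernels(G)$ for all nonzero $x$, use the convention $\Kernel(0)_v = 1$ so that one update escapes the all-zero configuration, and invoke Lemma \ref{lemma:zero_not_reachable_K} to preserve nonzero-ness, concluding that a longer word is fixing. The only difference is cosmetic --- you prepend a single vertex to obtain the fixing word $vw$ where the paper uses $ww$ --- which is, if anything, slightly cleaner and also covers the empty-word edge case that the paper's reference to $w_1$ silently assumes away.
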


\begin{proof}
For the sake of contradiction, suppose $\Kernel^w( x ) \in \Kernels( G )$ for all nonzero $x$. We remark that $\Kernel( 0 )_{ w_1 } = 1$ hence $\Kernel^{ w_1 }( 0 ) \ne 0$ and by Lemma \ref{lemma:zero_not_reachable_K}, $\Kernel^w( 0 ) \ne 0$. Therefore, $\Kernel^{ww}( 0 ) \in \Kernels( G )$, and by hypothesis $\Kernel^{ww}( x ) \in \Kernels( G )$ for any nonzero $x$, which shows that $ww$ fixes $\Kernel( G )$.
\end{proof}

\begin{proof}[Proof of Proposition \ref{proposition:no_convergence_K}]
Partition the vertex set of $G$ into three parts: $S$, $T = \InNeighbourhood(S) \setminus S$, and $U = V \setminus (S \cup T)$.

Let $w \in V^*$ be a word, then by Lemma \ref{lemma:nonzero_not_fixable} there exists a nonzero configuration $\hat{x}$ of $G[S]$ such that $\Kernel^{ w[ S ] }( \hat{x}; G[S] ) \notin \Kernels( G[S] )$. Let $x$ be a configuration of $G$ such that $x_S = \hat{x} \ne 0$ and $x_T = 0$. By induction on $0 \le a \le l$, we prove that $y^a_S \ne 0$ and $y^a_T = 0$. The claim is clear for $a = 0$, hence suppose it holds for $a - 1$. We consider three cases.
\begin{itemize}
    \item Case 1: $w_a \in S$. \\
    Since $y^{a-1}_T = 0$, we have $y^a_S = \Kernel^{ w_a }( y^{a-1}; G )_S = \Kernel^{ w_a }( y^{a-1}_S; G[ S ] ) \ne 0$ by Lemma \ref{lemma:zero_not_reachable_K}. Also, $y^a_T = y^{a-1}_T = 0$.

    \item Case 2: $w_a \in T$. \\
    Since $y^{a-1}_S \ne 0$, we have $\Kernel( y^{a-1} ; G )_{w_a} = 0$. Also, $y^a_S = y^{a-1}_S \ne 0$.

    \item Case 3: $w_a \in U$. \\
    Then $y^a_S = y^{a-1}_S \ne 0$ and $y^a_T = y^{a-1}_T = 0$.
\end{itemize}

Therefore $y_S = \Kernel^{ w[ S ] }( x_S; G[S] )$ and hence $y_S \notin \Kernels( G[ S ] )$. Since $y_T = 0$, this implies that $y \notin \Kernels( G )$.
\end{proof}

In particular, if $G$ has a tethered set $S$ with $G[ S ] = \vec{C}_{2k+1}$ and $G - S$ is a graph, then $G$ has a kernel (namely every maximal independent set of $G - S$) but $\Kernel( G )$ is not fixable.

% \begin{proposition} \label{proposition:no_convergence_K_fixed_points}
% For any $N \ge 0$, there exists a graph $G$ with exactly $N$ kernels for which $\MIS(G)$ does not converge.
% \end{proposition}

% \begin{proof}
% Let $G = \vec{C}_3 \bar{\cup} K_N$, then $G$ has $N$ kernels. Let $S$ be the set of configurations that are  all zero on the clique and not all-zero on the cycle, then $\MIS(S; G) \subseteq S$.
% \end{proof}

% The proof of the previous proposition can be adapted to exhibit a class of graphs with an exponential number of fixed points without convergence.

We are now interested in the computational complexity of deciding whether the kernel network is fixable.

\begin{decisionproblem}
    \problemtitle{Fixable}
    \probleminput{A digraph $G$.}
    \problemquestion{Is $\Kernel( G )$ fixable?}
\end{decisionproblem}

\begin{theorem} \label{theorem:complexity_fixable}
\DecisionProblem{Fixable} is \coNP-hard.
\end{theorem}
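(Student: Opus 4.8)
The plan is to reduce from \DecisionProblem{Non-Constituency}, which is \coNP-complete by Corollary \ref{corollary:complexity_non-constituency}. Given an instance $(G,S)$ with $G=(V,E)$ a graph, I would build a digraph $\hat{G}$ by keeping $G$ with all its edges symmetric, adjoining a disjoint directed triangle $A=\{p,q,r\}$ with arcs $p\to q\to r\to p$ (an isolated $\vec{C}_3$ has no kernel), and adding an arc $s\to p$ from every $s\in S$ to the single triangle-vertex $p$. The decisive structural feature is that no arc leaves $A$ back into $V$, so every $G$-vertex $u$ has $\InNeighbourhood(u;\hat{G})=\Neighbourhood(u;G)$; hence $\Kernel(\hat{G})$ restricted to $V$ acts exactly as $\MIS(G)$, and any update inside $A$ leaves the $G$-block untouched. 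The claim to establish is that $\Kernel(\hat{G})$ is fixable if and only if $S$ is a non-constituency of $G$; equivalently, $\hat{G}$ is \emph{not} fixable precisely when $S$ is a constituency.

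For the direction ``$S$ a constituency $\Rightarrow$ not fixable'', I would exhibit one initial configuration defeating every word. Since $S$ is a constituency, some maximal independent set $M$ of $G$ has $M\cap S=\emptyset$; take $x$ with $\one(x)\cap V=M$ and $A$ set arbitrarily. As $M$ is a fixed point of $\MIS(G)$ and $A$ sends no arc into $V$, the $G$-part stays equal to $M$ under every update, so all of $S$ remains $0$ forever. Then $p$'s only in-neighbours outside $A$ are the permanently-$0$ vertices of $S$, and along any word $A$ evolves as the isolated odd cycle $\vec{C}_3$. For $y=\Kernel^w(x)$ we have $y_S=0$, so $\Kernel(y)_{a_i}=\neg y_{a_{i-1}}$ around the triangle; an odd cycle admits no proper $2$-colouring, forcing $\Kernel(y)_A\neq y_A$ and hence $y\notin\Fix(\Kernel(\hat{G}))=\Kernels(\hat{G})$. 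As the same $x$ works for all $w$, $\Kernel(\hat{G})$ is not fixable.

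For the converse ``$S$ a non-constituency $\Rightarrow$ fixable'', I would propose the word $w=\pi\pi\rho$, where $\pi$ is any permutation of $V$ and $\rho=pqr$. From any $x$, the prefix $\pi\pi$ drives the $G$-part to some maximal independent set $M$ (using that $\pi\pi$ fixes $\MIS(G)$ and that $\Kernel(\hat{G})$ agrees with $\MIS(G)$ on $V$), and $M$ then stays fixed throughout $\rho$. Because $S$ is a non-constituency, \emph{every} maximal independent set meets $S$, so there is $s^*\in M\cap S$; since $s^*\to p$ and $s^*$ is on, updating $p$ sets it to $0$, then $q$ to $1$, then $r$ to $0$, reaching the triangle's unique consistent state $(p,q,r)=(0,1,0)$. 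One then checks that the resulting on-set $M\cup\{q\}$ is independent (the only cross arcs are $s\to p$, from an on-vertex to the off-vertex $p$) and dominating ($p$ is dominated by $s^*$, $r$ by $q$, and $V\setminus M$ by $M$), i.e. a kernel of $\hat{G}$. Thus $w$ fixes $\Kernel(\hat{G})$.

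The delicate point is this converse: fixability must hold for \emph{all} initial configurations at once, and it rests entirely on non-constituency forcing every reachable maximal independent set of the $G$-block to contain some vertex of $S$, which is exactly what disrupts the otherwise kernel-less triangle. I would take care to verify that the two blocks never interfere — arcs run only from $S$ into $p$, never back — so Phase~1 genuinely simulates the MIS algorithm on $G$ while Phase~2 never perturbs $M$; and to note that a single triangle already suffices (any $\vec{C}_{2k+1}$ would work, traversed once after $p$). Finally, I would claim only \coNP-hardness, since there is no evident short no-certificate for non-fixability; as a side remark, $\hat{G}$ always has a kernel when $S\neq\emptyset$ (extend any $\{s\}$, $s\in S$, to a maximal independent set), which foreshadows the later refinement to kernel-possessing digraphs.
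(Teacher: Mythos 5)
Your proof is correct, but it takes a genuinely different route from the paper's. The paper reduces from \DecisionProblem{Tautology}: given a DNF $\phi$, it builds a NOR-gate circuit digraph whose literal pairs are symmetric edges, feeds the circuit's output into a $\vec{C}_3$, and shows the kernel network is fixable iff $\phi$ is a tautology (a falsifying assignment freezes the circuit at output $0$ and isolates the kernel-less triangle, while a tautology lets a word sweep literals, clauses and circuit in topological order and then settle the triangle using the live input). You instead reduce from \DecisionProblem{Non-Constituency}, which the paper itself establishes as \coNP-complete (Corollary \ref{corollary:complexity_non-constituency}), with a much lighter gadget: a $\vec{C}_3$ fed by arcs $s \to p$ from $S$. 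The heart of your argument is the equivalence ``$S$ is a constituency iff some MIS of $G$ avoids $S$'' together with the seminal observation that $\pi\pi$ fixes $\MIS(G)$ (via Propositions \ref{proposition:prefix} and \ref{proposition:suffix}), and both directions check out: since no arc leaves the triangle into $V$, the $G$-block evolves exactly as $\MIS(G)$; a MIS avoiding $S$ silences all of $p$'s in-neighbours outside the triangle, making the fixed-point equations $y_p = \neg y_r$, $y_q = \neg y_p$, $y_r = \neg y_q$ unsatisfiable for every word, while non-constituency forces every reachable MIS $M$ to place a $1$ on some $s^* \to p$, after which the suffix $pqr$ deterministically lands the triangle on $(0,1,0)$ and $M \cup \{q\}$ is verified to be a kernel. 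Your reduction is arguably more in the spirit of the rest of the paper, whose other hardness proofs all route through \DecisionProblem{Constituency} variants, and --- as you note --- after the harmless restriction to $S \neq \emptyset$ (without loss of generality, since $\emptyset$ is always a constituency) it already yields hardness on digraphs that \emph{have} kernels. What the paper's heavier circuit construction buys is the refinement in Theorem \ref{theorem:complexity_fixable_refined}: by replacing the literal $K_2$'s with $\vec{C}_4$'s and bounding fan-in, it obtains \coNP-hardness even for \emph{oriented} digraphs of maximum in- and out-degree $2$ that are $\epsilon$-fixable; your $\hat{G}$ retains all of $G$'s symmetric edges and has unbounded in-degree at $p$, so it would not directly support that refinement.
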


\begin{proof}
The proof is by reduction from \DecisionProblem{Tautology}, which is \coNP-hard. Let $\phi$ be a DNF with set of variables $\Variables$, set of literals $\Literals = \{ \Variable, \neg \Variable : \Variable \in \Variables \}$, and set of clauses $\Clauses$, so that $\phi$ can be expressed as $\phi = \bigjoin_{\Clause \in \Clauses} \bigmeet_{\Literal \in \Literals_\Clause} \Literal$ with $\Literals_\Clause \subseteq \Literals$ for all $\Clause \in \Clauses$. We construct a graph that has a vertex per literal, then builds a NOR-gate circuit for $\phi$, and finally attaches the output of this circuit to a $\vec{C}_3$. Intuitively, a $0$ output of the circuit (which occurs whenever $\phi$ is not a tautology) will isolate the $\vec{C}_3$, which is not fixable, while a $1$ output will guarantee convergence.

Let $G = (V, E)$ with
\begin{align*}
    V &= \Literals \cup \Clauses \cup \{ \neg \phi, \phi, a, b, c \}, \\
    E &= \{ \Variable \neg \Variable: \Variable \in \Variables \}
    \cup \{ \Literal \to \Clause : \neg \Literal \in \Clause, \Clause \in \Clauses \} 
    \cup \{ \Clause \to \neg \phi : \Clause \in \Clauses \} 
    \cup \{ \neg \phi \to \phi \to a \to b \to c \to a \}.
\end{align*}

This is illustrated in Figure \ref{figure:complexity_fixable}.

If $\phi$ is a tautology, then let $w = w^\Literals w^\Clauses \neg \phi \phi abc$, where $w^\Literals$ and $w^\Clauses$ are any permutations of $\Literals$ and $\Clauses$, respectively. For each $\Variable$, let $\Variable' = \{ \Variable, \neg \Variable \}$. Then $G[ \Variable' ]$ is a complete graph, and has $w[ \Variable' ]$ as a permis. 
Therefore, at the end of $w^\Literals$, we have $y_{ \Variable' } =  y^{ |\Literals| }_{ \Variable'} \in \{01, 10\}$. Let $z \in \{ 0,1 \}^\Variables = y^{ |\Literals| }_\Variables$. 
The graph $G$ then induces a circuit for $\phi$, so that $y_\Clause = y^{ |\Literals| + |\Clauses| }_\Clause = \bigmeet_{ \Literal \in \Clause } \neg y^{ |\Literals| }_{\neg \Literal} = \Clause( z )$ and $y_{ 
\neg \phi } = \neg \phi(z)$ and eventually $y_\phi = \phi(z) = 1$. % DK amended line to not quantify over all {\lnot \beta \in \gamma}
Finally, we obtain $y_{abc} = 010$, and it is easily verified that $y \in \Kernels( G )$.

If $\phi$ is not a tautology, let $w$ be any word and construct the configuration $x$ as follows. Let $z \in \{ 0,1 \}^\Variables$ such that $\phi( z ) = 0$ then let $x_\Variable = z_\Variable$, $x_{ \neg \Variable } = \neg z_\Variable$, $x_\Clause = \Clause( z )$, $x_{ \neg \phi } = \neg \phi(z)$, $x_\phi = \phi( z ) = 0$ and $x_{abc}$ can be chosen arbitrarily, since in a $\vec{C_3}$ every configuration is such that $\Kernel^{ w[abc] }( x_{abc} ) \notin \Kernels( G[abc] )$. 
By construction, updating the value of any vertex outside of the $\vec{C}_3$ will not change anything, i.e. $y = \Kernel^{ w[abc] }( x; G )$. Now, since $x_\phi = y_\phi = 0$, we have $\Kernel^{ w[abc] }( x; G )_{ abc } = \Kernel^{ w[abc] }( x; G[abc] ) \notin \Kernels( G[abc] )$, and hence $y \notin \Kernels( G )$.
\end{proof}

\begin{figure}
    \centering
    \includegraphics[page=4,width=\textwidth]{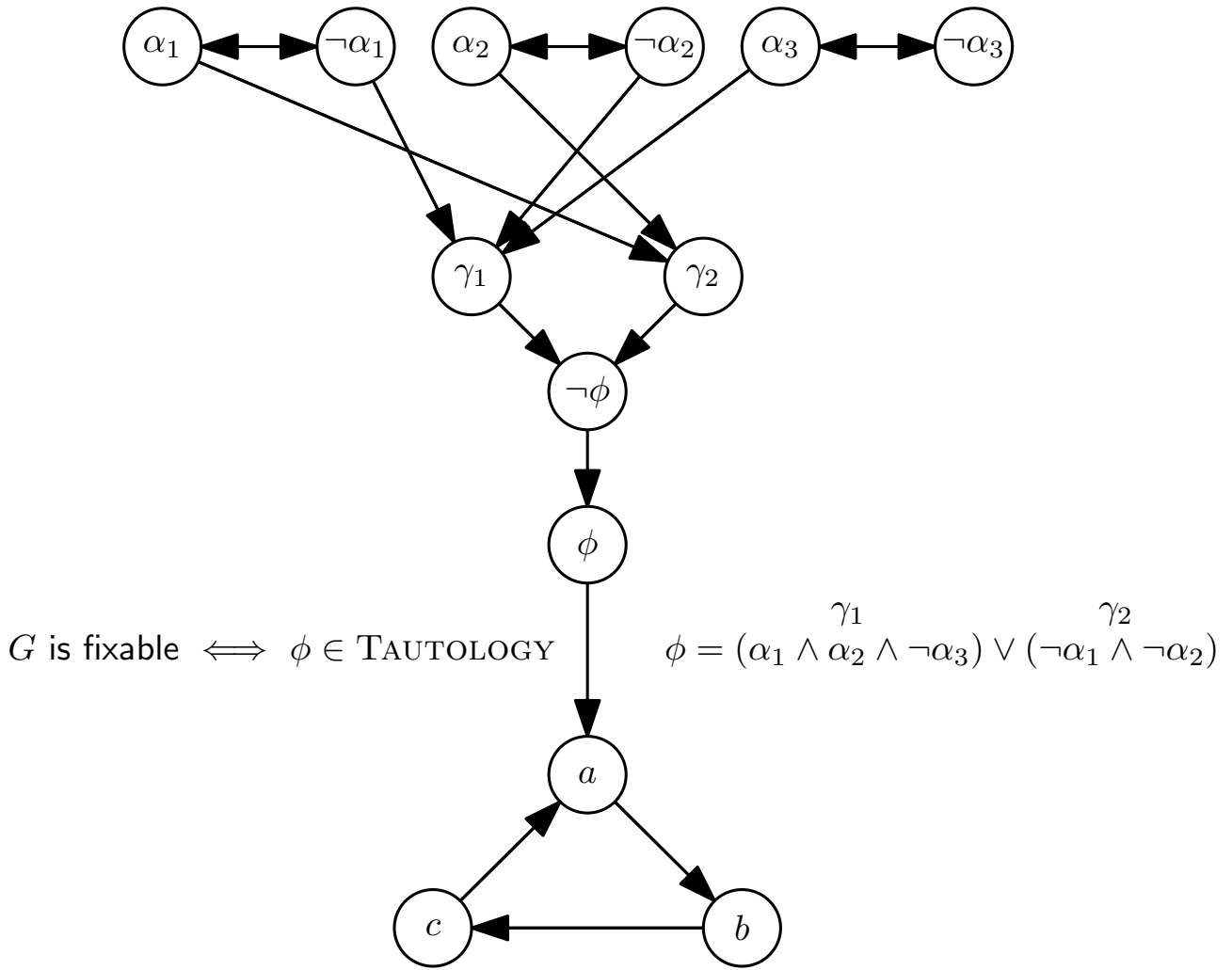}
    \caption{Illustration of the reduction from \DecisionProblem{Tautology} to \DecisionProblem{Fixable}.}
    \label{figure:complexity_fixable}
\end{figure}

We now refine Theorem \ref{theorem:complexity_fixable} by restricting ourselves to a very specific class of digraphs. For any $\epsilon > 0$, say a digraph $G$ is \Define{$\epsilon$-fixable} if there is a word $w$ that fixes at least a fraction $1 - \epsilon$ of configurations, i.e. $| \{ x \in \{ 0,1 \}^V : \Kernel^w( x ) \in \Kernels( G ) \} | \ge ( 1 - \epsilon ) 2^n$.

\begin{theorem} \label{theorem:complexity_fixable_refined}
For any $\epsilon > 0$, \DecisionProblem{Fixable} is \coNP-hard for oriented $\epsilon$-fixable digraphs of maximum in- and out-degree $2$.
\end{theorem}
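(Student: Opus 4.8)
The plan is to re-engineer the reduction of Theorem~\ref{theorem:complexity_fixable} (from \DecisionProblem{Tautology}) so that every digraph it outputs is simultaneously oriented, of in- and out-degree at most $2$, and $\epsilon$-fixable, while keeping the equivalence ``$\Kernel(\hat G)$ is fixable iff $\phi$ is a tautology''. There are three constraints, and I would handle each with a dedicated modification. To make the digraph oriented I would replace each variable gadget — the symmetric edge $\Variable\,\neg\Variable$ (the copy of $K_2$ whose two kernels encode the two truth values) — by the directed $4$-cycle $\vec{C}_4$. Its only kernels are the two alternating sets, so choosing two adjacent vertices (one drawn from each kernel) as $\Variable$ and $\neg \Variable$ gives complementary literal values in the two kernels. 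A direct computation shows that a single sweep over the four vertices drives $\vec{C}_4$ into one of its two kernels from every starting configuration, the resulting kernel depending only on the initial state of one vertex; thus the gadget is fixable in one pass and reads off a well-defined assignment $\tilde z$. Since $\vec{C}_4$ has in- and out-degree $1$, each literal vertex still has a spare out-edge with which to feed the circuit.

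To bound every degree by $2$ I would realise the formula by a standard bounded fan-in / fan-out NOR circuit. A vertex with a single in-neighbour computes $\neg$, a vertex with two in-neighbours computes NOR, two-input $\join$ and $\meet$ are obtained from these by De Morgan, large fan-in gates are replaced by balanced binary trees of $2$-input gates, and any signal required in more than two places is duplicated by a balanced binary tree of out-degree-$2$ copy vertices. All edges point from inputs towards the output, so the resulting circuit is oriented and every vertex has in-degree at most $2$ and out-degree at most $2$. Exactly as in Theorem~\ref{theorem:complexity_fixable}, updating the variable gadgets first and then the circuit in topological order forces the output vertex $\phi$ to hold the circuit's value on $\tilde z$, and $\phi$ feeds the vertex $a$ of a kernel-less $\vec{C}_3$ (oriented, degree $\le 2$), which becomes isolated precisely when $\phi=0$.

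The crux is obtaining $\epsilon$-fixability for \emph{every} formula, including unsatisfiable ones, where the original construction isolates the $\vec{C}_3$ on a constant fraction of inputs. To fix this I would not encode $\phi$ but the padded DNF $\psi = \phi \join \bigjoin_{i=1}^{k} \neg z_i$ on $k = \lceil \log_2(1/\epsilon)\rceil$ fresh variables $z_1,\dots,z_k$. Since $\bigjoin_i \neg z_i$ is falsified only by the single assignment $z_1 = \dots = z_k = 1$, the DNF $\psi$ is a tautology iff $\phi$ is, yet $\psi$ is falsified on at most a $2^{-k}\le \epsilon$ fraction of its assignments regardless of $\phi$. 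Under the natural sweep word $w$ the final configuration is a kernel unless the output is $0$, i.e.\ unless the read assignment $\tilde z$ falsifies $\psi$; and since each $\tilde z$ is produced by equally many initial configurations, the fraction of configurations not fixed by $w$ equals the falsifying fraction of $\psi$, hence is at most $\epsilon$. This is exactly what makes each output digraph $\epsilon$-fixable.

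Correctness then follows the two directions of Theorem~\ref{theorem:complexity_fixable} verbatim with $\psi$ in place of $\phi$: if $\psi$ (equivalently $\phi$) is a tautology then $w$ drives every configuration to a kernel, so $\Kernel(\hat G)$ is fixable; if $\psi$ is not a tautology then, for any word, setting the gadgets and gates to a fixed point encoding a falsifying $\tilde z$ isolates the $\vec{C}_3$ and blocks convergence, so $\Kernel(\hat G)$ is not fixable. The step I expect to demand the most care is reconciling the three constraints at once: the padding term $\bigjoin_i \neg z_i$ is what rescues $\epsilon$-fixability in the unsatisfiable case, and I would need to verify that, after passing to the oriented $\vec{C}_4$ gadgets and the bounded-degree NOR circuit, the clean ``$\vec{C}_3$ isolated iff output $0$'' behaviour underpinning both directions of the argument is preserved.
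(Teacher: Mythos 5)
Your proposal is correct and follows essentially the same blueprint as the paper's proof: pad the formula with $\lceil \log_2(1/\epsilon)\rceil$ fresh variables occurring as singleton clauses (so the padded DNF $\psi$ is a tautology iff $\phi$ is, yet at most an $\epsilon$-fraction of assignments falsify $\psi$), replace each variable's $K_2$ gadget by an oriented $\vec{C}_4$ whose two kernels encode the two truth values, feed a fan-in-$2$ NOR circuit into the kernel-less $\vec{C}_3$ output gadget, and certify $\epsilon$-fixability by a sweep word whose failure set corresponds exactly to the falsifying assignments of $\psi$. The one genuine difference lies in how out-degree is bounded: the paper reduces from \DecisionProblem{3-Tautology} restricted to instances in which each literal appears at most twice, so that the four vertices of each $\vec{C}_4$ (two carrying each polarity in any kernel) suffice to feed all clause occurrences directly, whereas you keep unrestricted \DecisionProblem{Tautology} and bound fan-out by duplication trees inside the circuit. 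Your route avoids relying on the \coNP-hardness of that restricted variant (which the paper asserts without proof), at the price of extra circuit machinery. One detail to tighten: under the kernel network $\Kernel$, a vertex of in-degree $1$ computes negation, not identity, so your ``copy vertices'' are in fact NOT-gates; since your trees are balanced, all leaves carry the signal with a common parity, and the wrong parity is repaired either by tapping the opposite literal vertex of the $\vec{C}_4$ (both polarities are available) or by inserting one extra negation vertex --- a routine fix, but one that should be stated, as it is exactly the kind of polarity bookkeeping the paper's literal-occurrence restriction was designed to avoid.
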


\begin{proof}
The proof is by reduction from \DecisionProblem{3-Tautology} restricted to expressions where each literal appears at most twice, which is \coNP-hard. We use the notation introduced in the proof of Theorem \ref{theorem:complexity_fixable}.

Let $\phi$ be a \DecisionProblem{3-Tautology} instance where each literal appears at most twice. Let $\OtherVariables$ be an additional set of $\lceil -\log_2 \epsilon \rceil$ variables, and let
\[
    \psi = \phi \join \bigjoin \OtherVariables.
\]
Let $G$ be the graph constructed in the proof of Theorem \ref{theorem:complexity_fixable} for the expression $\psi$. Now replace each $K_2$ corresponding to a variable by a $\vec{C}_4$ as follows. Let $u$ be a variable in $\Variables \cup \OtherVariables$ and let $u' = \{ u, \neg u, \dot{u}, \neg \dot{u} \}$ induce the cycle $u \to \neg u \to \dot{u} \to \neg \dot{u} \to u$. We finally adapt the NOR-gate circuit of $\psi$ to have fan-in $2$. The new graph $\hat{G}$ is illustrated in Figure \ref{figure:complexity_fixable_refined}.

By construction, $\hat{G}$ is oriented. Since each literal appears at most twice in $\psi$, the out-degree of each vertex in $\vec{C_4}$ is at most $2$; it is easily verified that the in-degree at out-degree of each vertex is then at most $2$. Also, as before, $\hat{G}$ is fixable if and only if $\psi$ is a tautology, which in turn occurs if and only if $\phi$ is a tautology. 

We now prove that the new graph $\hat{G}$ is $\epsilon$-fixable by exhibiting a word $w$ that fixes all configurations with $x_\OtherVariable = 1$ for some $\OtherVariable \in \OtherVariables$. Let $u \in \Variables \cup \OtherVariables$, then the word $w^u = \neg u \dot{u} \neg \dot{u}$ fixes $\hat{G}[ u' ]$ without updating the vertex $u$. Then let $w' = (w^u : u \in \Variables \cup \OtherVariables)$ be the concatenation of all the $w^u$ words (in any order). Let $w''$ be a word that follows the circuit in topological order, and finally let $w = w' w'' abc$. Let $x$ be a configuration such that $x_\OtherVariable = 1$ for some $\OtherVariable \in \OtherVariables$. Note that because we chose $|\OtherVariables|\ge -\log_2 \epsilon$ at least $1-\epsilon$ of all configurations satisfy this property. %DK: added sentence to relate to $\epsilon$-fixable
We have $\psi( x_{ \Variables \cup \OtherVariables } ) = \psi( y_{ \Variables \cup \OtherVariables } ) = 1$ and hence $y_\psi = 1$ and $y_{abc} = 010$. Thus $y \in \Kernels( \hat{G} )$.
\end{proof}

\begin{figure}
    \centering
    \includegraphics[page=8,width=\textwidth]{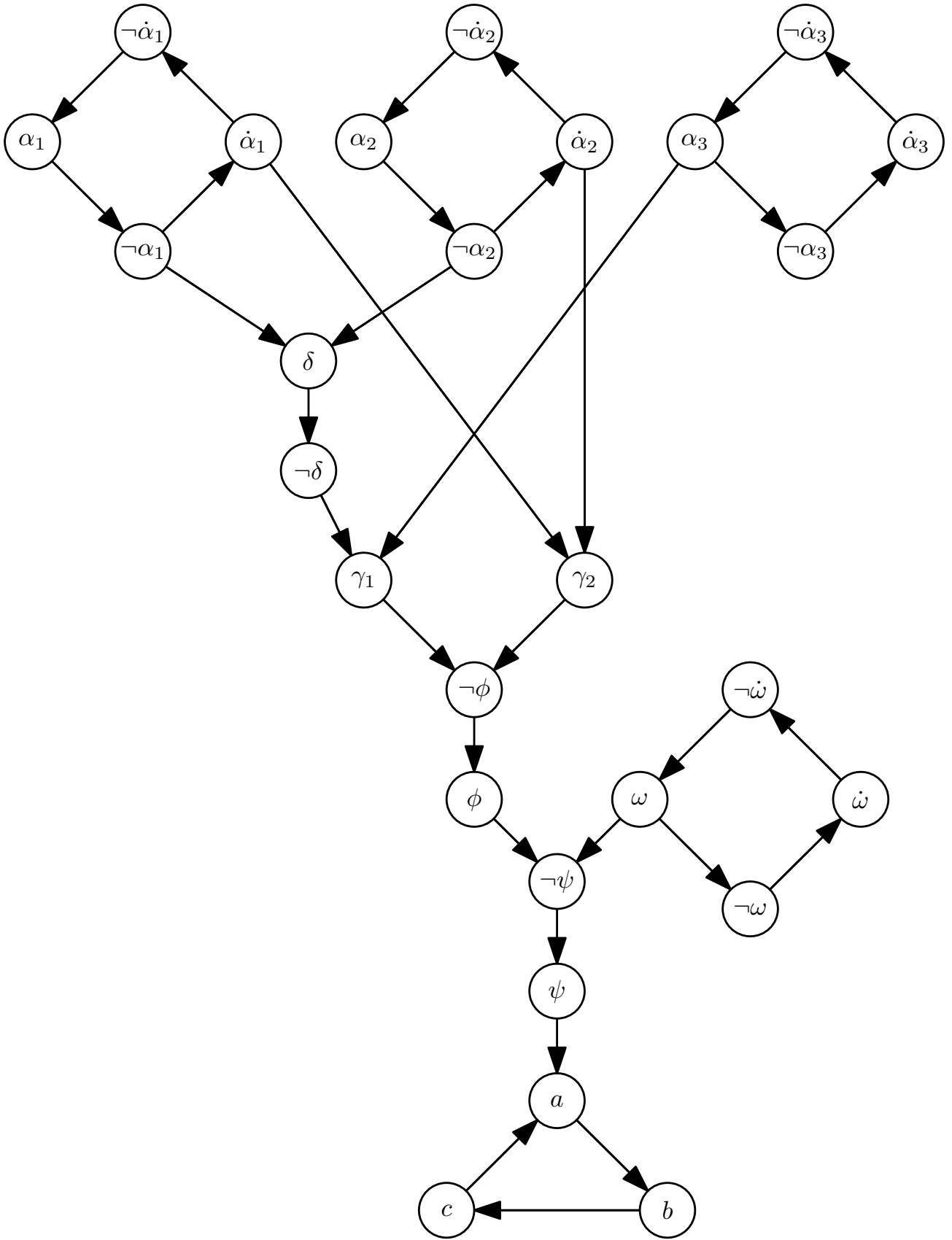}
    \caption{Illustration of the reduction from \DecisionProblem{3-Tautology} where each literal appears at most twice to \DecisionProblem{Fixable}.}
    \label{figure:complexity_fixable_refined}
\end{figure}

\subsection{The independent network} \label{subsection:convergence_I}

Since the kernel network is not fixable in general, and since it is not even tractable to decide whether the kernel network of a particular digraph is fixable, we now introduce and study two Boolean networks that are always fixable, and whose sets of fixed points contain all kernels.

The \Define{independent network} on $G$, denoted by $\Independent(G)$, is defined by
\[
    \Independent(x)_v = x_v \meet \bigmeet_{u \to v} \neg x_u,
\]
with $\Independent(x)_v = x_v$ if $\InNeighbourhood(v) = \emptyset$. We then have $\Fix( \Independent( G ) ) = \IndependentSets( G )$. Moreover, as we shall prove later, any permutation of $V$ is a fixing word of $\Independent(G)$.

We first settle the reachability problem for the independent network. We shall tacitly use the simnple fact that if $x \mapsto_\Independent y$, then $y \le x$. We characterise the configurations $y$ that are reachable from a given initial configuration $x$. Generalising the undirected case, for any configuration $x \in \{0,1\}^V$, we let $\mathcal{C}(x)$ denote the collection of initial strong components of $G[ \one( x ) ]$.

\begin{proposition}[Reachability for the independent network] \label{proposition:reachability_I}
Let $x, y \in \{0,1\}^V$ be two configurations. The following are equivalent:
\begin{enumerate}
    \item 
    $x \mapsto_\Independent y$;

    \item 
    $x \Geodesic_\Independent y$;

    \item 
    $y \le x$ and $y_C \ne 0$ for any $C \in \mathcal{C}(x)$. %initial strong component $C$ of $G[ \one( x ) ]$.
\end{enumerate}
\end{proposition}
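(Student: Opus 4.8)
The plan is to mirror the proof of Proposition~\ref{proposition:reachability_K}, exploiting that $\Independent$ is monotone decreasing. First I would record the basic fact that every update can only lower a state: for $v = w_a$ one has $\Independent(y^{a-1})_v = y^{a-1}_v \meet \bigmeet_{u \to v} \neg y^{a-1}_u \le y^{a-1}_v$, while all other coordinates are unchanged, so $y^a \le y^{a-1}$ and hence $y \le x$ whenever $x \mapsto_\Independent y$. The implication ``geodesic reachability $\Rightarrow$ reachability'' is immediate, since a geodesic is a word, so it remains to prove ``reachability $\Rightarrow$ the combinatorial condition'' and ``the condition $\Rightarrow$ geodesic reachability'', closing the cycle.

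For reachability $\Rightarrow$ condition, monotonicity already gives $y \le x$; the work is to show $y_C \ne 0$ for every initial strong component $C$ of $G[\one(x)]$. I would argue by contradiction in the style of Lemma~\ref{lemma:edge}: since $C \subseteq \one(x)$ but $y_C = 0$, take the first index $a$ with $y^a_C = 0$, so that $w_a = v \in C$, $y^{a-1}_{C \setminus \{v\}} = 0$ and $y^{a-1}_v = 1$. For the update to switch $v$ off there must be an in-neighbour $u \to v$ with $y^{a-1}_u = 1$. Monotonicity forces $x_u = 1$, so $u \in \one(x)$ and $u$ is an in-neighbour of $v$ inside $G[\one(x)]$; because $C$ is an \emph{initial} strong component it has no parent, so every such in-neighbour lies in $C$, i.e. $u \in C$. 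But then $u \in C \setminus \{v\}$ has $y^{a-1}_u = 1$, contradicting $y^{a-1}_{C \setminus \{v\}} = 0$.

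For condition $\Rightarrow$ geodesic reachability, set $R = \one(y)$ and $\Delta = \one(x) \setminus R = \Delta(x,y)$ (using $y \le x$). The key structural step is to show that, inside $G[\one(x)]$, every vertex is reachable from $R$: each vertex lies in some strong component, and in the condensation DAG every component is reachable from a source (initial) component; each initial component meets $R$ by the hypothesis $y_C \ne 0$, and strong connectivity within that component completes a directed $R$-to-$v$ path. Granting this, a spanning out-forest $F$ of $G[\one(x)]$ rooted at $R$ exists by the nearest-root construction in the proof of Lemma~\ref{lemma:out-rooted_forest}, applied to $G[\one(x)]$. I would then let $w$ traverse the non-root vertices of $F$ from leaves towards the roots, i.e. update each $v \in \Delta$ only after all its $F$-descendants. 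When $v$ is updated its $F$-parent $u \to v$ is still on — either $u \in R$ (never updated) or $u \in \Delta$ (updated strictly later, hence still $1$) — so $\Independent$ sets $v$ to $0$; the roots, never being touched, stay at $1$. Thus $\Independent^w(x) = y$, and since $w$ visits each vertex of $\Delta = \Delta(x,y)$ exactly once and nothing else, $x \Geodesic_\Independent y$.

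The main obstacle is the reachability claim in the last paragraph: unlike the undirected MIS case, where \emph{every} connected component of $G[\one(x)]$ must retain a surviving vertex, here only the initial strong components are constrained, and I must check that this single condition already guarantees that every vertex of $\Delta$ — including those in deep, fully-erased non-initial components — can still be reached from a surviving root, so that the leaves-to-root schedule always has a live in-neighbour available. Getting the strong-component and condensation bookkeeping exactly right (and confirming the forest traversal never strands a vertex) is where the care is needed; the remaining verifications are routine.
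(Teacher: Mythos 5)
Your proof is correct, and its first half coincides with the paper's: monotonicity gives $y \le x$, and the argument that $y_C \ne 0$ for each initial strong component $C$ — truncate at the first moment $y^a_C = 0$, find a live in-neighbour $u$ of $w_a$, and use initiality of $C$ to force $u \in C$, a contradiction — is exactly the paper's forward direction. The backward direction is where you genuinely diverge. The paper processes the strong components of $G[\one(x)]$ one at a time in \emph{reverse topological order}, with a case split: a fully-erased component ($y_B = 0$) is necessarily non-initial, so its spanning out-tree is rooted at a vertex with an in-neighbour in a parent component (still at its initial value $1$, precisely because parents are processed later), and the traversal includes the root; a surviving component gets an out-forest rooted at $\one(y) \cap B$, traversed without the roots. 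This lets the paper invoke Lemma~\ref{lemma:out-rooted_forest} verbatim, since it is applied only inside strong components. You instead build one global spanning out-forest of $G[\one(x)]$ rooted at the entire survivor set $R = \one(y)$, justified by the condensation argument that every vertex is reachable from a source component and hence from $R$, and then run a single leaves-to-roots schedule with no case analysis. Your route is cleaner and uniform, but it uses Lemma~\ref{lemma:out-rooted_forest} in slightly greater generality than stated (strongness weakened to ``every vertex is reachable from $S$''), a strengthening you correctly flag and which does follow from the same nearest-root construction, since that proof only needs each vertex to be at finite distance from some root (and each $r \in R$ is at distance $0$ from itself, so the survivors are exactly the roots). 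In short: the paper's componentwise ordering is exactly what avoids restating the lemma; your global forest avoids the componentwise bookkeeping at the cost of that mild generalisation. Both arguments are sound.
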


% \subsection{Proof of Proposition \ref{proposition:reachability_I}}

\begin{proof}[Proof of Proposition \ref{proposition:reachability_I}]
If $x=0$ then the proposition trivially holds as $G[ \one( x ) ]$ is the empty digraph and $y=x$, so the geodesic is the empty word; we now consider $x\neq 0$.

We first prove that if $x \mapsto_\Independent y$, then $y_C \ne 0$ for all $C \in \mathcal{C}(x)$. Suppose $y = \Independent^w( x )$ satisfies $y_C = 0$, with $w = w_{1 :l}$, and that $y^{l-1}_C \ne 0$. Then $w_l \in C$ and there exists a vertex $u$ such that $y^{l-1}_u = x_u = 1$ and $u \to w_l$; but then $u \in C$ and hence $y_C \ne 0$, which is the desired contradiction.

We now prove that if $y \le x$ and $y_C \ne 0$ for all $C \in \mathcal{C}(x)$, then there is a geodesic from $x$ to $y$. We use Lemma \ref{lemma:out-rooted_forest}. The geodesic is a concatenation of words $w^B$, one for each strong component $B$ of $G[ \one( x ) ]$, in reverse topological order. The word $w^B$ is described as follows.
\begin{itemize}
    \item Case 1: $y_B = 0$. \\
    Then $B$ is not an initial component, hence there exists $u \in B$ with an in-neighbour in a parent strong component and $T$ a spanning out-tree of $B$ rooted at $u$. Then $w^B$ is the word obtained by traversing the tree from leaves to root.

    \item Case 2: $y_B \ne 0$. \\
    Let $S = \one( y ) \cap B$ and $T$ be the spanning out-forest of $B$ rooted at $S$. Then $w^B$ is the word obtained by traversing the out-forest but missing out the roots. (In particular, if $S = B$, then $w^B$ is empty.)
\end{itemize}
It is easy to verify that $w$ is indeed a geodesic from $x$ to $y$.
\end{proof}

We now determine the words that fix the independent network. We generalise the concept of a vertex cover to digraphs as follows: a \Define{directed vertex cover} of a digraph is a set of vertices $S$ such that for every symmetric edge $uv$, $\{ u, v \} \cap S \ne \emptyset$ and for every oriented edge $(u, v)$, $v \in S$.

\begin{proposition}[Words fixing the independent network] \label{proposition:fixing_words_I}
Let $G$ be a digraph. Then $w$ fixes the independent set network $\Independent(G)$ if and only if $[w]$ is a directed vertex cover of $G$.
\end{proposition}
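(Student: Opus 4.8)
The plan is to prove both implications directly. The one structural fact I would lean on throughout is monotonicity: since $\Independent(x)_v = x_v \meet \bigmeet_{u \to v} \neg x_u \le x_v$, a single update can only flip a coordinate from $1$ to $0$, so the configurations are non-increasing along $w$ (this is the tacit fact $x \mapsto_\Independent y \implies y \le x$ noted above). In particular, any vertex $v$ with $y_v = 1$ must have stayed on the entire time, i.e. $y^a_v = 1$ for all $a$ and $x_v = 1$ — a fact I would use repeatedly.

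For the forward direction, I would assume $[w]$ is a directed vertex cover and argue by contradiction that if $y = \Independent^w(x)$ were not independent — say $(u,v) \in E$ with $u, v \in \one(y)$ — then $v$ could not lie in $[w]$: were $v$ updated at some step $a$ with $w_a = v$, the in-neighbour $u$ (which stays on) would give $y^a_v = \Independent(y^{a-1})_v = 0$, contradicting $y_v = 1$. Then I would split on the edge type. If $(u,v)$ is oriented, $v \notin [w]$ already contradicts the requirement $v \in [w]$ of a directed vertex cover; if $uv$ is symmetric, the same reasoning applied to the reverse edge $(v,u)$ gives $u \notin [w]$ as well, so $\{u,v\} \cap [w] = \emptyset$, again a contradiction. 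Hence no edge can survive in $G[\one(y)]$ and $w$ fixes $\Independent(G)$.

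For the converse, I would take the failure of the directed-vertex-cover condition and turn it into an explicit bad initial configuration. If some symmetric edge $uv$ has $\{u,v\} \cap [w] = \emptyset$, or some oriented edge $(u,v)$ has $v \notin [w]$, I set $x = \CharacteristicVector(\{u,v\})$; since the network never switches anything on, $\one(y) \subseteq \{u,v\}$ at all times. In the symmetric case neither endpoint is ever updated, so $y = x$ retains the edge $uv$. In the oriented case $v$ survives because it is never updated, and $u$ survives because its only potentially-on neighbour is $v$ while $v \not\to u$ (the edge is oriented), so whenever $u$ is updated it sees no on in-neighbour. Either way $\one(y)$ contains an edge, so $y \notin \IndependentSets(G)$ and $w$ does not fix $\Independent(G)$.

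I expect the only real subtlety to be the oriented case of the converse: I must verify that the \emph{tail} $u$ of the oriented edge is not switched off, which hinges precisely on orientedness ($v \not\to u$, so $v$ is not an in-neighbour of $u$) together with monotonicity guaranteeing that no third vertex ever turns on. Everything else is a direct application of the principle that, under $\Independent$, an on vertex can only be switched off by an on in-neighbour.
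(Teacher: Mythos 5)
Your proof is correct and follows essentially the same route as the paper's: both directions hinge on the monotonicity fact $y \le x$ (which the paper states tacitly just before the proposition), the cover-implies-fixes direction kills a surviving edge by considering an update of its head, and the converse exhibits the same bad configurations (your $x = \CharacteristicVector(\{u,v\})$ in the oriented case is exactly an instance of the paper's choice $x_{\InNeighbourhood(u),u,v} = (0,1,1)$, valid precisely because $v \notin \InNeighbourhood(u)$). The differences are purely organisational, so there is nothing to flag.
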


\begin{proof}
Suppose $w$ fixes $\Independent$. If $uv$ is a symmetric edge with $\{u, v\} \cap [w] = \emptyset$, then for any configuration $x$ with $x_{uv} = 11$, we have $y_{uv} = 11$. If $(u, v)$ is an oriented edge with $v \notin [w]$, then if $x_{ \InNeighbourhood(u), u, v} = (0, 1, 1)$, we have $y_{uv} = 11$.

Suppose $[w]$ is a directed vertex cover. Suppose $y_{uv} = 11$. If $uv$ is symmetric, without loss let $v$ be updated last. Let $y^a$ be the configuration before the last update of $v$, then $y^a_u = y_u = 1$, which implies $y_v = 1$, which is a contradiction. If $(u, v)$ is oriented, then let $y^a$ be the configuration before the last update of $v$, then $y^a_u = y_u = 1$, which implies $y_v = 1$, which is a contradiction.
\end{proof}

\begin{decisionproblem}
  \problemtitle{$\Independent$ fixing word}
  \probleminput{A digraph $G = (V,E)$ and a word $w$.}
  \problemquestion{Does $w$ fix $\Independent(G)$?}
\end{decisionproblem}

\begin{corollary} \label{corollary:complexity_fixing_word_I}
\DecisionProblem{$\Independent$ fixing word} is in \P.
% Given $G$ and $w$, deciding whether $w$ fixes $\Independent(G)$ is in \P. 
% Given $G$, determining the length of a shortest fixing word for $\Independent$ is \NP-hard.
\end{corollary}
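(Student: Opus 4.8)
The plan is to invoke the characterisation from Proposition \ref{proposition:fixing_words_I}, which reduces the question entirely to a syntactic check on the digraph $G$ and the set $[w]$ of vertices visited by $w$. Since that proposition states that $w$ fixes $\Independent(G)$ if and only if $[w]$ is a directed vertex cover of $G$, the decision problem collapses to recognising directed vertex covers, and no reasoning about the dynamics of $\Independent$ is needed at the algorithmic stage.

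Concretely, first I would scan the word $w = w_1 \dots w_l$ once and record the set $[w]$, for instance by maintaining a Boolean array indexed by $V$ and marking each visited vertex; this takes time $O(l)$ and supports constant-time membership queries. Then I would iterate once over the edge set $E$, verifying the two defining conditions of a directed vertex cover: for each symmetric edge $uv$ I check that $\{u,v\} \cap [w] \ne \emptyset$, and for each oriented edge $(u,v)$ I check that $v \in [w]$. Each edge is inspected in constant time, so this phase runs in time $O(|E|)$.

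The algorithm accepts precisely when all these edge conditions hold, and by Proposition \ref{proposition:fixing_words_I} this happens exactly when $w$ fixes $\Independent(G)$; hence the procedure is correct and runs in polynomial (indeed linear) time in the size of the input $(G, w)$. There is no genuine obstacle to overcome here — the only point worth noting is that the entire difficulty has already been absorbed into Proposition \ref{proposition:fixing_words_I}, so that the corollary follows immediately once one observes that directed vertex covers can be recognised by a single linear pass over the edges.
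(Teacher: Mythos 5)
Your proposal is correct and matches the paper's (implicit) argument exactly: the corollary follows immediately from Proposition \ref{proposition:fixing_words_I}, since checking whether $[w]$ is a directed vertex cover requires only a linear pass over $w$ and $E$. You simply spell out the routine algorithmic details that the paper leaves unstated.
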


\subsection{The dominating network} \label{subsection:convergence_D}

The \Define{dominating network} on $G$, denoted by $\Dominating(G)$, is defined by
\[
    \Dominating(x)_v = x_v \join \bigmeet_{u \to v} \neg x_u,
\]
with $\Dominating(x)_v = 1$ if $\InNeighbourhood(v) = \emptyset$. We then have $\Fix( \Dominating( G ) ) = \DominatingSets( G )$. Moreover, once again, any permutation of $V$ fixes the dominating network. 

We first settle the reachability problem for the dominating network. We use the simple fact that if $x \mapsto_\Dominating y$, then $y \ge x$. We  characterise the configurations $y$ that are reachable from a given configuration $x$. For any configuration $x$, let $A(x) = \zero( x ) \cap \OutNeighbourhood( \one( x ) )$ and $B(x) = \zero( x ) \setminus \OutNeighbourhood( \one( x ) )$.

\begin{proposition}[Reachability for the dominating network] \label{proposition:reachability_D}
Let $x,y \in \{0,1\}^V$ be two configurations. The following are equivalent:
\begin{enumerate}
    \item 
    $x \mapsto_\Dominating y$;

    \item 
    $x \Geodesic_\Dominating y$;

    \item 
    $y \ge x$, $y_{A( x )} = 0$, and $G[ \one( y_{B( x )} ) ]$ is acyclic.
\end{enumerate}
% Then $x \mapsto_\Dominating y$ if and only if $y \ge x$, $y_{A( x )} = 0$, and $G[ \one( y_{B( x )} ) ]$ is acyclic, in which case there is a $\Dominating$-geodesic from $x$ to $y$.
\end{proposition}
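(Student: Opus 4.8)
The plan is to follow the same cycle of implications as in Proposition \ref{proposition:reachability_I}, exploiting the dual monotonicity of the dominating network: since $\Dominating(z)_v = z_v \join \bigmeet_{u \to v} \neg z_u \ge z_v$, no update ever decreases a coordinate, so $x \mapsto_\Dominating y$ forces $y \ge x$, and once a vertex is set to $1$ it remains $1$. The implication from geodesic-reachability to plain reachability is immediate, so the real work is to prove necessity of the three stated conditions and then to construct an explicit geodesic realising them.

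For necessity, assume $y = \Dominating^w(x)$. Monotonicity gives $y \ge x$. To see $y_{A(x)} = 0$, note that any $v \in A(x)$ has an in-neighbour $u \in \one(x)$; by monotonicity $u$ stays at $1$ throughout, so every update of $v$ evaluates to $y^{a-1}_v \join 0 = y^{a-1}_v$, and since $x_v = 0$ the vertex $v$ never leaves $0$. For acyclicity of $G[ \one( y_{B( x )} ) ]$, write $Z := \one(y) \cap B(x)$; each $v \in Z$ is switched from $0$ to $1$ at a unique (sequential) update time. If $G[Z]$ contained a cycle $v_1 \to \dots \to v_k \to v_1$, take the vertex $v_j$ switched last; its cyclic in-neighbour $v_{j-1}$ (indices mod $k$, and distinct from $v_j$ as $k \ge 2$) was switched strictly earlier, hence equals $1$ immediately before $v_j$ is updated. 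This makes $\bigmeet_{u \to v_j} \neg y^{a-1}_u = 0$ at that step, so $v_j$ cannot become $1$, a contradiction.

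For sufficiency I would first pin down the symmetric difference. Because $y \ge x$ and $y_{A(x)} = 0$, we have $\Delta(x, y) = \one(y) \cap \zero(x) = \one(y) \cap B(x) = Z$. The key observation is that every in-neighbour $u$ of a vertex of $B(x)$ satisfies $x_u = 0$ — this is exactly what $v \notin \OutNeighbourhood(\one(x))$ means — so the only in-neighbours of a $Z$-vertex that can ever carry a $1$ are other vertices of $Z$. Since $G[Z]$ is acyclic it has a topological order; let $w$ enumerate $Z$ in the \emph{reverse} of this order, so that whenever $u \to v$ in $G[Z]$ the vertex $v$ is updated before $u$. When $v \in Z$ is updated, all of its in-neighbours are $0$: those lying in $Z$ have not yet been processed, and those outside $Z$ are never updated and started at $0$. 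Hence each such update sets $v$ to $1$, no other coordinate is touched, and $\Dominating^w(x) = y$ with $[w] = \Delta(x,y)$ visited exactly once each, so $x \Geodesic_\Dominating y$.

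I expect the only genuinely delicate step to be the cycle argument in the necessity direction, since it is precisely what forces the acyclicity condition; once that is isolated, the sufficiency construction is routine. In particular, reverse topological order automatically respects the ``all in-neighbours still $0$'' requirement, so — unlike the proof of Proposition \ref{proposition:reachability_I} — no spanning out-forest machinery is required here.
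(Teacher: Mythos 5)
Your proposal is correct and follows essentially the same route as the paper's proof: monotonicity of $\Dominating$ gives $y \ge x$ and $y_{A(x)} = 0$, the last vertex of a putative cycle in $G[\one(y_{B(x)})]$ to reach state $1$ yields the contradiction for acyclicity, and a reverse topological traversal of that acyclic subgraph realises the geodesic (the paper, like you, needs no spanning out-forest here). If anything, your choice to order the cycle vertices by the time they are \emph{switched} to $1$, rather than by their last update as the paper phrases it, is the cleaner reading, since with repeated updates a vertex's last update need not be the one that sets it to $1$.
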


\begin{proof}
% Claim 1: If $x \mapsto_\Dominating y$, then $y \ge x$. Immediately follows from $\Dominating( x )_i \ge x_i$.
We first prove that if $x \mapsto_\Dominating y$, then $y_{A( x )} = 0$. 
We prove that $y^a_{A( x )} = 0$ by induction on $a$. This is clear for $a = 0$, so assume it holds for $a$. If $w_{a+1} \notin A(x)$, then we're done. Otherwise, since $y^a \ge x$, we have $y^a_{ \one( x ) } = 1$, therefore $\Dominating( y^a )_{w_{a+1}} = 0$ and hence $y^{a+1}_{A(x)} = 0$.

We now prove that if $x \mapsto_\Dominating y$, then $G[ \one( y_{B( x )} ) ]$ is acyclic. 
Suppose that $G[ \one( y_{B( x )} ) ]$ has a cycle $v_1, \dots, v_k$ and without loss, suppose that $v_k$ is the last updated. Let $z$ be the configuration before that last update, then $z_{v_{k-1}} = 1$ hence $y_{v_k} = \Dominating( z )_{v_k} = 0$, which is the desired contradiction.

We finally prove that if $y \ge x$, $y_{A( x )} = 0$, and $H = G[ \one( y_{B( x )} ) ]$ is acyclic, then there is a geodesic from $x$ to $y$. 
Let $w$ be a traversal of $H$ in reverse topological order; it is easily shown by induction on $a$ that $\InNeighbourhood( w_{a+1} ; G ) \subseteq \zero( y^a )$ and hence $\Dominating( y^a )_{ w_{a+1} } = 1$, as required.
\end{proof}

We now determine the words that fix the dominating network.

We denote the set of closed twins of $v$ by $\Twins{v}$ and we note that $\Twins{v} \subseteq \InNeighbourhood( v )$. Note that $V$ can be efficiently partitioned into these equivalence classes. 

\begin{proposition}[Words fixing the dominating network] \label{proposition:fixing_words_D}
Let $G$ be a digraph. Then $w$ fixes the dominating set network $\Dominating(G)$ if and only if $[w] \cap \Twins{m} \ne \emptyset$ for all $m \in \Benjamins(G)$.
\end{proposition}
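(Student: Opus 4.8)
The plan is to exploit two structural features of the dominating network. First, since $\Dominating( x )_v = x_v \join \bigmeet_{u \to v} \neg x_u$, updating a vertex can only change its state from $0$ to $1$, so the dynamics is monotone increasing and $y := \Dominating^w( x ) \ge y^a \ge x$ for all $a$. Second, I would establish the key lemma that \emph{every updated vertex ends up satisfied}: if $u \in [w]$ and $y_u = 0$, then inspecting the last update of $u$ (say $w_{a+1} = u$) forces $\Dominating( y^a )_u = 0$, hence some in-neighbour $u' \to u$ has $y^a_{u'} = 1$, whence $y_{u'} = 1$ by monotonicity. Consequently $y$ fails to be a dominating set precisely when some vertex $v$ has $y_{\InNeighbourhood[ v ]} = 0$, and by the lemma any such $v$ must itself lie outside $[w]$. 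This reframing of ``$w$ fixes $\Dominating$'' as ``no initial $x$ produces a vertex $v$ with $y_{\InNeighbourhood[ v ]} = 0$'' is what I would then match against the benjamin condition.

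For the \emph{sufficiency} direction (the condition implies $w$ fixes) I argue by contraposition. Suppose some $x$ yields a vertex $v$ with $y_{\InNeighbourhood[ v ]} = 0$. Descending through the poset of closed in-neighbourhoods, I pick a benjamin $m$ with $\InNeighbourhood[ m ] \subseteq \InNeighbourhood[ v ]$: take an inclusion-minimal $\InNeighbourhood[ u ]$ among those contained in $\InNeighbourhood[ v ]$, and minimality makes $m := u$ a benjamin. Then $y_{\InNeighbourhood[ m ]} = 0$. For any $u \in \Twins{ m }$ (including $u = m$) we have $\InNeighbourhood( u ) \subseteq \InNeighbourhood[ m ] \subseteq \InNeighbourhood[ v ]$, so every in-neighbour of $u$ ends at $0$ while $y_u = 0$ too; by the key lemma $u$ cannot lie in $[w]$. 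Hence $[w] \cap \Twins{ m } = \emptyset$, contradicting the hypothesis.

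For the \emph{necessity} direction I take a benjamin $m$ with $[w] \cap \Twins{ m } = \emptyset$ and build a witness configuration $x$ defined by $x_{\InNeighbourhood[ m ]} = 0$ and $x_u = 1$ for all $u \notin \InNeighbourhood[ m ]$. The goal is to show $y_{\InNeighbourhood[ m ]} = 0$, so that $m$ is undominated and $y \notin \DominatingSets( G )$. I would prove $y^a_{\InNeighbourhood[ m ]} = 0$ by induction on $a$; the only nontrivial step is updating some $u = w_{a+1} \in \InNeighbourhood[ m ] \cap [w]$. Since $m \in \Twins{ m }$ is excluded from $[w]$, such $u$ satisfies $u \ne m$, so $u \to m$; moreover $u$ is not a closed twin of $m$, for otherwise $u \in \Twins{ m }$. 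Then $\InNeighbourhood[ u ] \not\subseteq \InNeighbourhood[ m ]$: if we had $\InNeighbourhood( u ) \subseteq \InNeighbourhood[ m ]$ then $\InNeighbourhood[ u ] \subseteq \InNeighbourhood[ m ]$, and benjamin-minimality of $m$ would force $\InNeighbourhood[ u ] = \InNeighbourhood[ m ]$, i.e. $u \in \Twins{ m }$. Hence $u$ has an in-neighbour outside $\InNeighbourhood[ m ]$, which is pinned at $1$ throughout by monotonicity, so $\Dominating( y^a )_u = 0$ and $u$ stays $0$.

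The main obstacle is precisely this last implication in the necessity direction: it is the single point where the benjamin / closed-twin hypothesis is indispensable, since it guarantees that every \emph{updated} in-neighbour of $m$ is held down by an external $1$, while the excluded twins and $m$ itself are simply never touched. Everything else — monotonicity, the ``satisfied'' lemma, the descent to a benjamin, and the bookkeeping of the induction — is routine, and the fact that any permutation fixes $\Dominating$ drops out because $[w] = V$ meets every $\Twins{ m }$ (recall $m \in \Twins{ m }$).
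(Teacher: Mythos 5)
Your proposal is correct and follows essentially the same route as the paper's proof: the necessity witness is the paper's construction (your $x$ is a completion of the paper's partial configuration $x_{\Twins{m}, A, B} = (0,0,1)$, using the same benjamin argument that every non-twin in-neighbour of $m$ has an in-neighbour outside $\InNeighbourhood[m]$ pinned at $1$), and your sufficiency direction performs the same descent to a benjamin followed by the same monotonicity argument. The only cosmetic difference is that the paper examines the \emph{first} update of a twin $m' \in [w] \cap \Twins{m}$ (forcing $y_{m'} = 1$), whereas your ``every updated vertex ends up satisfied'' lemma examines the \emph{last} update — two views of the identical contradiction.
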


\begin{proof}
Suppose that there exists $m \in \Benjamins(G)$ such that $[w] \cap \Twins{m} = \emptyset$, then we exhibit a configuration $x$ such that $y_{ \InNeighbourhood[m] } = 0$. 
% First, if $\Twins{m} = \InNeighbourhood[m]$, then as soon as $x_{ \Twins{m} } = 0$, we obtain $y_{ \Twins{m} } = 0$. Second, if $\Twins{m} \ne \InNeighbourhood[ m ]$, then 
Let $A = \InNeighbourhood[ m ] \setminus \Twins{m}$ and $B = \InNeighbourhood( A ) \setminus \InNeighbourhood[ m ]$. For any $a \in A$, there exists $b \in \InNeighbourhood( a ) \setminus \InNeighbourhood[ m ] \subseteq B$, since otherwise we would have $\InNeighbourhood[ a ] \subseteq \InNeighbourhood[ m ]$ and hence $a \in \Twins{m}$. Let $x_{ \Twins{m}, A, B } = (0, 0, 1)$, then $y_{ \Twins{m} } = x_{ \Twins{m} } = 0$, $y_B \ge x_B = 1$ and hence $y_A = 0$, thus $y_{\InNeighbourhood[ m ]} = 0$.

Suppose that for all $m \in \Benjamins(G)$, we have $[w] \cap \Twins{m} \ne \emptyset$, and suppose that $y_{ \InNeighbourhood[ v ] } = 0$ for some $v$. For all $u$ with $\InNeighbourhood[ u ] \subseteq \InNeighbourhood[ v ]$, we also have $y_{ \InNeighbourhood[ u ] } = 0$, therefore, there exists $m \in \Benjamins(G)$, $\InNeighbourhood[ m ] \subseteq \InNeighbourhood[ v ]$ such that $y_{ \InNeighbourhood[ m ] } = 0$. Since $y \ge x$, we also have $x_{ \InNeighbourhood[ m ]} = 0$. Let $m' \in [w] \cap \Twins{m} \subseteq \InNeighbourhood[ m ]$ and let $w_{a+1}$ be the first update of $m'$, then $y^a_{\InNeighbourhood[ m' ]} = y^a_{ \InNeighbourhood[ m ] } = 0$, thus $\Dominating( y^a )_{ w_{a+1} } = 1 = y_{m'}$ and $y_{ \InNeighbourhood[ m ] } \ne 0$, which is the desired contradiction.
\end{proof}

\begin{decisionproblem}
  \problemtitle{$\Dominating$ fixing word}
  \probleminput{A digraph $G = (V,E)$ and a word $w$.}
  \problemquestion{Does $w$ fix $\Dominating(G)$?}
\end{decisionproblem}

\begin{corollary} \label{corollary:complexity_fixing_word_D}
\DecisionProblem{$\Dominating$ fixing word} is in \P.
\end{corollary}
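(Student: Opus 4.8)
The plan is to read off the decision procedure directly from the characterisation in Proposition \ref{proposition:fixing_words_D}: the word $w$ fixes $\Dominating(G)$ if and only if $[w] \cap \Twins{m} \ne \emptyset$ for every benjamin $m \in \Benjamins(G)$. Since this condition is a finite conjunction of easily checkable membership tests, all that remains is to confirm that each ingredient can be produced in polynomial time in $|V| + |w|$.

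First I would compute the set $[w]$ of vertices visited by $w$, which costs time linear in the length of $w$. Next I would compute the closed in-neighbourhood $\InNeighbourhood[v]$ of every vertex $v$; from these I can recover both the closed twin classes $\Twins{v}$ (by testing the equality $\InNeighbourhood[u] = \InNeighbourhood[v]$ over all pairs $u,v$) and the benjamin set $\Benjamins(G)$ (a vertex $m$ is a benjamin precisely when no vertex $v$ satisfies $\InNeighbourhood[v] \subsetneq \InNeighbourhood[m]$). Each such comparison ranges over $O(|V|^2)$ pairs and takes $O(|V|)$ time, so the whole preprocessing is polynomial, as already remarked in the paragraph preceding Proposition \ref{proposition:fixing_words_D}.

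Finally, for each benjamin $m$ I would test whether $[w]$ meets its twin class $\Twins{m}$; by Proposition \ref{proposition:fixing_words_D}, $w$ fixes $\Dominating(G)$ if and only if all these tests succeed. As there are at most $|V|$ benjamins and each intersection test is polynomial, the entire procedure runs in polynomial time, placing \DecisionProblem{$\Dominating$ fixing word} in \P.

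There is no genuine obstacle here: the substantive work has already been carried out in Proposition \ref{proposition:fixing_words_D}, and the corollary merely observes that its criterion is local, finitary, and directly verifiable. The only point requiring a moment's care is the polynomial computability of the benjamins and the twin classes, but this follows immediately from comparing closed in-neighbourhoods pairwise.
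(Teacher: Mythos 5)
Your proposal is correct and follows exactly the route the paper intends: the corollary is an immediate consequence of Proposition \ref{proposition:fixing_words_D}, combined with the observation (made in the paper just before that proposition) that the twin classes, and hence the benjamins, can be computed in polynomial time by comparing closed in-neighbourhoods. Your explicit accounting of the pairwise $\InNeighbourhood[\cdot]$ comparisons and the intersection tests fills in the routine details the paper leaves implicit, with no divergence in approach.
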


\section{Conclusion} \label{section:conclusion}

\paragraph{Summary of results}

In this paper, we have considered the generalisation of the MIS algorithm to allow for any initial configuration and to use update words that are not necessarily permutations. We have defined many decision problems with respect to this generalisation, such as: 
\begin{itemize}
    \item Given $G$ and a configuration $x$, can $x$ reach all MIS?

    \item Given $G$ and a word $w$, does $w$ fix $\MIS(G)$?

    \item Given $G$ and a set of vertices $S$, can we fix $\MIS(G)$ by only updating $S$?

    \item Given $G$, is there a word fixing $\MIS(G)$ that skips a vertex?

    \item Given $G$ and a permutation $w$, does $w$ fix $\MIS(G)$ (i.e. is $w$ a permis of $G$)?

    \item Given $G$, does $G$ have a permis?
\end{itemize} 
Even though every graph has a fixing word that guarantees terminating at a MIS regardless of the initial configuration, all the decision problems about the MIS algorithm in this paper are computationally hard. Additionally, we exhibit broad classes of graphs with and without permises, and relate these to existing graph classes. We introduce the class of near-comparability (a strict superclass of comparability graphs, which themselves encompass interval graphs and bipartite graphs, among others) and show that all near-comparability graphs are permissible. 

We further extended the MIS algorithm to digraphs; in this case, deciding whether the kernel network has a fixing word is computationally hard once again. Lastly, we consider the independent network and the dominating network, which are both related to the kernel network, and show that the analogous problems for these networks are tractable.

\paragraph{Future work} This paper can be extended in several ways. We give three potential avenues below.
\begin{enumerate}
    \item Graph classes. \\
    Since our problems are \NP- or \coNP-hard for the class of all graphs, it is natural to examine the complexity of those problems when we restrict ourselves to particular graph classes. The main tool for reductions is the \DecisionProblem{Constituency} decision problem. However, the reductions used in this paper did not preserve certain graph classes. For instance, \DecisionProblem{Constituency} remains \NP-complete even for bipartite graphs, while \DecisionProblem{Permissible} is trivial for comparability graphs.

    \item Minimum length of a fixing word. \\
    We have investigated the existence of fixing words, but not their lengths. From our results on prefixing and suffixing words, we get an upper bound on the minimum length of a fixing word: $a + b - 1$, where $a$ is the minimum size of a non-district and $b$ is the minimum size of a vertex cover. We conjecture that the problem of determining the minimum length of a fixing word is computationally hard. The analogous problem when we can only start at the all-zero configuration is obviously \NP-hard, as this amounts to determining the minimum size of a maximal independent set.

    \item Permises with bounded diameter. \\
    Let $w$ be a permutation of $V$, and for any vertex $v$, let $\delta(v)$ denote the maximum length of a path terminating at $v$ in $G^w$. Let $C_i = \{ v \in V : \delta( v ) = i \}$, then $V = C_0 \cup \dots \cup C_d$, where $d$ is the diameter of $w$. Instead of updating the vertices sequentially according to $w$, one can update all the vertices in $C_i$ at once, thus only requiring $d+1$ time steps. As such, the diameter of a permis $w$ measures the time it takes to fix the MIS network when we allow for some amount of synchronicity. If $\Delta$ is the maximum degree of $G$, then $G$ always has a permutation of diameter $\Delta$: partition $V$ into colour classes $C_0, \dots, C_\Delta$ (since the chromatic number is at most $\Delta + 1$), then $C_0 \succ C_1 \succ \dots \succ C_\Delta$. This is best possible if $G$ is complete, for instance. We therefore ask: if $G$ has a permis, then does it have a permis of diameter bounded by a function of $\Delta$?
\end{enumerate}

\section*{Acknowledgments}

We would like to thank William K. Moses Jr. and Peter Davies-Peck for their insights on self-stabilization and distributed computing.

% \bibliographystyle{plain}
% \bibliography{FDS}

\end{document}